\title{Regular Separability in B\"{u}chi VASS}
\author{Pascal Baumann}{Max Planck Institute for Software Systems (MPI-SWS), Germany}{pbaumann@mpi-sws.org}{https://orcid.org/0000-0002-9371-0807}{}
\author{Roland Meyer}{TU Braunschweig, Germany}{roland.meyer@tu-bs.de}{https://orcid.org/0000-0001-8495-671X}{}
\author{Georg Zetzsche}{Max Planck Institute for Software Systems (MPI-SWS), Germany}{georg@mpi-sws.org}{https://orcid.org/0000-0002-6421-4388}{}
\authorrunning{P. Baumann, R. Meyer, and G. Zetzsche} %
\keywords{Separability problem, Vector addition systems, Infinite words, Decidability} %
\tikzset{mode/.style={font=\scriptsize}}
\tikzset{gadget/.style={->,>=stealth,initial text=,minimum size=7pt,auto,on grid,scale=1,inner sep=1pt,node distance=1cm}}
\tikzset{every state/.style={minimum size=15pt,inner sep=1pt,fill=black!10,draw=black!70,thick}}
\newcommand{\cV}{\mathcal{V}}
\newcommand{\cT}{\mathcal{T}}
\newcommand{\cA}{\mathcal{A}}
\newcommand{\Z}{\mathbb{Z}}
\newcommand{\N}{\mathbb{N}}
\newcommand{\Q}{\mathbb{Q}} %
\newcommand{\ourbold}[1]{\bm{#1}}
\newcommand{\br}{\ourbold{r}}
\newcommand{\bmm}{\ourbold{m}}
\newcommand{\bs}{\ourbold{s}}
\newcommand{\bt}{\ourbold{t}}
\newcommand{\be}{\ourbold{e}}
\newcommand{\bx}{\ourbold{x}}
\newcommand{\by}{\ourbold{y}}
\newcommand{\bz}{\ourbold{z}}
\newcommand{\bu}{\ourbold{u}}
\newcommand{\bb}{\ourbold{b}}
\newcommand{\bd}{\ourbold{d}}
\newcommand{\bA}{\ourbold{A}}
\newcommand{\bC}{\ourbold{C}}
\newcommand{\bzero}{\ourbold{0}}
\newcommand{\prof}[1]{\Pi(#1)}
\newcommand{\limsep}[2]{#1\mathrel{|_{\mathsf{lim}}#2}}
\newcommand{\regsep}[2]{#1\mathrel{|}#2}
\newcommand{\notregsep}[2]{#1\mathrel{\not|}#2}
\newcommand{\reset}{\mathsf{r}}
\newcommand{\KM}[1]{\mathsf{KM}(#1)}
\DeclareMathOperator\prefix{prefix} %
\DeclareMathOperator\infix{infix}
\newcommand{\eff}{\Delta}
\DeclareDocumentCommand{\inteff}{o}{%
        \ensuremath{%
                \IfNoValueTF{#1}{%
			\delta%
                }{
			\delta_{#1}%
                }
        }
}
\DeclareDocumentCommand{\exteff}{o}{%
        \ensuremath{%
                \IfNoValueTF{#1}{%
			\varphi%
                }{
			\varphi%
                }
        }
}
\newcommand{\autsteps}{\xrightarrow{*}}
\newcommand{\impl}[2]{``\labelcref{#1}$\Rightarrow$\labelcref{#2}''}
\newcommand{\colorA}{red!60}
\newcommand{\colorB}{blue!60}
\newcommand{\PSPACE}{\mathsf{PSPACE}}
\newcommand{\EXPSPACE}{\mathsf{EXPSPACE}}
\newcounter{inlineenum}
\newcommand{\myitem}{\refstepcounter{inlineenum}(\roman{inlineenum})~}
\newenvironment{myenum}{\setcounter{inlineenum}{0}}{}
\newcommand{\gz}[1]{}
\newcommand{\pb}[1]{}
\newcommand{\rfm}[1]{}
\newcommand{\pump}[1]{#1_{\mathsf{pump}}}
\begin{document}

\maketitle
\begin{abstract}
We study the ($\omega$-)regular separability problem for B\"uchi VASS languages: Given two B\"uchi VASS with languages $L_1$ and $L_2$, check whether there is a regular language that fully contains $L_1$ while remaining disjoint from $L_2$.
We show that the problem is decidable in general and \textsf{PSPACE}-complete in the 1-dimensional case, assuming succinct counter updates.
The results rely on several arguments.
We characterize the set of all regular languages disjoint from $L_2$.
Based on this, we derive a (sound and complete) notion of inseparability witnesses, non-regular subsets of $L_1$. 
Finally, we show how to symbolically represent inseparability witnesses and how to check their existence.

\end{abstract}

\newpage %

\section{Introduction}
The separability problem asks, given languages $L_1$ and $L_2$, whether there
exists a language $R$ that \emph{separates} $L_1$ and $L_2$, meaning
$L_1\subseteq R$ and $R\cap L_2=\emptyset$. Here, $R$ is constrained to
be from a particular class $\mathcal{S}$ of admitted separators. Since safety verification
of systems with concurrent components is usually phrased as an intersection problem for finite-word languages, and separators
certify disjointness, deciding separability can be viewed as 
synthesizing safety certificates.
Analogously, deciding separability for infinite-word languages is a way of certifying
liveness.
If $\mathcal{S}$ is the class of ($\omega$-)regular languages, we speak of \emph{regular separability}.

Separability problems have been studied intensively over the last few years.
If the input languages are themselves regular and
$\mathcal{S}$ is a
subclass~\cite{DBLP:journals/corr/PlaceZ14,DBLP:conf/stacs/PlaceZ15,DBLP:conf/csl/PlaceZ14,DBLP:conf/fsttcs/PlaceRZ13,DBLP:conf/icalp/PlaceZ18,DBLP:conf/lics/PlaceZ19,DBLP:journals/tcs/Masopust18,DBLP:conf/icalp/CzerwinskiMM13},
then separability generalizes the classical subclass membership problem. Moreover, separability for languages of infinite-state
systems has received a significant amount of attention~\cite{CzerwinskiZetzsche2020a,CzerwinskiMartensRooijenZeitounZetzsche2017a,WSTSRegSep2018,CzerwinskiL17,ClementeCLP17a,ClementeCLP17b,CzerwinskiHofmanZetzsche2018a,DBLP:conf/concur/AbdullaADK20,Zetzsche2018a,ThinniyamZetzsche2019a,DBLP:conf/icalp/Clemente0P20,DBLP:journals/tcs/ChoffrutDV07}.
Let us point out two prominent cases.

First, one of the main open problems in this line of research is whether
regular separability is decidable for (reachability) languages of \emph{vector addition
systems with states} (VASS): A VASS consist of finitely many control states and
a set of counters that can be \emph{incremented} and \emph{decremented}, but
not tested for zero. Moreover, each transition is labeled by a word over the
input alphabet. Here, a run is accepting if it reaches a final state with all
counters being zero. While there have been several decidability results for
subclasses of the VASS
languages~\cite{CzerwinskiZetzsche2020a,WSTSRegSep2018,CzerwinskiL17,ClementeCLP17a,ClementeCLP17b},
the general case remains open. Second, a surprising result is that if $K$ and
$L$ are coverability languages of \emph{well-structured transition systems} (WSTS), then $K$
and $L$ are separable by a regular language if and only if they are
disjoint~\cite{WSTSRegSep2018}.
As VASS are one example of WSTS, this result also applies to their coverability languages.

\subparagraph{Regular separability in B\"{u}chi VASS} In this paper, we study
the regular separability problem for B\"{u}chi VASS. These are VASS that
accept languages of \emph{infinite words}. A run is accepting if it
visits some final state infinitely often. 
Since no condition is placed on the counter values, Büchi VASS languages are an infinite-word analogue of 
finite-word \emph{coverability languages}, where acceptance is defined by the reached state (not the counters).
The \emph{regular separability problem} is to decide, given B\"{u}chi VASS $\cV_1$ and $\cV_2$, whether there exists an $\omega$-regular language $R$ such that
$L(\cV_1)\subseteq R$ and $L(\cV_2)\cap R=\emptyset$.

Our main results are that (i) regular separability for B\"{u}chi VASS is decidable,
and that (ii) for one-dimensional B\"{u}chi VASS (i.e.\ those with a single counter)
the problem is $\PSPACE$-complete. 
Here, we assume that the counter updates are encoded in binary.

Given that B\"{u}chi VASS accept using final states and their transition systems are WSTS, one may suspect that there is an analogue of the aforementioned result for WSTS: Namely, that two languages of B\"{u}chi VASS are separable by an $\omega$-regular language if and only if they are disjoint. 
We show that this is not the case: There are B\"{u}chi VASS $\cV_1$ and $\cV_2$ such that $L(\cV_1)$ and $L(\cV_2)$ are disjoint, but not separable by an $\omega$-regular language.
In fact, we show an even larger disparity
between these two problems for WSTS in the infinite-word case: 
	We exhibit a natural class of WSTS for which intersection
is decidable but regular separability is not.
Thus, regular separability for B\"{u}chi VASS requires significantly new ideas and involves several phenomena that do not occur for finite-word languages of VASS.

\subparagraph{New phenomena and key ingredients} 
We first observe that we can assume one input language to be fixed, namely an infinite-word version $D_n$ of the Dyck language.
Then, following the \emph{basic separator} approach from~\cite{CzerwinskiZetzsche2020a}, we identify a small class $\mathcal{B}$ of $\omega$-regular languages such that $L$ is separable from $D_n$ if and only if $L$ is included in a finite union of sets from
$\mathcal{B}$. 
Here, a crucial insight is that a B\"uchi automaton can guarantee disjointness from $D_n$ without knowing exactly when the letter balance crosses zero.
Note that a negative letter balance is the exact condition for non-membership in $D_n$.
In contrast, in the finite word case, there are always separating automata that can tell when zero is crossed~\cite{CzerwinskiZetzsche2020a}.
This insight is also key to the example differentiating disjointness and separability in Büchi VASS, and to the undecidability proof for 
certain WSTS despite decidable disjointness.

We then develop a decomposition of B\"{u}chi VASS languages into \emph{finitely many} pieces, which are induced by what we call \emph{profiles}. 
Inspired by B\"uchi automata, the idea of a profile is to fix the set of transitions that can and have to be taken infinitely often in a run. 
Finding the right generalization to Büchi VASS, however, turned out to be non-trivial. 
Our formulation refers to edges in the Karp-Miller graph, augmented by constraints that guarantee the existence of an accepting run. 
The resulting decomposition has properties similar to the decomposition of VASS languages into run ideals~\cite{DBLP:conf/lics/LerouxS15}, which has been useful for previous separability
procedures~\cite{CzerwinskiZetzsche2020a,CzerwinskiHofmanZetzsche2018a}. 

We associate to each profile a system of linear inequalities and show that separability holds if and only if each of these systems is feasible. 
While this yields decidability, checking feasibility is not sufficient to obtain a $\PSPACE$-upper bound in the one-dimensional case. 
Instead, we use Farkas' Lemma to obtain a dual system of inequalities so that separability fails if and only if one dual system is satisfiable. 
A solution to a dual system yields a pattern in the Karp-Miller graph, called \emph{inseparability flower}, which witnesses inseparability.
Compared to prior witnesses for deciding properties of VASS languages (e.g.
regularity~\cite{DBLP:journals/jcss/Demri13}, language
boundedness~\cite{DBLP:journals/tcs/ChambartFS16}, and other
properties~\cite{DBLP:journals/ijfcs/AtigH11}), inseparability flowers are
quite unusual: they contain a non-linear condition, requiring one vector
to be a scalar multiple of another.

For one-dimensional B\"{u}chi VASS, the condition degenerates into a linear one.
This allows us to translate 
inseparability flowers into particular runs in a two-dimensional VASS subject to additional linear constraints.
Using methods from~\cite{DBLP:journals/jacm/BlondinEFGHLMT21}, this yields a
$\PSPACE$ procedure.

\subparagraph{Related work} It was already shown in 1976 that regular
separability is undecidable for context-free
languages~\cite{SzymanskiW-sicomp76,DBLP:journals/jacm/Hunt82a}.  Over the last decade, there has been
intense interest in deciding regular separability for subclasses of finite-word
VASS reachability languages: The problem is decidable for (i)~reachability languages of
one-dimensional VASS~\cite{CzerwinskiL17}, (ii)~coverability languages of
VASS~\cite{WSTSRegSep2018}, (iii)~reachability languages of Parikh automata~\cite{ClementeCLP17b}, and
(iv)~commutative reachability languages of VASS~\cite{ClementeCLP17a}. Moreover, decidability
still holds if one input language is an arbitrary VASS language and the other
is as in (i)-(iii)~\cite{CzerwinskiZetzsche2020a}. As discussed above, for
finite-word coverability languages of WSTS, regular separability is equivalent to
disjointness~\cite{WSTSRegSep2018}.  
Moreover, the aforementioned undecidability for context-free languages has been strengthened to visibly
pushdown languages~\cite{Kopczynski16}. 
To our knowledge, for
languages of infinite words, separability has only been studied for 
regular input
languages~\cite{DBLP:conf/fossacs/PierronPZ16,DBLP:conf/fsttcs/Hugenroth21}.

Our result makes use of Farkas' Lemma to demonstrate the absence of what can be understood as a linear ranking function (on letter balances). 
There are precursors to this. 
In liveness verification~\cite{PR04}, Farkas' Lemma has been used to synthesize, in a complete way, linear ranking functions proving the termination of while programs over integer variables.
In the context of separability for finite words, Farkas' Lemma was used to distinguish separable from non-separable instances~\cite{CzerwinskiZetzsche2020a}, similar to our approach.
The novelty here is the combination of Farkas' Lemma with the new notion of profiles needed to deal with infinite runs.

The languages of B\"{u}chi VASS have first been studied by
Valk~\cite{valk1983infinite} and (in the deterministic case)
Carstensen~\cite{DBLP:conf/mfcs/Carstensen88}.  Some complexity results (such
as $\EXPSPACE$-complexity of the emptiness problem) were shown by
Habermehl~\cite{Habermehl97}. More recently, there have been several papers on
the topological complexity of B\"{u}chi VASS languages (and
restrictions)~\cite{DBLP:journals/mscs/Finkel06,DBLP:books/daglib/p/DuparcFR14,DBLP:journals/fuin/FinkelS21}.
See the recent article by Finkel and
Skrzypczak~\cite{DBLP:journals/fuin/FinkelS21} for an overview.

\section{Preliminaries}
\subparagraph{Dyck Language}
We use an infinite-word version of the Dyck language over $n$ pairs of matching letters $a_i, \bar a_i$. 
We denote the underlying alphabet by $\Sigma_n := \bigcup_{i=1}^n\{a_i,\bar{a}_i\}$. 
The \emph{Dyck language} contains those infinite words where every occurrence of $\bar{a}_i$ has a matching
occurrence of $a_i$ to its left: 
$D_n := \{w \in \Sigma_n^\omega ~|~ \forall v \in \prefix(w)\colon \forall i \in [1, n]\colon \varphi_i(v) \geq 0\}$. 
Here,  $\varphi_i: \Sigma_n^* \rightarrow \Z$ is the $i$th (\emph{letter}) \emph{balance} function that computes for a given word $w$ the difference $|w|_{a_i} - |w|_{\bar{a}_i}$.
We also use $\varphi(w)$ for the vector $(\varphi_1(w),\ldots, \varphi_n(w)) \in \Z^n$.
\subparagraph{B\"uchi VASS and Automata}
A \emph{B\"uchi vector addition system with states (B\"uchi VASS)} of dimension~$d\in\mathbb{N}$ over alphabet $\Sigma$ is a tuple $\cV=(Q, q_0, T, F)$ consisting of a finite set of states $Q$, an initial state $q_0\in Q$, a set of final states $F\subseteq Q$, and a finite set of transitions $T \subseteq\ Q\times \Sigma^*\times \mathbb{Z}^d\times Q$. 
The size of the B\"uchi VASS is $|\cV|:=|Q|+1+|F|+\sum_{(q, w,\delta, q')\in T}|w|+\sum_{i=1}^d\max\{\log |\delta(i)|, 1\}$.
If $d=0$, we call $\cV$ a \emph{B\"uchi automaton}.

The semantics of the B\"uchi VASS is defined over \emph{configurations}, which are elements of $Q\times\mathbb{N}^d$. 
We call the second component in a configuration the \emph{counter valuation} and refer to the entry in dimension $i$ as the \emph{value of counter $i$}.
The \emph{initial configuration} is $(q_0, \bzero)$.  
We lift the transitions of the B\"uchi VASS to a relation over configurations
$\rightarrow\ \subseteq\ Q\times\mathbb{N}^d\times\Sigma^*\times Q\times\mathbb{N}^d$ as follows: 
$(q, \bmm)\xrightarrow{w}(q', \bmm')$ if there is $(q, w, \delta, q')\in T$ so that $\bmm'=\bmm+\delta$. 
A \emph{run} of the B\"uchi VASS is an infinite sequence of transitions of the form
$(q_0, \bzero)\xrightarrow{w_1}(q_1, \bmm_1)\xrightarrow{w_2}\cdots$
Thus, the sequence starts in the initial configuration and makes sure the target of one transition is the source of the next.
The run is \emph{accepting} if it visits final states infinitely often, meaning there are infinitely many configurations $(q, \bmm)$ with $q\in F$. 
The run is said to be \emph{labeled} by the word $w=w_0w_1\cdots$ in $\Sigma^{\omega}$. 
The \emph{language} $L(\cV)$ of the B\"uchi VASS consists of all infinite
words that label an accepting run.
Note that we can always ensure that every accepting run has an infinite-word
label, by tracking in the state whether a non-$\varepsilon$-transition
has occurred since the last visit to a final state.
An infinite-word language is ($\omega$-)\emph{regular}, if it is the language of a B\"uchi automaton. 
As we only consider infinite-word languages, we just call them languages. 

\subparagraph{Karp-Miller Graphs}
We work with the Karp-Miller graph $\KM{\cV}$ associated with a B\"uchi VASS $\cV$~\cite{KarpMiller1969}.
Since we are interested in infinite runs, we define the Karp-Miller graph as a B\"uchi automaton.
Its state set is a finite set of \emph{extended configurations}, which are elements of $Q\times (\mathbb{N}\cup\{\omega\})^d$.
The initial state is the initial configuration in the B\"uchi VASS. 
The final states are those extended configurations $(q, \bmm)$ with $q\in F$. 
The transitions are labeled by $T$, so instead of letters they carry full B\"uchi VASS transitions. 
An entry $\omega$ in an extended configuration denotes the fact that a prefix of a run can be repeated to produce arbitrarily high counter values. 
More precisely, the Karp-Miller graph is constructed as follows.
From an extended configuration $(q, \bmm)$ we have a transition labeled by $(q_1, a, \delta, q_2)$, if $q=q_1$ and $\bmm+\delta$ remains non-negative. 
The latter addition is defined componentwise and assumes $\omega+k:=\omega=:k+\omega$ for all $k\in\Z$.
The result of taking the transition is the extended configuration $(q_2, \bmm_2)$, where $\bmm_2$ is constructed from $\bmm+\delta$ as follows.
We raise to $\omega$ all counters $i$ for which there is an earlier configuration $(q_2, \bmm_1)$ with %
$\bmm_1\leq \bmm+\delta$ and $\bmm_1(i)<[\bmm+\delta](i)$, earlier meaning on some path from $(q_0, \bzero)$ to $(q, \bmm)$. 
If this is the case, the path from $(q_2, \bmm_1)$ to $(q_2, \bmm+\delta)$ can be repeated indefinitely to produce arbitrarily high values for counter $i$.
We refer to the repetition of such a path in a run as \emph{pumping}. 

The Karp-Miller graph over-approximates the language of the B\"uchi VASS in the following sense.
Every infinite sequence of transitions that leads to a run of the B\"uchi VASS is the labeling of an infinite run in the Karp-Miller graph.
Moreover, if the run of the B\"uchi VASS is accepting, so is the run in the Karp-Miller graph.
In the other direction, every finite transition sequence in the Karp-Miller graph represents a transition sequence in the B\"uchi VASS. 
The sequence in the B\"uchi VASS, however, may be longer to compensate negative effects on $\omega$-entries by pumping.

\section{Problem, Main Result, and Proof Outline}\label{Section:Outline}
\newcommand{\sep}{\mathop{|}}
A language $R$ is a \emph{regular separator} for a pair of languages $L_1, L_2$, if $R$ is regular, $L_1\subseteq R$, and $R\cap L_2=\emptyset$. 
We write $L_1\sep L_2$ for the fact that a regular separator exists.
We consider here languages of B\"uchi VASS, and formulate the \emph{regular separability problem} as follows. 
Given B\"uchi VASS $\cV_1$, $\cV_2$, check whether $L(\cV_1)\sep L(\cV_2)$ holds.
Our main result is the following.
\begin{theorem}\label{Theorem:MainResult}
The regular separability problem for B\"uchi VASS is decidable.
\end{theorem}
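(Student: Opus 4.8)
The plan is to reduce regular separability to a question about a single fixed ``target'' language and then characterize separability from it by the feasibility of finitely many linear systems. First, I would show that it suffices to decide $L(\cV)\sep D_n$ for a single input B\"uchi VASS $\cV$, where $D_n$ is the infinite-word Dyck language over $\Sigma_n$. This uses the standard presentation of a VASS language as the image under a morphism of an intersection of a regular language with a Dyck-like language: the counters of $\cV_2$ are simulated by matching-letter pairs, so $L(\cV_1)\sep L(\cV_2)$ reduces to separating a B\"uchi VASS language built from $\cV_1$ and the finite-state part of $\cV_2$ from $D_n$, at the cost of raising the dimension by the number of counters of $\cV_2$.

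Second, following the basic-separator approach of~\cite{CzerwinskiZetzsche2020a}, I would identify a finite family $\mathcal{B}$ of $\omega$-regular languages, each disjoint from $D_n$, such that for every language $L$ we have $L\sep D_n$ if and only if $L$ is contained in a finite union of members of $\mathcal{B}$. Concretely, one characterizes \emph{all} $\omega$-regular languages disjoint from $D_n$ and extracts from this description the ``maximal'' ones, which form $\mathcal{B}$. The crucial point, as highlighted in the introduction, is that a B\"uchi automaton can certify disjointness from $D_n$ without knowing precisely when some balance $\varphi_i$ first becomes negative: it only has to force the balance to stay negative in some coordinate from some point on, which it can do by guessing. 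Dually, this yields a sound and complete notion of \emph{inseparability witness}: a non-regular subset of $L(\cV)$ that escapes every finite union of members of $\mathcal{B}$.

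Third, I would decompose $L(\cV)$ into finitely many pieces indexed by \emph{profiles}. A profile fixes, in the Karp--Miller graph $\KM{\cV}$, the set of edges that an accepting run may and must take infinitely often, together with side constraints (on the $\omega$-coordinates and on which pumpings are performed) that guarantee such a run actually exists in $\cV$. Since $\KM{\cV}$ is finite there are only finitely many profiles, and the union of the pieces they induce is $L(\cV)$. To each profile $\Pi$ I would associate a system of linear inequalities over variables describing the letter balances contributed by the transient part and by the repeatedly-taken loops of runs following $\Pi$; feasibility of this system expresses exactly that the piece induced by $\Pi$ is contained in a member of $\mathcal{B}$. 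The core lemma is then: $L(\cV)\sep D_n$ if and only if the system of every profile is feasible. Feasibility of a finite linear system is decidable and there are finitely many profiles, so \cref{Theorem:MainResult} follows.

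The main obstacle is step three: finding the right notion of profile. One must ensure simultaneously that (i) the decomposition is finite and exactly covers $L(\cV)$, (ii) the existence of an accepting run realizing a profile is captured by effectively checkable linear constraints, despite the interplay of $\omega$-coordinates, pumping, and the B\"uchi acceptance condition, and (iii) containment of a profile's piece in $\mathcal{B}$ is equivalent to feasibility of the associated linear system. Proving the ``hard'' direction of the core lemma --- that if every profile's system is feasible then the individual pieces can be covered by finitely many basic separators and recombined into a single regular separator for all of $L(\cV)$ --- while respecting the ``the automaton need not detect the zero-crossing'' phenomenon, is the technical heart of the argument.
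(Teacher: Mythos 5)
Your outline follows the same route as the paper (reduce to separability from $D_n$, basic separators, a profile decomposition over the Karp--Miller graph, one linear system per profile, decide by feasibility), but it leaves exactly the step that carries the proof unargued, and it misstates one point that matters. First, the family of basic separators is not finite: it consists of the languages $P_{i,k}$ and $S_{\bx,k}$, parameterized by an unbounded weight vector $\bx\in\N^n$ and an unbounded bound $k\in\N$, and the central difficulty is precisely that there is no a priori bound on how many vectors $\bx$ a separator might need. (If a finite family $\mathcal{B}$ sufficed, decidability would be immediate, since inclusion of a B\"uchi VASS language in a fixed regular language is decidable.) Second, and more seriously, your ``core lemma'' --- $\regsep{L(\cV)}{D_n}$ iff every profile's system is feasible --- is asserted rather than proved, and you flag the wrong direction as the hard one. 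The direction ``all systems feasible $\Rightarrow$ separable'' is comparatively easy: a solution $\bx$ of $\bA_\pi\bx\le\bb$ immediately yields $L_\pi(\cV)\subseteq S_{\bx,k}$ for a computable $k$, and the finitely many profiles give a finite (hence regular) union disjoint from $D_n$. The genuinely hard direction is ``some system infeasible $\Rightarrow$ not separable'': the paper applies Farkas' Lemma to obtain a dual solution, interprets it as a multiset of primitive $\pi$-cycles, assembles these into an \emph{inseparability flower} $\alpha,\beta,\gamma$ in $\KM{\cV}$ with $\inteff(\alpha\beta\gamma)\ge\bzero$, $\exteff(\alpha\beta)\ge\bzero$, and $\exteff(\alpha\beta\gamma)\in\Q\cdot\exteff(\alpha)$, and shows that the runs $\rho(\alpha^{k+1}\beta^{k+1}\gamma^{k+1})^\omega$ escape every $S_{\bx,k}$ simultaneously for all $\bx$. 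Your proposal contains no mechanism of this kind, so the correctness of the feasibility test --- the heart of the decidability argument --- is not established.

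Closely tied to this is the omission of the pumpability step (the construction of $\pump{\cV}$ from the Karp--Miller graph of $\bar{\cV}$). In the paper it serves two indispensable purposes: it lets one discard the separators $P_{i,k}$ altogether, so that only the $S_{\bx,k}$ have to be matched against profiles (Lemma~\ref{prefix-independent-limsep}), and it lifts the flower-induced runs, which a priori live only in $\KM{\cV}$, back to genuine accepting runs of $\cV$ by exchanging finite prefixes. You gesture at folding such concerns into ``side constraints'' of the profile definition, but without an explicit device of this kind the witnesses produced in the Karp--Miller graph need not correspond to words of $L(\cV)$, and the completeness direction fails again. So the skeleton matches the paper, but the Farkas duality, the flower witnesses, and the pumpability construction --- i.e.\ the actual proof content of the equivalence you rely on --- are missing.
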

It should be noted that our procedure is non-primitive recursive, as it explicitly constructs the Karp-Miller graph of an input B\"{u}chi VASS, which can be of Ackermannian size~\cite[Theorem~2]{mayr1981complexity}.
In the case of VASS coverability languages (and even for more general WSTS), it is known that regular separability is equivalent to disjointness~\cite{WSTSRegSep2018}. 
Thus, for finite words, separability reduces to the much better understood problem of disjointness. 
For the infinite-word languages considered here, the situation is different. 
\begin{theorem}\label{Theorem:CounterExamples}
There are B\"uchi VASS languages $L_1$, $L_2$ with $L_1\cap L_2= \emptyset$ and $L_1\not\sep L_2$. 
There are classes of WSTS where intersection is decidable but separability is not.
\end{theorem}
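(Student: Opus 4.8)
The plan is to build two concrete Büchi VASS languages witnessing the first claim, exploiting the key insight flagged in the introduction: a Büchi automaton cannot, in general, certify disjointness from $D_n$ while pinpointing when a letter balance drops below zero. I would take $L_2 = D_1$ (the one-pair infinite Dyck language, which is easily seen to be a Büchi VASS language: one counter tracking $\varphi_1$, a single final state, and a guard ensuring non-negativity — i.e.\ membership in $D_1$ is exactly what a $1$-dimensional Büchi VASS with counter $=\varphi_1$ and the ``stay non-negative'' constraint recognizes). For $L_1$ I would take a Büchi VASS language consisting of words $w\in\Sigma_1^\omega$ whose balance $\varphi_1$ eventually stays strictly negative but which the VASS ``commits to'' only asymptotically — for instance words of the form $a_1^{k}\bar a_1^{k+1}$ repeated, or more robustly words where after some prefix the balance is always $\le -1$ but the VASS's counter does not directly encode the balance. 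The point is to choose $L_1$ so that $L_1 \cap D_1 = \emptyset$ (every word in $L_1$ has a prefix of negative $\varphi_1$, hence is not in $D_1$), yet every Büchi automaton containing $L_1$ must also accept some word in $D_1$, because it cannot detect the ``crossing point'' and is forced to accept a limit word that never crosses. Disjointness is a direct syntactic check on the chosen $L_1$; the inseparability is the substantive part.

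For the inseparability, I would argue by contradiction: suppose a Büchi automaton $\cA$ separates $L_1$ from $D_1$. Using the structure of $L_1$ (a sequence of finite ``blocks'' where the balance can be made to cross zero arbitrarily late), I would pump the runs of $\cA$ on longer and longer prefixes that still lie in $D_1$ but are arbitrary prefixes of words in $L_1$. By a pigeonhole/Ramsey-type argument on $\cA$'s states, two such prefixes reach the same state via a loop that has non-negative balance effect, and splicing this loop infinitely yields an accepting run of $\cA$ on a word that stays non-negative throughout — i.e.\ a word in $D_1 = L_2$ — contradicting $R \cap L_2 = \emptyset$. This is essentially the same mechanism as in \cite{CzerwinskiZetzsche2020a} for finite words, but the infinite-word setting is what makes it bite: here the separator genuinely cannot postpone the decision forever, whereas for finite words it always can. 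I expect the careful bookkeeping of which prefixes are in $D_1$ while being extendable into $L_1$ — and ensuring the pumped loop has exactly non-negative (not merely eventually non-negative) balance — to be the main technical obstacle.

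For the second claim, I would exhibit a class of WSTS over infinite words with decidable intersection (emptiness of the product) but undecidable regular separability. The natural candidate, hinted at in the introduction, is a class of \emph{weak Büchi reset VASS}: VASS augmented with counter resets, under a weak Büchi acceptance condition that tames the reachability/intersection problem (resets typically make coverability-style questions tractable, or at least the product emptiness decidable via a well-quasi-order argument), while the separability question encodes the halting problem. I would reduce from the undecidability of some standard problem — e.g.\ reachability or a Post correspondence variant — by a construction in which a machine $M$ halts iff the language pair is \emph{not} regularly separable, again leveraging that an $\omega$-regular separator would have to certify a balance staying non-negative and thereby simulate a zero-test that the reset mechanism otherwise avoids. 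The main obstacle here is designing the reset-VASS class narrowly enough that intersection stays decidable (the WSTS/wqo machinery must apply to the product) yet expressively enough that encoding the separator's impossibility captures an undecidable problem; getting both sides of this trade-off simultaneously is delicate, and I would expect the construction to mirror, at the level of syntax, the balance-crossing phenomenon used in the first part.
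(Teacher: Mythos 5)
Your overall strategy matches the paper's (fix $L_2=D_1$, argue a hypothetical separating B\"uchi automaton can be pumped into accepting a word of $D_1$; for the second claim use weak B\"uchi reset VASS), but there are two genuine gaps in the first part. First, you never actually construct $L_1$, and the construction is the crux: you need a B\"uchi VASS that accepts \emph{only} words whose balance crosses zero, even though a VASS cannot test zero. Your candidate $\{(a_1^k\bar a_1^{k+1})^\omega \mid k\in\N\}$ is itself most likely not a B\"uchi VASS language (enforcing the exact block lengths amounts to a zero test), and for any workable superset, disjointness from $D_1$ is \emph{not} ``a direct syntactic check''. The paper's $\cV$ uses an inverted counter (decrement on $a_1$, increment on $\bar a_1$ in the middle state), and disjointness requires a real argument: between consecutive visits to the final state either the counter strictly decreases or the block's balance is negative, and since counter values at infinitely many final-state visits cannot strictly decrease forever, almost every block has negative balance, so the balance eventually goes negative. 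Without some such mechanism, a proposed $L_1$ could silently contain words of $D_1$ and the whole example collapses.

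Second, your inseparability argument has a step that fails as stated: finding, among runs on long prefixes, a state-repeating loop with non-negative balance and then ``splicing this loop infinitely'' does not yield an accepting run of the B\"uchi automaton $\cA$, because nothing forces that loop to contain a final state. Nor can you pump the loop only finitely often and keep the original accepting tail, since the tail of any word of $L_1$ drives the balance back below zero. The correct argument (the paper's) starts from an accepting run of $\cA$ on an infinite word of $L_1$, e.g.\ $(a_1^n\bar a_1^{n+1})^\omega$ where $n=|Q_\cA|$, and inserts a cycle over a non-empty infix of $a_1^n$ into \emph{infinitely many} blocks; this preserves the B\"uchi condition (final states are untouched) and keeps every prefix balance non-negative, producing an accepted word of $D_1$. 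For the second claim you name the right class, but a plan of ``reduce from reachability or PCP'' is not a proof: the paper reduces from place boundedness of reset VASS, building a weak B\"uchi reset VASS whose language is $\bigcup_k S_{1,k}$ over achievable bounds $k$, so that separability from $D_1$ holds iff the reset VASS is place bounded; and decidability of intersection needs its own argument (computing the minimal configurations admitting accepting runs of the reset-free part, then deciding coverability in reset VASS). As it stands, both halves of your proposal stop where the actual technical work begins.
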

For the second statement, we introduce the class of \emph{weak B\"{u}chi reset VASS},
which are VASS with reset instructions, with the additional constraint that each run can only use resets a finite number of times. Details can be found in \cref{appendix-intersection}.

\begin{figure}[t]
\begin{center}
\scalebox{0.85}{
\begin{tikzpicture}[initial text={},baseline]
\node[state,initial] (q0) {$q_0$};
\node[state,right=1cm of q0,accepting] (q1) {$q_1$};
\node[state,right=1.5cm of q1] (q2) {$q_2$};
\path[->] 
(q0) edge [loop above] node {$\be_1|\varepsilon$} (q0)
(q0) edge node[above] {$\bzero|\varepsilon$} (q1)
(q1) edge[bend left] node[above] {$\bzero|\varepsilon$} (q2)
(q2) edge[bend left] node[below] {$\bzero|\bar{a}_1$} (q1)
(q2) edge[loop above] node {$-\be_1|a_1$} (q2)
(q2) edge[loop below] node {$\be_1|\bar{a}_1$} (q2)
;
\end{tikzpicture}
\hspace{1.5cm}
\begin{tikzpicture}[initial text={},baseline]
\node[state,initial] (q0) {};
\node[state,accepting,right=1cm of q0] (q1) {};
\path[->] 
(q0) edge [loop above] node {$a_1$} (q0)
(q0) edge [loop below] node {$a_2$} (q0)
(q0) edge [above] node {$\varepsilon$} (q1)
(q1) edge [loop above] node {$\textcolor{\colorA}{a_1\bar{a}_2\bar{a}_2}$} (q1)
(q1) edge [loop below] node {$\textcolor{\colorB}{a_2\bar{a}_1\bar{a}_1}$} (q1);

\end{tikzpicture}
\pgfplotsset{compat=1.11}
\newcommand{\steps}[4]{%
	\foreach \x in {0,...,#3}
		\path[draw,->,#4] ($#1 + \x*#2$) -- ($#1 + #2 + \x*#2$);
}
\newcommand{\xmax}{5}
\newcommand{\xmin}{-5}
\newcommand{\ymax}{4}
\newcommand{\ymin}{-4}
\begin{tikzpicture}[scale=0.5, baseline]
\draw[help lines,gray!40] (\xmin,\ymin)grid(\xmax,\ymax);
\draw[thick,->] (\xmin,0)-- (\xmax,0);
\draw[thick,->] (0,\ymin)-- (0,\ymax);
\draw[-] (-3,3) -- (3,-3);
\steps{(4,3)}{(1,-2)}{0}{\colorA}
\steps{(5,1)}{(-2,1)}{1}{\colorB}
\steps{(1,3)}{(1,-2)}{2}{\colorA}
\steps{(4,-3)}{(-2,1)}{3}{\colorB}
\steps{(-4,1)}{(1,-2)}{1}{\colorA}
\draw[dotted,color=\colorA, shorten >=0.7cm] (-2,-3) -- (-1,-5);
\coordinate (x) at (1,1);
\draw[->] (0,0) -- (x);
\node[above=0cm of x] {$\bx$};

\end{tikzpicture}
}%
\end{center}
\vspace{-0.8cm}
\caption{Left: A B\"{u}chi VASS accepting a language $S$ with $S\cap D_1=\emptyset$ but $\notregsep{S}{D_1}$. Here, $\be_1\in\Z$ is the one-dim.\ vector with entry $1$.
Right: A regular language that is not included in a finite union of languages $P_{i,k}$ and $S_{i,k}$, but that is included in $S_{\bx, k}$ for $\bx=(1, 1)$, $k = 1$. 
The horizontal and vertical dimensions denote the balance for $a_1$ resp. $a_2$. \label{Figure:examples}}
\end{figure}

For the first statement of \cref{Theorem:CounterExamples}, we give an intuition
and refer to \cref{appendix-problem} for details. We choose $L_1=L(\cV)$, where
$\cV$ is the B\"{u}chi VASS in \cref{Figure:examples}(left), and $L_2=D_1$, the
Dyck language.  To show $\notregsep{L(\cV)}{D_1}$, suppose there is a B\"{u}chi
automaton $\cA$ with $n$ states such that $L(\cV)\subseteq L(\cA)$ and
$L(\cA)\cap D_1=\emptyset$.  Then $\cA$ has to accept
$(a_1^n\bar{a}_1^{n+1})^\omega \in L(\cV)$.  However, pumping yields that for
some $m > n$ the word $(a_1^m\bar{a}_1^{n+1})^\omega \in D_1$ also has to be
accepted by $\cA$, contradiction.  Moreover, to show $L(\cV) \cap D_1 =
\emptyset$ we observe that in accepting runs of $\cV$, almost every visit
(meaning: all but finitely many) to the final state drops the letter balance by
$1$.  Therefore on any accepting run this balance eventually becomes negative,
yielding a word outside of $D_1$.

In the remainder of the section, we outline the proof of Theorem~\ref{Theorem:MainResult}.  
Assume we are given $L_1=L(\cV_1)$ and $L_2=L(\cV_2)$ and this is a non-trivial instance of separability, meaning $L_1, L_2$ are not regular and $L_1\cap L_2=\emptyset$.  
For proving separability, we could enumerate regular languages until we find a separator. 
The difficult part is disproving separability. 
Inseparability of $L_1$ and $L_2$ is witnessed by a set of words $W\subseteq L_1$ so that every regular language $R$ containing them already intersects $L_2$, formally: $W\subseteq R$ implies $R\cap L_2\neq\emptyset$. 
Showing the existence of such a set $W$ is difficult for two reasons. 
First, it is unclear which sets of words ensure the universal quantification over all regular languages. 
Second, as we have a non-trivial instance of separability, $W$ (if it exists) will be a non-regular language. 
So it is unclear how to represent it in a finite way and how to check its existence. 

To address the first problem and understand the sets of words that disprove separability, we use diagonalization.
Call an \emph{($L_2$-)separator candidate} a regular language that is disjoint from~$L_2$. 
Let $R_1, R_2, \ldots$ be an enumeration of the separator candidates.
If $L_1$ is not separable from $L_2$, for every $R_i$ there is a word $w_i\in L_1$ with $w_i\notin R_i$. 
We call such a set of words $W=\{w_1, w_2, \ldots\}$ that escapes every separator candidate an \emph{inseparability witness}. 
\begin{observation}
$L_1\not\sep L_2$ if and only if there is an inseparability witness.
\end{observation}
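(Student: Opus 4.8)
The plan is to derive both directions of the equivalence directly from the definitions, the one structural ingredient being that the separator candidates form a countable (and nonempty, since the empty language is always one) family, so that they genuinely can be enumerated as $R_1, R_2, \ldots$ as in the statement. This is clear because every regular language is the language of some finite B\"uchi automaton and, over a fixed finite alphabet, there are only countably many of those; note that no effectivity of this enumeration is needed for the Observation itself (it becomes relevant only for the decision procedure).

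For the implication ``inseparability witness $\Rightarrow$ inseparability'', I would argue by contradiction: if $W$ is an inseparability witness and $\regsep{L_1}{L_2}$ held via a regular language $R$, then $R$ is disjoint from $L_2$ and hence $R = R_j$ for some index $j$ of the enumeration; but then $W \subseteq L_1 \subseteq R = R_j$ contradicts the defining property of $W$, namely that it escapes every separator candidate, i.e.\ $W \not\subseteq R_i$ for all $i$. Hence $\notregsep{L_1}{L_2}$.

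Conversely, assume $\notregsep{L_1}{L_2}$ and fix an enumeration $R_1, R_2, \ldots$ of the separator candidates. Each $R_i$ is disjoint from $L_2$ but is not a regular separator of $L_1$ from $L_2$ (nothing is), so $L_1 \not\subseteq R_i$; choose some $w_i \in L_1 \setminus R_i$. Then $W := \{w_i : i \geq 1\} \subseteq L_1$, and $w_i \in W \setminus R_i$ witnesses $W \not\subseteq R_i$ for every $i$, so $W$ is an inseparability witness. I do not anticipate a genuine obstacle here: the statement is essentially the definition of non-separability read through diagonalization, and the only point meriting a sentence of justification is the countability of the separator candidates, which is what makes the enumeration — used both in the definition of witnesses and in both directions of the argument — legitimate.
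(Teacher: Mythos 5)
Your proof is correct and follows essentially the same route as the paper, which treats this as an immediate consequence of the diagonalization setup: enumerating the separator candidates, picking $w_i\in L_1\setminus R_i$ for each, and noting conversely that any regular separator would itself be a candidate the witness fails to escape. The only addition you make is spelling out the countability of the candidate family, which the paper takes for granted.
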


Our decision procedure will check the existence of an inseparability witness. 
We obtain the procedure in four steps: the first is a simplification, the second is devoted to understanding the separator candidates, the third is another simplification, and the last characterizes the inseparability witnesses and checks their existence.
\subparagraph{Step 1: Fixing $\boldsymbol{L_2}$}
We first reduce general regular separability to regular separability from the Dyck language.
The reduction is simple and works just as for finite words~\cite{CzerwinskiZetzsche2020a}.
\begin{restatable}{lemma}{oneLanguageFixed}\label{one-language-fixed}
Given B\"uchi VASS $\cV_1$ and $\cV_2$, we can compute a B\"uchi VASS $\cV$ over $\Sigma_n$ so that $L(\cV_1)\sep L(\cV_2)$ if and only if $L(\cV)\sep D_n$, where $n$ is the dimension of $\cV_2$.
\end{restatable}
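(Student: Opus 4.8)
The plan is to realize $D_n$ as a ``universal counter'' simulating the $n$ counters of $\cV_2$. Encode a counter update $\delta\in\Z^n$ as a bracket word $u_\delta\in\Sigma_n^*$ that writes, for each $i$, first $\max(\delta(i),0)$ copies of $a_i$, then, for each $i$, $\max(-\delta(i),0)$ copies of $\bar a_i$, and finally a dummy factor $a_1\bar a_1$. Two features matter: (a) $\varphi(u_\delta)=\delta$, and the prefix balances of a concatenation $u_{\delta_1}u_{\delta_2}\cdots$ stay componentwise nonnegative \emph{iff} all partial sums $\delta_1+\cdots+\delta_k$ are nonnegative --- i.e.\ membership in $D_n$ corresponds exactly to the counters of $\cV_2$ staying in $\N^n$ along the matching run (the ``increments first'' ordering is what makes this hold also at positions inside a block, and the dummy $a_1\bar a_1$ guarantees an infinite run produces an infinite word); and (b) the map $\tau$ sending a word $w$ to the set of encodings $u_{\delta_1}u_{\delta_2}\cdots$ of the \emph{control-accepting} runs of $\cV_2$ labelled $w$ (infinite paths through the transition graph of $\cV_2$ visiting $F$ infinitely often, ignoring counter nonnegativity) is a rational transduction, realized by the finite transducer obtained from $\cV_2$ by deleting the counters and reading $v$ while writing $u_\delta$ on a transition $(q,v,\delta,q')$.

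Now define $\cV$ over $\Sigma_n$ as the B\"uchi VASS that reads $y\in\Sigma_n^\omega$, parses it block by block as $y=u_{\delta_1}u_{\delta_2}\cdots$ while following a path $q\xrightarrow{v_1,\delta_1}q_1\xrightarrow{v_2,\delta_2}\cdots$ in the transition graph of $\cV_2$, and in parallel simulates $\cV_1$ on $w=v_1v_2\cdots$ using its own counters; $\cV$ B\"uchi-accepts iff the $\cV_2$-component visits $F$ infinitely often \emph{and} the simulated $\cV_1$-run is accepting. This is a routine product/simulation construction: since a B\"uchi VASS transition may carry a word label, $\cV$ consumes a whole block $u_\delta$ in one step, and the remaining suffixes of the fixed labels $v$ (and of the labels of $\cV_1$) can be kept in the finite control, so $\cV$ has finitely many states; its dimension equals that of $\cV_1$, and the padding ensures every accepting run of $\cV$ has an infinite-word label. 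By construction $L(\cV)=\{\,u_{\delta_1}u_{\delta_2}\cdots \mid \text{this encodes a control-accepting run of }\cV_2\text{ with label }w\in L(\cV_1)\,\}=\tau(L(\cV_1))$, while by feature (a) an element of $\tau(w)$ lies in $D_n$ iff its underlying $\cV_2$-run is a genuine counter-valid accepting run; hence $\tau$ maps $L(\cV_2)$ into $D_n$, and every word of $D_n$ that is $\tau$-related to some $w$ arises from such a run with $w\in L(\cV_2)$.

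It remains to transfer regular separability along $\tau$ in both directions, using that $\omega$-regular languages are closed under rational transductions, inverse rational transductions, and complement. If $S$ separates $L(\cV_1)$ from $L(\cV_2)$, then $R:=\tau(S)$ is $\omega$-regular, contains $L(\cV)=\tau(L(\cV_1))$, and is disjoint from $D_n$, since any $y\in R\cap D_n$ would, by (a) and (b), come from an accepting run of $\cV_2$ with label $w\in S\cap L(\cV_2)=\emptyset$. Conversely, if $R$ separates $L(\cV)$ from $D_n$, then $S:=\{w\mid \tau(w)\subseteq R\}=\overline{\tau^{-1}(\overline R)}$ is $\omega$-regular; it contains $L(\cV_1)$ because for $w\in L(\cV_1)$ every $y\in\tau(w)$ encodes a control-accepting $\cV_2$-run with label $w$, hence $y\in L(\cV)\subseteq R$; and $S\cap L(\cV_2)=\emptyset$ because for $w\in L(\cV_2)$ a counter-valid accepting run of $\cV_2$ yields some $y\in\tau(w)\cap D_n$, whence $y\notin R$ and $w\notin S$. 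I expect the only real obstacle to be bookkeeping: fixing the encoding $u_\delta$ so that $D_n$-membership captures counter nonnegativity \emph{exactly} (the ``increments first'' convention, including inside a block), and handling the degenerate cases (all-zero update tails, $\varepsilon$-labelled transitions, runs with finite label) so that $\tau$ is genuinely a finite-state transduction between languages of infinite words.
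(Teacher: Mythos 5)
Your proposal is correct and follows essentially the same route as the paper: the paper also encodes the counters of $\cV_2$ into the Dyck language via a rational transduction (Büchi transducer) $T$ with $L(\cV_2)=TD_n$, takes $\cV$ to be a Büchi VASS for $T^{-1}L(\cV_1)$ (your $\tau(L(\cV_1))$), and transfers separators through forward/inverse images and complements exactly as in your last paragraph. The bookkeeping points you flag (increments-first encoding, padding with $a_1\bar a_1$, $\varepsilon$-labels and finite-label runs) are the same details the paper's construction relies on.
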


\subparagraph{Step 2: Understanding the Separator Candidates}

To understand the regular languages that are disjoint from $D_n$, we will define 
\emph{basic separators}, sets $P_{i, k}$ and $S_{\bx, k}$, on which we elaborate in a moment. 
The following theorem says that finite unions of basic separators are sufficient for regular separability.
This is our first technical result and shown in \cref{sec:basic-separators}. 
\begin{theorem}\label{thm-basic-separators}
	If $R\subseteq\Sigma_n^\omega$ is regular and $R\cap D_n=\emptyset$, then $R$ is included in a finite union of basic separators.
\end{theorem}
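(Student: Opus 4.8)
The plan is to analyze a Büchi automaton $\cA$ with $L(\cA)\cap D_n=\emptyset$ and decompose $L(\cA)$ according to the run structure of $\cA$, showing each piece lands inside a basic separator. First I would fix an accepting run $\rho$ of $\cA$ on a word $w\in L(\cA)$. Since $w\notin D_n$, there is a finite prefix $v\in\prefix(w)$ and a coordinate $i$ with $\varphi_i(v)<0$; moreover, by tracking the letter balance, we may assume this ``defect'' is witnessed at a point where $\rho$ visits a state $p$, and that after $p$ the balance in coordinate $i$ never returns to a nonnegative value relative to that defect — or, at least, that there is a state visited infinitely often from which any continuation keeps some coordinate's balance bounded above. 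The key structural fact (flagged in the introduction: ``a Büchi automaton can guarantee disjointness from $D_n$ without knowing exactly when the balance crosses zero'') is that finiteness of $\cA$ forces a uniform bound: the crucial insight driving the $P_{i,k}$ basic separators is that some coordinate's balance must stay $\le -1$ from a bounded position onward on every accepted word, and the $S_{\bx,k}$ separators capture the case where the balance stays bounded in a direction $\bx$ (a scalar cone condition) because a cycle of $\cA$ taken infinitely often has letter-balance effect that is ``flat'' in that direction.

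The main steps, in order, would be: (1) Put $\cA$ into a normal form where every state is both reachable and co-reachable through a final state, and decompose $L(\cA)$ as a finite union over pairs $(p,C)$ where $p$ is a state and $C$ is a bottom strongly connected component of the loop structure at $p$ visited infinitely often; this reduces to understanding languages of the form $U\cdot V^\omega$-like pieces where $V$ corresponds to cycles through $p$. (2) For each such piece, argue by a pigeonhole/pumping argument on the number of states that if the cycle-effects (values $\varphi(u)$ for $u$ labeling a cycle at $p$) generate a set that is not ``one-sided'' in any coordinate, then one could pump to reach a word in $D_n$ — contradiction; hence the cycle effects are constrained to a half-space, and one extracts either a fixed coordinate $i$ that stays negative (giving $P_{i,k}$) or a direction $\bx$ along which all prefix balances stay bounded (giving $S_{\bx,k}$). (3) Bound the relevant constant $k$ by a polynomial (or at least computable) function of $|\cA|$ and $n$, using that the defect witness and the cycle decomposition only involve prefixes of length bounded in terms of the number of states. (4) Take the finite union over all pairs $(p,C)$ and over the finitely many resulting basic separators.

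I would expect the main obstacle to be step~(2): correctly formulating and proving the dichotomy ``either some coordinate is eventually pinned negative, or the balance stays in a bounded scalar cone $S_{\bx,k}$.'' The subtlety is that a single accepting run may mix many cycles at $p$ with conflicting balance effects, and one must rule out that their combination drifts back toward $D_n$; this requires a careful argument that the $\Z^n$-span of the cycle effects, together with the sign constraints forced by disjointness from $D_n$, is confined to a half-space, and then identifying $\bx$ as a normal vector to that half-space (or a standard basis vector, in the $P_{i,k}$ case). A secondary difficulty is handling coordinates with $\omega$-behavior versus bounded behavior uniformly — but since $\cA$ is a plain Büchi automaton with no counters, this reduces to a purely combinatorial statement about cycle effects, and the real content is the geometric half-space argument combined with the pumping bound on prefix lengths.
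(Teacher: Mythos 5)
Your skeleton (decompose by the recurrent structure of accepting runs, then a half-space/normal-vector argument over cycle effects) matches the paper's profile decomposition in spirit, but the two steps that actually carry the proof are missing or misconceived. First, your step~(2) is exactly the content you would have to prove, and you leave it as ``the main obstacle'': the paper resolves it by writing down, for each profile $\pi$, a concrete system of linear inequalities (one inequality $\langle\bx,\varphi(v)\rangle\le 0$ per \emph{primitive} cycle of $\pi$, plus $\langle\bx,\varphi(v_\pi)\rangle\le -1$ for a cycle through a final state using \emph{all} transitions of $\pi$) and applying Farkas' Lemma. Note that mere confinement of the cycle effects to a half-space $\langle\bx,\cdot\rangle\le 0$ is not enough for inclusion in $S_{\bx,k}$: condition b.)\ of $S_{\bx,k}$ needs infixes with \emph{strictly} negative weighted balance, which is why the justifying cycle must get the $\le -1$ inequality; your sketch does not account for this strictness.

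Second, and more fundamentally, your treatment of the $P_{i,k}$ case does not work. You propose to extract, per piece, ``a fixed coordinate that stays negative,'' but $P_{i,k}$ is about the balance crossing zero while having stayed $\le k$ beforehand, and the crossing can occur in the transient prefix of the run at heights that are unbounded across the words of a single piece; the recurrent cycle effects alone cannot produce such a cover. Relatedly, your contradiction in step~(2) (``if the cycle effects are not one-sided, pump to reach a word in $D_n$'') fails for a plain B\"uchi automaton: a nonnegative-balance recurrent cycle only gives a word $uv^\omega$ whose balance is bounded \emph{below}, and the prefix $u$ may already have driven it negative, so no word of $D_n$ is obtained and disjointness is not contradicted. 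The paper closes exactly this hole with \cref{make-pumpable}: it first replaces $\cA$ by a pumpable sub-automaton $\pump{\cA}$ (built from the Karp-Miller graph of $\cA$ extended with $n$ balance-tracking counters), which puts all words that cross zero at bounded height into a single $P_\ell$ (with $\ell$ the largest finite entry in $\KM{\bar{\cA}}$, so your hoped-for polynomial bound on $k$ is also unrealistic), and guarantees that in the remaining language a prefix can be exchanged for one with arbitrarily high balance, so that a nonnegative-balance cycle really does yield a word of $D_n$ and hence a contradiction. Without pumpability or an equivalent mechanism, your dichotomy and the ensuing contradiction do not go through.
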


For the definition of $P_{i, k}$, we note that the words outside $D_n$ have, for some index $i\in[1, n]$, an earliest moment in time where the balance between $a_i$ and $\bar a_i$ falls below zero.   
To turn this into a regular language, we impose an upper bound $k\in \mathbb{N}$ on the (positive) balance between the letters $a_i$ and $\bar a_i$ that is maintained until the earliest moment is reached.  
This yields the regular language 
\begin{align*}
  P_{i,k}\ :=\  \{w \in \Sigma_n^\omega \mid \exists v \in \prefix(w)\colon \varphi_i(v) < 0 \wedge \forall u \in \prefix(v)\colon \varphi_i(u) \leq k\}.
\end{align*}

The family of languages $P_{i, k}$ already captures the complement of $D_n$.
The problem is that we may need infinitely many such languages to cover the language $R$ of interest. 
For every bound $k$, a regular $R$ with $R\cap D_1=\emptyset$ may contain a word with a higher balance before falling below zero, take for example $R=a_1^*\bar{a}_1^\omega$.
The first insight is that if $R$ can fall below zero from arbitrarily high values, 
then the underlying B\"uchi automaton has to contain loops with a negative balance. 
The $R$ thus contains words $uv$ with an unconstrained prefix and a suffix that decomposes into $v= v_1v_2\cdots$ so that every infix $w=v_\ell$ has a negative balance on letter $a_i$. 
The observation suggests the definition of a language that contains precisely the words $u.v$. 
To make the language regular, we impose a bound $k$ on the positive balance that can be used during the infixes $w$.
Call the resulting language $S_{i, k}$.  
Unfortunately, taking the $P_{i, k}$ and the $S_{i,k}$ as basic separators is still not enough: 
\cref{Figure:examples}(right) exhibits a regular language, disjoint from $D_1$, that is not included in a finite union of $P_{i, k}$ and $S_{i, k}$, because it contains infixes where the balance on each letter exceeds all bounds in each coordinate.
The second insight is that we can catch the remaining words with a version of $S_{i,k}$ that weights coordinates with some $\bx\in\N^n$.
Let us give some intuition on this.
The words from $R$ that we cannot catch with a $P_{i,k}$ must come across, for each $i$ that becomes negative, a loop with positive balance on $i$ (otherwise, the balance on those $i$ would be bounded). 
But then, the only way such words can avoid $D_1$ is by ending up in a strongly connected component where \emph{every} loop (with a final state) makes progress towards crossing $0$, i.e.\ is negative in some coordinate.
One can then conclude that even all $\Q_{\ge 0}$-linear combinations of loops (a convex set) must avoid the positive orthant $\Q_{\ge 0}^n\subset \Q^n$.
By the Hyperplane Separation Theorem (we use it in the form of Farkas' Lemma), this is certified by a hyperplane that separates all loop effects from $\Q^n_{\ge 0}$.
This hyperplane is given by some orthogonal vector $\bx\in\N^n$, meaning that every loop balance must have negative scalar product with $\bx$. Hence, we can catch these words by:
\begin{align*}
  S_{\bx,k}\ :=\ \left\{u.v \in \Sigma_n^\omega ~\middle|~
  \begin{aligned}
    \text{a.)}~\,&\forall f \in \infix(v)\colon \langle \bx,\varphi(f)\rangle \leq k\text{, and}\\
    \text{b.)}~\,&v=v_0.v_1.v_2\cdots \wedge \forall \ell \in \mathbb{N}\colon \langle\bx,\varphi(v_\ell)\rangle < 0
  \end{aligned}
  \right\}.
\end{align*}
Coming back to \cref{Figure:examples}(right), the weight vector $\bx=(1, 1)$ guarantees that the weighted balance decreases indefinitely and also the weighted balances of all infixes stay bounded.  In~\cite{CzerwinskiZetzsche2020a}, a similar argument has been used to show sufficiency of basic separators.
\subparagraph{Step 3: Pumpable Languages}
With the basic separators at hand, the task is to understand the sets of words witnessing inseparability.
While studying this problem, we observed that the argumentation for the $P_{i, k}$ was always similar to the one for the $S_{\bx, k}$.
This led us to the question of whether we can get rid of the $P_{i, k}$ in separators.
The answer is positive, and hinges on a new notion of pumpability for languages over $\Sigma_n$.

Call infinite words $u$ and $v$ \emph{equivalent}, written 
$u\sim v$, if $v$ can be obtained from $u$ by removing and inserting
finitely many letters: There are $u_0,v_0\in\Sigma^*$ and
$w\in\Sigma^\omega$ such that $u=u_0w$ and $v=v_0w$. 
We say that a language $L\subseteq\Sigma_n^\omega$ is \emph{pumpable} if 
 for every $w\in L$ and every $k\in\N$, there exists a
decomposition $w=w_0w_1$ and a word $w_0'\in\Sigma_n^*$ that is a prefix of a word in $D_n$ such that $w_0'.w_1\in L$
and the letter balance satisfies the following: (a)~$\varphi(w_0')\geq \varphi(w_0)$ and
(b)~for the indices $i\in [1, n]$ where $\varphi_i$ becomes negative on some prefix of $w$, we have 
$\varphi_i(w_0')\ge\max\{\varphi_i(w_0), 0\}+k$. 
The consequence of this definition is that a pumpable language leaves every language $P_k:=\bigcup_{i \in [1,n]} P_{i,k}$. 
Indeed, for every word $w\in L$ and every
$k\in\N$, there is a word $w'\in L$ with $w\sim w'$ where the letter balance exceeds $k$ before becoming negative, and thus $w'\notin P_k$.
With the previous characterization of separator candidates, what is left to separate $L$ from $D_n$ are the languages $S_{\bx, k}$.  
\begin{restatable}{lemma}{prefixIndependentLimSup}\label{prefix-independent-limsep}
  If $L\subseteq\Sigma_n^\omega$ is pumpable, then $\regsep{L}{D_n}$ if and only if $\limsep{L}{D_n}$, where $\limsep{L}{D_n}$ means
$L\subseteq \bigcup_{\bx\in X} S_{\bx,k}$ for some finite set $X\subseteq\N^n$
and some $k\in\N$. 
\end{restatable}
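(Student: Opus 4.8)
The goal is to show that for a pumpable language $L\subseteq\Sigma_n^\omega$, regular separability from $D_n$ coincides with the weaker-looking $\limsep{L}{D_n}$. One direction is immediate: if $\limsep{L}{D_n}$, then $L$ is contained in a finite union of languages $S_{\bx,k}$, each of which is regular (the conditions in a.) and b.) are checkable by a finite automaton given fixed $\bx$ and $k$) and each of which is disjoint from $D_n$ (a word in $S_{\bx,k}$ has infinitely many infixes $v_\ell$ with $\langle\bx,\varphi(v_\ell)\rangle<0$, so the weighted balance $\langle\bx,\varphi(\cdot)\rangle$ of prefixes tends to $-\infty$, and since $\bx\in\N^n$ this forces some coordinate balance $\varphi_i$ to go negative, so the word is not in $D_n$). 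Hence the finite union is a regular separator and $\regsep{L}{D_n}$.

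The substantive direction is the converse: assume $\regsep{L}{D_n}$ and deduce $\limsep{L}{D_n}$. The plan is to start from an arbitrary regular separator $R$ with $L\subseteq R$ and $R\cap D_n=\emptyset$, and apply \cref{thm-basic-separators} to get $R\subseteq P\cup S$, where $P$ is a finite union $\bigcup_{i\in I}P_{i,k_P}$ (we may take a common bound $k_P$) and $S$ is a finite union $\bigcup_{\bx\in X}S_{\bx,k_S}$ (again a common bound $k_S$). Set $k:=\max\{k_P,k_S\}$, so in fact $R\subseteq P_k\cup\bigcup_{\bx\in X}S_{\bx,k}$ where $P_k=\bigcup_{i\in[1,n]}P_{i,k}$. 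Now I would use pumpability precisely to eliminate the $P_k$ part: I claim $L\subseteq\bigcup_{\bx\in X}S_{\bx,k}$. Take any $w\in L$. Since $L$ is pumpable, apply the definition with this $w$ and this $k$ to obtain a decomposition $w=w_0w_1$ and a prefix $w_0'$ of a word in $D_n$ with $w_0'.w_1\in L\subseteq R$, satisfying $\varphi(w_0')\ge\varphi(w_0)$ and, for every index $i$ where $\varphi_i$ becomes negative on some prefix of $w$, $\varphi_i(w_0')\ge\max\{\varphi_i(w_0),0\}+k$. The word $w':=w_0'.w_1$ lies in $R$, hence in $P_k\cup\bigcup_{\bx\in X}S_{\bx,k}$. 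I want to argue $w'\notin P_k$: since $w_0'$ is a prefix of a word in $D_n$, all its prefixes have non-negative balance in every coordinate; and on the suffix $w_1$, the only coordinates $i$ on which $\varphi_i$ can ever drop below zero (when accumulated from the start of $w'$) are those where $\varphi_i$ became negative on some prefix of the original $w$ — here one needs the observation that the "dangerous" coordinates for $w'$ are among the dangerous coordinates for $w$, because outside $w_0$ the suffix $w_1$ is shared and $\varphi(w_0')\ge\varphi(w_0)$ — and for each such $i$ the guaranteed surplus $\varphi_i(w_0')-\varphi_i(w_0^{\text{effect seen so far}})\ge k$ means the balance stays above $k$ at least until it would have gone negative in $w$, i.e.\ it crosses $0$ only after exceeding $k$; so no prefix of $w'$ witnesses the $P_{i,k}$ condition "$\varphi_i(u)\le k$ on all prefixes up to the drop''. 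Therefore $w'\in\bigcup_{\bx\in X}S_{\bx,k}$, say $w'\in S_{\bx,k}$. Finally, since membership in $S_{\bx,k}$ depends only on the infixes of the suffix $v$ (conditions a.) and b.) are about infixes of $v$ and a suffix decomposition of $v$), and $w$ and $w'$ share the common suffix $w_1$, I can transfer membership: $w\in S_{\bx,k}$ as well, possibly after absorbing the differing finite prefix into the "$u$'' part of the $u.v$ split. Hence $L\subseteq\bigcup_{\bx\in X}S_{\bx,k}$, which is exactly $\limsep{L}{D_n}$.

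The main obstacle I anticipate is the careful bookkeeping in the step "$w'\notin P_k$'': one must match up the coordinates on which $w'$ can go negative with the coordinates on which $w$ goes negative, and verify that the additive surplus $k$ promised by pumpability (measured against $\max\{\varphi_i(w_0),0\}$) is enough to guarantee the balance exceeds $k$ at the \emph{first} time it would otherwise reach $0$ on $w'$ — this is where the precise form of conditions (a) and (b) in the definition of pumpable is used, and where one has to be cautious that inserting/removing finitely many letters (the $\sim$-relation built into pumpability) does not spoil the bound. A secondary technical point is the prefix/suffix realignment needed to move between the $u.v$ decomposition of $w'$ and one for $w$; this is routine since $S_{\bx,k}$ is closed under changing a finite prefix (indeed it is of the form $\Sigma_n^*\cdot(\text{suffix language})$), but it should be stated explicitly. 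Everything else — regularity and $D_n$-disjointness of $S_{\bx,k}$, and the invocation of \cref{thm-basic-separators} with unified bounds — is straightforward.
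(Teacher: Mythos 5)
Your proposal is correct and follows essentially the same route as the paper's proof: invoke \cref{thm-basic-separators} on a regular separator (with a common bound $k$), use pumpability to replace each $w\in L$ by an equivalent word that escapes every $P_{i,k}$, and then transfer membership in $S_{\bx,k}$ back to $w$ via prefix-independence of $S_{\bx,k}$. Your version merely spells out details the paper compresses (the trivial ``if'' direction and the coordinate bookkeeping showing $w'\notin P_k$, where, pedantically, one should apply pumpability with $k+1$ to get the balance strictly above $k$ -- a gloss the paper shares).
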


In our context, pumpability is interesting because we can turn every B\"uchi VASS language into a pumpable language without affecting separability.

\begin{theorem}\label{make-pumpable}
Let $\cV$ be a $d$-dim.\ B\"{u}chi VASS over $\Sigma_n$.
We can compute a  $d$-dim.\ B\"{u}chi VASS  $\pump{\cV}$ that satisfies the following:
  \begin{enumerate}
    \item $L(\pump{\cV})$ is pumpable,
    \item there is a $k\in\N$ so that $L(\pump{\cV})\subseteq L(\cV)\subseteq L(\pump{\cV})\cup P_{k}$, and
    \item $\regsep{L(\cV)}{D_n}$ if and only if $\regsep{L(\pump{\cV})}{D_n}$.\vspace{0.1cm}
  \end{enumerate}
\end{theorem}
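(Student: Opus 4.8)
\textbf{Proof plan for \cref{make-pumpable}.}

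The plan is to build $\pump{\cV}$ by a product construction that runs $\cV$ while, at nondeterministically chosen moments, ``pretending'' to have seen extra unmatched opening letters. Concretely, I would augment the state of $\cV$ with a bounded amount of bookkeeping that records, for each coordinate $i\in[1,n]$, a guess for which indices will eventually go negative along the current run, and that forces the simulated run to realize the pumpability requirement. The key observation making this finite is that to leave $P_k$ it suffices to insert, once and for all near the beginning of the chosen suffix decomposition, a finite block $w_0'$ of opening letters $a_i$ (for the relevant indices $i$) whose balance dominates $\varphi(w_0)$ coordinatewise and exceeds it by $k$ on the negative-going coordinates; after that the original run can be continued verbatim. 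So $\pump{\cV}$ will have a ``prefix phase'' that simulates $\cV$ for an arbitrary finite prefix (reading $w_0$), then an $\varepsilon$-labelled gadget that emits such a block $w_0'$ of opening letters while still simulating the corresponding prefix of $\cV$'s run on $w_0$ (this is possible with $\varepsilon$-transitions of effect $\delta$ taken from $\cV$, since we only modify the \emph{labels}, keeping $\cV$'s counters and states in sync), and finally a ``suffix phase'' that continues $\cV$'s run on $w_1$ unchanged. Since we only need a bound $k$ that works \emph{uniformly} — item~2 asks for the existence of \emph{some} $k$ — we can in fact design $\pump{\cV}$ so that it inserts an \emph{unbounded} amount of opening letters (a self-loop emitting $a_1\cdots a_n$ with $\varepsilon$ counter effect), which makes pumpability hold for \emph{every} $k$ simultaneously and is syntactically cleaner.

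For item~1, pumpability: given $w\in L(\pump{\cV})$ and $k\in\N$, the corresponding accepting run of $\pump{\cV}$ already factors through the three phases; by taking the run that inserts $k + \max_i\varphi_i(w_0)$ extra copies of the opening block at the phase boundary, we obtain the required $w_0'$ with $\varphi(w_0')\ge\varphi(w_0)$ everywhere and $\varphi_i(w_0')\ge\max\{\varphi_i(w_0),0\}+k$ on the relevant coordinates; that $w_0'$ is a prefix of a word in $D_n$ is immediate since it consists only of opening letters. One subtlety is that the decomposition point $w=w_0w_1$ must be a point where the run is in a final-state-reachable configuration and the inserted block does not disturb the rest of the run — this is exactly why we simulate $\cV$ faithfully on the state/counter level and only touch labels. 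For item~2, the inclusion $L(\pump{\cV})\subseteq L(\cV)$ holds because erasing the inserted $\varepsilon$-block from any accepting run of $\pump{\cV}$ yields an accepting run of $\cV$ on a word $\sim$-equivalent with fewer opening letters (one has to check this stays in $L(\cV)$; since $\cV$'s run is literally preserved, it does). The reverse inclusion $L(\cV)\subseteq L(\pump{\cV})\cup P_k$ requires: for $w\in L(\cV)$ either $w$ already has a prefix where some $\varphi_i$ dips below $0$ with positive balance bounded by $k$ — then $w\in P_k$ for a $k$ computable from $|\cV|$ (bounded by, say, the largest counter-effect times $|Q|$, since the first dip below zero along a run can be taken to happen within a bounded number of steps) — or $w$ can be read by $\pump{\cV}$ choosing to insert zero extra letters, giving $w\in L(\pump{\cV})$. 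Item~3 then follows formally: pumpability of $L(\pump{\cV})$ plus item~2 plus \cref{prefix-independent-limsep} and the fact that $P_k$ is itself a finite union of basic separators (hence already separable from $D_n$), so adding or removing it does not change separability — one direction is $L(\pump{\cV})\subseteq L(\cV)$, the other uses $L(\cV)\subseteq L(\pump{\cV})\cup P_k$ together with closure of $\sep$-separability under finite unions on the left.

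The main obstacle I expect is pinning down the bound $k$ in item~2 and, relatedly, defining the insertion gadget so that the simulation of $\cV$'s counters is genuinely unaffected. We need that the ``first moment a coordinate goes negative'' can be assumed to occur after a run prefix of length bounded in $|\cV|$, so that the maximal positive balance reached before that moment is bounded by a computable $k$; this is a standard short-run argument (if the balance on $a_i$ climbs high before dipping, there is a pumpable cycle, and we can cut it to reach the dip sooner without raising the pre-dip balance), but making it precise in the presence of the other counters — which must stay non-negative — needs care, essentially a Karp--Miller / Rackoff-style length bound. The second delicate point is ensuring the $\varepsilon$-block we insert really keeps $\pump{\cV}$'s accepting runs in correspondence with $\cV$'s: this is why the gadget must use $\varepsilon$-labelled copies of the \emph{current} $\cV$-transitions (advancing $\cV$'s state and counters) interleaved with pure label-emitting $\varepsilon$-transitions of zero counter effect, so that at the end of the gadget $\pump{\cV}$ is in exactly the $\cV$-configuration it would have been in. Once these two points are handled, the rest is a routine verification.
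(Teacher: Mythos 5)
There is a genuine gap, and it is in the very shape of your construction. You build $\pump{\cV}$ so that it \emph{adds} words to the language: the accepted word of $\pump{\cV}$ contains the inserted block of opening letters (or a relabelled stretch), and such a word is in general not accepted by $\cV$ at all. Hence $L(\pump{\cV})\subseteq L(\cV)$ (item~2) fails, and your own justification concedes this: erasing the block gives a run of $\cV$ on a \emph{different}, merely $\sim$-equivalent word, which is not what the set inclusion asks for. Worse, item~3 genuinely breaks: take $\cV$ with $L(\cV)=\{\bar a_1 a_1^\omega\}$, which is disjoint from $D_1$ and regular, hence separable from $D_1$; your $\pump{\cV}$ would accept $a_1\bar a_1 a_1^\omega\in D_1$, so $L(\pump{\cV})$ intersects $D_1$ and cannot be separable. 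The point of pumpability is that the pumped variants must already lie \emph{inside} the language, so $\pump{\cV}$ has to be obtained by \emph{restricting} $\cV$ — throwing away the words whose runs cannot be pumped (these land in $P_k$) — not by inserting fresh letters. The paper does this by forming $\bar{\cV}$, which tracks the $n$ letter balances as extra counters, and taking the product of $\cV$ with $\KM{\bar{\cV}}$: a run survives only if each balance coordinate either never goes negative or is raised to $\omega$ in $\KM{\bar{\cV}}$ before it does; pumping is then realized by repeating actual cycles of $\cV$ (the Karp--Miller pumping), so the pumped words are honestly in $L(\pump{\cV})$, and the additional counters of $\bar\cV$ force the pumped prefix to be a prefix of a word in $D_n$.

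A second, related problem is your proposed bound $k$ for the inclusion $L(\cV)\subseteq L(\pump{\cV})\cup P_k$. The ``short-run'' estimate (largest update times $|Q|$) is not sound: before the first dip below zero the balance can be enormous, and cutting cycles to reach the dip sooner can violate non-negativity of the $d$ genuine counters of $\cV$, so no polynomial (or even primitive recursive) bound of that kind is available. In the paper, $k$ is simply the largest \emph{finite} entry occurring in $\KM{\bar{\cV}}$ (possibly Ackermannian), and the dichotomy is not about run length but about whether the balance coordinate is $\omega$-pumped before it crosses zero: if not, it stayed $\le k$ the whole time, which is exactly membership in $P_k$. Your derivation of item~3 from items~1--2 via \cref{prefix-independent-limsep} and the harmlessness of $P_k$ is in the right spirit (it mirrors the paper), but it is moot until the construction is repaired so that items~1 and~2 actually hold.
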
 
The construction of $\pump{\cV}$ employs the Karp-Miller graph in an original way, namely to track the unboundedness of letter balances.
Let  $\bar{\cV}$ be the $(d+n)$-dimensional B\"uchi VASS obtained from $\cV$ 
by tracking the effect of the letters from $\Sigma_n$ in  $n$ additional counters.
For $\bar{\cV}$, we construct the Karp-Miller graph. 
The relationship between the languages of $\KM{\bar{\cV}}$ and $\cV$ is as follows.
For all words where every letter balance stays non-negative, their runs in $\cV$ can be mimicked in $\KM{\bar{\cV}}$.
For all other words, where the balance eventually becomes negative, this only holds
if the corresponding counter in $\bar{\cV}$ has been raised to $\omega$ beforehand.
Essentially, the new B\"{u}chi VASS $\pump{\cV}$ restricts $\cV$ to those runs that have counterparts in $\KM{\bar{\cV}}$.
This is achieved with a simple product construction of $\cV$ and $\KM{\bar{\cV}}$.
The thing to note is that every word from $L(\cV)$ that does not make it into $L(\pump{\cV})$ belongs to $P_{k}$, where $k$ is the maximum concrete number in $\KM{\bar{\cV}}$: A run in $\cV$ that cannot be mimicked in $\KM{\bar{\cV}}$ will at some point have a negative letter balance, before reaching $\omega$ in $\KM{\bar{\cV}}$ in that component; thus all counter values had been at most $k$ until that point.

An example on how to construct $\bar{\cV}$ and $\pump{\cV}$ can be found in \cref{Figure:pumpability},
where both were constructed for the B\"{u}chi VASS found in \cref{Figure:examples}(left).

\begin{figure}[t]
  \begin{center}
    \scalebox{0.85}{
      \begin{tikzpicture}[initial text={},baseline]
        \node[state,initial] (q0) {$q_0$};
        \node[state,right=1cm of q0,accepting] (q1) {$q_1$};
        \node[state,right=2.0cm of q1] (q2) {$q_2$};
        \path[->] 
          (q0) edge [loop above] node {$(1,0)|\varepsilon$} (q0)
          (q0) edge node[above] {$\bzero|\varepsilon$} (q1)
          (q1) edge[bend left=15] node[above] {$\bzero|\varepsilon$} (q2)
          (q2) edge[bend left=15] node[below] {$(0,-1)|\varepsilon$} (q1)
          (q2) edge[loop above] node {$(-1,1)|\varepsilon$} (q2)
          (q2) edge[loop below] node[below=.1cm] {$(1,-1)|\varepsilon$} (q2)
        ;
      \end{tikzpicture}
      \hspace{1cm}
      \begin{tikzpicture}[initial text={},baseline]
        \node (q0) {$(q_0,\omega,0)$};
        \node[right=1cm of q0] (q1) {$(q_1,\omega,0)$};
        \node[right=2.0cm of q1] (q2) {$(q_2,\omega,0)$};
        \node[below=1cm of q1] (p1) {$(q_1,0,0)$};
        \node[initial,left=1cm of p1] (p0) {$(q_0,0,0)$};
        \node[right=2.0cm of p1] (p2) {$(q_2,0,0)$};
        \node[above=1cm of q2] (r2) {$(q_2,\omega,\omega)$};
        \node[left=2cm of r2] (r1) {$(q_1,\omega,\omega)$};
        \path[->] 
          (p0) edge node[left] {$\be_1|\varepsilon$} (q0)
          (p0) edge node[below] {$\bzero|\varepsilon$} (p1)
          (p1) edge node[below] {$\bzero|\varepsilon$} (p2)
          (q0) edge [loop above] node {$\be_1|\varepsilon$} (q0)
          (q0) edge node[below] {$\bzero|\varepsilon$} (q1)
          (q1) edge node[below] {$\bzero|\varepsilon$} (q2)
          (q2) edge node[left] {$-\be_1|a_1$} (r2)
          (r1) edge[bend left=10] node[above] {$\bzero|\varepsilon$} (r2)
          (r2) edge[bend left=10] node[below] {$\bzero|\bar{a}_1$} (r1)
          (r2) edge[loop above] node {$\begin{aligned}
              -\be_1|a_1 \\
              \be_1|\bar{a}_1
            \end{aligned}$} (r2)
        ;
      \end{tikzpicture}
    }%
  \end{center}
  \vspace{-0.5cm}
  \caption{Left: The B\"{u}chi VASS $\bar{\cV}$ constructed from the B\"{u}chi VASS $\cV$ found in \cref{Figure:examples}(left).
  Note how the added second counter tracks the letter balance of the now removed transition labels, incrementing on letter $a_1$ and decrementing on letter $\bar{a}_1$.
  Right: The B\"{u}chi VASS $\pump{\cV}$ corresponding to $\cV$ as given by \cref{make-pumpable}.
  Here we did not mark the final states to reduce visual clutter; every state that includes $q_1$ is considered final. Similarly, the two labels above the loop in the top right correspond to two distinct transitions.
  Note that $\pump{\cV}$ essentially looks like $\KM{\bar{\cV}}$, just with different transition labels.
  \label{Figure:pumpability}}
\end{figure}

In the proof of Theorem~\ref{thm-basic-separators}, we make use of Theorem~\ref{make-pumpable} (recall that a regular language is the language of a $0$-dimensional B\"uchi VASS). 
This may look like cyclic reasoning, but it is not:
We will show Theorem~\ref{make-pumpable}(1)+(2) directly, using the arguments above.
With this, we prove Theorem~\ref{thm-basic-separators}, which in turn is used to derive
Lemma~\ref{prefix-independent-limsep} and Theorem~\ref{make-pumpable}(3).

%
%
%
%
\subparagraph{Step 4: Non-Separability Witnesses and Decidability}
Because of pumpability, it remains to decide whether a B\"{u}chi VASS language
$L(\cV)$ is included in a finite union $\bigcup_{\bx\in X}S_{\bx,k}$ for some $k$.  
Part of the difficulty is that we have no bound on the cardinality
of $X$.
To circumvent this, we decompose $L(\cV)$ into a finite union 
$\bigcup_\pi L_\pi(\cV)$, where $\pi$ is a \emph{profile}, meaning a set of edges in $\KM{\cV}$
seen infinitely often during a run of $\cV$. 
We then show that each $L_\pi(\cV)$ is either (i)~included in a single separator
$S_{\bx,k}$ or (ii)~escapes every finite union $\bigcup_{\bx\in X} S_{\bx,k}$.

Here, it is key to show an even stronger fact: In case (i), not only $L_\pi(\cV)$
is included in some $S_{\bx,k}$, but the entire set of runs in $\KM{\cV}$ that eventually remain in $\pi$.
The advantage of strengthening is that finiteness of $\KM{\cV}$ allows us to
express inclusion in $S_{\bx,k}$, for some $k$, as a \emph{finite} system of linear inequalities over $\bx$: 
We say that (1)~the balance of every primitive cycle, weighted by $\bx$, 
is at most zero  and (2)~the balance, weighted by $\bx$, of some cycle containing all edges from $\pi$ is negative.
Here, (1) and (2) correspond to Conditions a.) and b.) of $S_{\bx,k}$. If they are met, then the runs of $\KM{\cV}$ along $\pi$ are included in $S_{\bx,k}$ for some $k$.

We then prove that if the system is not feasible, then $\cV$ has runs that
escape every finite union $\bigcup_{\bx\in X} S_{\bx,k}$. 
To this end, we employ Farkas' Lemma: 
It tells us that if there is no solution, then the dual system has a solution. 
The solution of the dual system can be interpreted as an executable linear combination of primitive
cycles with non-negative balances. 
We show that these cycles can be arranged in a pattern in $\KM{\cV}$ we call \emph{inseparability flower}.
Such an inseparability flower then yields a sequence of runs $\rho_1,\rho_2,\ldots$ in $\KM{\cV}$ such that $\rho_k$ 
escapes $S_{\bx,k}$ for every vector $\bx$.
Finally, pumpability allows us to lift these runs of $\KM{\cV}$ to runs of $\cV$
and thus conclude inseparability.

This equips us with two possible decision procedures: We can either check solvability
of each system of inequalities, or detect inseparability flowers in $\KM{\cV}$.

\section{Basic Separators}\label{sec:basic-separators}
We prove \cref{thm-basic-separators}, that any regular language $R$ over $\Sigma_n$ with $R\cap D_n=\emptyset$ is contained in a finite union of languages $P_{i, k}$ and $S_{\bx, k}$. 
Note that a single value of $k$ is sufficient, since we have $P_{i,k} \subseteq P_{i,k+1}$ and $S_{\bx,k} \subseteq S_{\bx,k+1}$ for each $i,\bx,k$.
The proof decomposes the B\"uchi automaton for $R$ in a way that allows us to forget about connectedness issues and reason over cycles (and their letter balances) using techniques from linear algebra. 
We make use of the following basic fact from linear programming~\cite[Corollary 7.1f]{Schr86}.
\begin{theorem}[Farkas' Lemma (variant), \cite{Schr86}]\label{farkas-lemma}
  Let $\bA \in \Q^{m \times n}$ be a matrix and let $\bb \in \Q^m$ be a vector. Then the system $\bA \bx \leq \bb$ has a solution $\bx \in \Q_{\geq 0}^n$ if and only if $\by^\top \bb \geq 0$ for each vector $\by \in \Q_{\geq 0}^m$ with $\by^\top \bA \geq \bzero$.
\end{theorem}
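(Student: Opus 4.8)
The ``only if'' direction is immediate and I would dispatch it first: if $\bx\in\Q_{\geq 0}^n$ satisfies $\bA\bx\leq\bb$, then for any $\by\in\Q_{\geq 0}^m$ with $\by^\top\bA\geq\bzero$ we get $\by^\top\bb\geq\by^\top(\bA\bx)=(\by^\top\bA)\bx\geq 0$, using $\by\geq\bzero$ and $\bb-\bA\bx\geq\bzero$ for the first inequality and $\by^\top\bA\geq\bzero$, $\bx\geq\bzero$ for the second. For the ``if'' direction I would argue by contraposition: assuming the system $\bA\bx\leq\bb$, $\bx\geq\bzero$ has no solution, I construct a $\by\in\Q_{\geq 0}^m$ with $\by^\top\bA\geq\bzero$ and $\by^\top\bb<0$. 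The plan is to fold the sign constraints into the inequality system: let $\bA'$ be the $(m+n)\times n$ matrix obtained by stacking $\bA$ on top of $-I$ (the $n\times n$ identity), and let $\bb'$ be $\bb$ extended by $n$ zeros; then the hypothesis reads ``$\bA'\bx\leq\bb'$ has no solution $\bx\in\Q^n$'', now with $\bx$ unconstrained.

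Next I would apply Fourier--Motzkin elimination to $\bA'\bx\leq\bb'$, eliminating the variables $x_1,\dots,x_n$ one at a time. Each elimination step replaces the current system by a new one all of whose inequalities are \emph{nonnegative} combinations of the previous ones, and which is solvable in the remaining variables precisely when the previous one was. After all $n$ variables have been eliminated, one is left with a variable-free system consisting of inequalities of the form $0\leq c_j$; infeasibility of the original system forces some $c_j<0$. Unwinding the bookkeeping, that inequality is a nonnegative combination of the rows of the augmented matrix $(\bA'\mid\bb')$, say with multipliers $\by\in\Q_{\geq 0}^m$ on the $\bA$-rows and $\bz\in\Q_{\geq 0}^n$ on the $-I$-rows, whose left-hand side is $\bzero$ (all variables are gone) and whose right-hand side is $c_j$. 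Hence $\by^\top\bA-\bz^\top=\bzero$, i.e.\ $\by^\top\bA=\bz^\top\geq\bzero$, and $\by^\top\bb=c_j<0$, which is exactly the required certificate.

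\textbf{Main obstacle.} The only non-routine ingredient is the correctness of Fourier--Motzkin elimination: that forming all pairwise nonnegative combinations cancelling the coefficient of the eliminated variable yields a system whose solution set is exactly the coordinate projection of the old one, together with the observation that every produced inequality stays traceable to an explicit nonnegative combination of the original rows. An alternative that sidesteps this bookkeeping is to note that feasibility of $\bA\bx\leq\bb$, $\bx\geq\bzero$ is equivalent to $\bb$ lying in the finitely generated cone $C:=\{\bA\bx+\bs : \bx\in\Q_{\geq 0}^n,\ \bs\in\Q_{\geq 0}^m\}$, which is generated by the columns of $\bA$ and the unit vectors $\be_1,\dots,\be_m$; such a cone is closed and convex (Weyl's theorem), so if $\bb\notin C$ the separating hyperplane theorem for closed convex cones provides $\by$ with $\by^\top\bb<0$ and $\by^\top\bc\geq 0$ for all $\bc\in C$ --- testing $\bc$ against the generators $\be_i$ gives $\by\geq\bzero$ and against the columns of $\bA$ gives $\by^\top\bA\geq\bzero$. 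On that route the difficulty shifts to proving that finitely generated cones are closed. Either way, the statement is a standard reformulation of Farkas' Lemma, so I would ultimately just cite~\cite{Schr86} and keep the above as the underlying justification.
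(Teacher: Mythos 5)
Your proposal is correct, but note that the paper does not prove this statement at all: it imports it verbatim as a known fact, citing \cite[Corollary 7.1f]{Schr86}, which is exactly what your final sentence says you would do. The justification you sketch is sound and standard: the ``only if'' direction is the one-line computation $\by^\top \bb \geq \by^\top \bA \bx \geq 0$, and for the ``if'' direction, folding the sign constraints $\bx\geq\bzero$ into the system via the rows of $-I$ and then running Fourier--Motzkin elimination does produce, from infeasibility, a nonnegative row combination with left-hand side $\bzero$ and right-hand side negative, whose $\bA$-part multipliers give precisely the certificate $\by\in\Q_{\geq 0}^m$, $\by^\top\bA\geq\bzero$, $\by^\top\bb<0$ that contradicts the hypothesis. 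The alternative route via the cone generated by the columns of $\bA$ and the unit vectors is equally valid, with the burden shifted to the closedness of finitely generated cones (Weyl), as you note. So the only difference from the paper is that you supply the underlying argument rather than leaving it entirely to the citation; since the lemma is used as a black box in the paper (in \cref{lem-basic-post-separators} and in the construction of inseparability flowers), citing \cite{Schr86} as the paper does is the appropriate level of detail, and your sketch would serve as a correct appendix-level justification if one were wanted.
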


\subparagraph{Decomposing with profiles}
We decompose $R = L(\cA)$ into a (not necessarily disjoint) union of several languages, each linked to a so-called \emph{profile}. 
We will later see that for pumpable $R$, every such profile language already has to be contained in a single $S_{\bx,k}$.
\begin{definition}
  Let $\cA$ be a B\"{u}chi automaton. A \emph{profile of $\cA$} is a set
  $\pi$ of transitions of $\cA$ for which there exists a cycle $\sigma_\pi$ in
  $\cA$ such that (a)~$\sigma_\pi$ contains exactly the transitions in $\pi$,
  and (b)~$\sigma_\pi$ starts (and ends) in a final state $q_\pi$.
\end{definition}
We denote by $\Pi(\cA)$ the finite set of profiles of $\cA$.
Moreover, we associate to every accepting run $\rho$ of $\cA$ its profile $\Pi(\rho)$, which contains exactly the transitions appearing infinitely often in $\rho$.
This definition is sound, as the infinitely occurring transitions of an accepting run must form a cycle due to repetition, which visits a final state due to acceptance.

Given a profile $\pi$ of $\cA$, we define $L_\pi(\cA) \subseteq L(\cA)$ to be the language of all words that have an accepting run $\rho$ of $\cA$ with $\Pi(\rho) = \pi$.
Note that this language is still regular:
From $\cA$ one can construct a B\"{u}chi automaton that guesses a point after which only transitions from $\pi$ can occur, and once this point is reached it keeps a list of already used transitions from $\pi$ in each state.
Then only once all transitions of $\pi$ have been used the state becomes final and the list is set back to empty.

This now allows us to view $R$ as the union of the languages $L_\pi(\cA)$ with $\pi \in \Pi(\cA)$.
We show that each language $L_\pi(\cA)$ is either contained in $S_{\bx,k}$ for some $\bx,k$, or there is a cycle that, assuming the pumpability from the previous section, makes $L_\pi(\cA)$ intersect $D_n$.
\begin{lemma}\label{lem-basic-post-separators}
  Let $\cA$ be a B\"{u}chi automaton over $\Sigma_n$ and let $\pi$ be one of its profiles. Then one of the following conditions holds:
  \begin{romanenumerate}
    \item\label{post-separator-exists} There is a number $k \in \N$ and a vector $\bx \in \N^n$ such that $L_\pi(\cA) \subseteq S_{\bx,k}$, or
    \item\label{post-separator-nonex-cycle} there is a cycle $\sigma'$ in $\cA$ over $w'$ with $\varphi(w') \geq \bzero$, and $\sigma'$ contains all transitions from $\pi$.\vspace{0.1cm}
  \end{romanenumerate}
\end{lemma}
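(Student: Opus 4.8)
The plan is to analyze the cycles of $\cA$ that pass through the final state $q_\pi$ and use the profile transitions in $\pi$, and to apply Farkas' Lemma to dichotomize on whether the letter balances of all such cycles can be simultaneously confined to a half-space of the form $\langle \bx, \cdot\rangle < 0$. First I would restrict attention to the sub-automaton $\cA_\pi$ consisting of exactly the transitions in $\pi$, which (by the profile definition) is strongly connected and contains $q_\pi$; every accepting run $\rho$ with $\Pi(\rho)=\pi$ eventually stays in $\cA_\pi$ and traverses a suffix that decomposes into cycles through $q_\pi$, each using only $\pi$-transitions. The relevant objects are the \emph{primitive cycles} of $\cA_\pi$, i.e.\ those that do not strictly contain a smaller cycle; there are finitely many of them, say with label balances $\bd_1,\dots,\bd_m\in\Z^n$. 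Any cycle in $\cA_\pi$ has a balance that is a non-negative integer combination of the $\bd_j$, since one can iteratively peel off primitive subcycles.

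Next I would set up the alternative. Consider whether there exists $\bx\in\Q_{\geq 0}^n$ with $\langle \bx, \bd_j\rangle \leq 0$ for all $j$ \emph{and} additionally $\langle\bx,\bd_j\rangle < 0$ for at least one primitive cycle appearing in a cycle that covers all of $\pi$ — or more precisely, with $\langle\bx, \varphi(w_\pi)\rangle < 0$ where $w_\pi$ labels the canonical covering cycle $\sigma_\pi$. To cast this as a single linear system I would work with the matrix $\bA$ whose rows are the $\bd_j$ together with a row encoding $\varphi(w_\pi)$, and the bound vector $\bb$ having $0$'s for the $\bd_j$ rows and $-1$ for the $\varphi(w_\pi)$ row; then I ask whether $\bA\bx\leq\bb$ has a solution $\bx\in\Q_{\geq 0}^n$. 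By scaling and rounding (the $\bd_j$ and $\varphi(w_\pi)$ are integral) one may take $\bx\in\N^n$ without loss of generality. If such an $\bx$ exists, I claim condition \labelcref{post-separator-exists} holds: every word $w\in L_\pi(\cA)$ is of the form $u.v$ where $v$ lies in $\cA_\pi$ and decomposes into cycles through $q_\pi$, hence $v=v_0 v_1 v_2\cdots$ with each $v_\ell$ a cycle through $q_\pi$ that uses all of $\pi$ (group consecutive small cycles until all $\pi$-transitions have appeared). Each such $v_\ell$ contains $\sigma_\pi$ plus further primitive cycles, so $\langle\bx,\varphi(v_\ell)\rangle \leq \langle\bx,\varphi(w_\pi)\rangle < 0$, giving condition b.) of $S_{\bx,k}$; and every infix $f$ of $v$ has $\langle\bx,\varphi(f)\rangle$ bounded above, since reading an infix within the strongly connected finite $\cA_\pi$ can increase the weighted balance by at most a constant $k$ depending only on $\cA$ and $\bx$ (the weighted balance is "almost non-increasing" over full $q_\pi$-cycles and only a bounded transient can occur), giving condition a.). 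Hence $L_\pi(\cA)\subseteq S_{\bx,k}$.

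If no such $\bx\in\Q_{\geq0}^n$ exists, Farkas' Lemma (\cref{farkas-lemma}) yields a vector $\by\in\Q_{\geq 0}^m$ (over the rows of $\bA$) with $\by^\top\bA\geq\bzero$ but $\by^\top\bb < 0$; the latter forces the coefficient $\lambda$ of the $\varphi(w_\pi)$-row to be positive. Rescaling $\by$ by a positive integer, we get non-negative integer multiplicities $c_1,\dots,c_m$ for the primitive cycles and a positive integer multiplicity $\lambda$ for $\sigma_\pi$, such that $\sum_j c_j\bd_j + \lambda\,\varphi(w_\pi) \geq \bzero$. Now I would assemble these into a single closed cycle $\sigma'$ in $\cA$: since $\cA_\pi$ is strongly connected and $\sigma_\pi$ already visits every transition of $\pi$, one can concatenate $\lambda$ copies of $\sigma_\pi$ interleaved with $c_j$ copies of each primitive cycle (inserting short connecting paths inside $\cA_\pi$, which contribute balance $\bzero$ when the path is itself closed, or more carefully: attach each primitive cycle at the base point it shares, using the strong connectivity to route to and from $q_\pi$, and note these routing detours can be chosen to cancel or be absorbed into the primitive-cycle count). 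The resulting cycle $\sigma'$ is over a word $w'$ with $\varphi(w') = \sum_j c_j\bd_j + \lambda\,\varphi(w_\pi)\geq\bzero$ and, because it contains $\sigma_\pi$, it contains all transitions from $\pi$. This is exactly condition \labelcref{post-separator-nonex-cycle}.

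I expect the main obstacle to be the bookkeeping in the two constructive directions — specifically, (a) making the bound $k$ in case \labelcref{post-separator-exists} genuinely uniform over all words of $L_\pi(\cA)$, which requires arguing that within the fixed finite strongly connected $\cA_\pi$ the $\bx$-weighted balance of any infix exceeds the weighted balance at the last "full cycle boundary" by at most a constant, and ensuring the decomposition $v = v_0v_1\cdots$ into all-of-$\pi$-covering blocks is always possible (it is, because $\rho$ uses every $\pi$-transition infinitely often); and (b) in case \labelcref{post-separator-nonex-cycle}, turning a multiset of cycles with prescribed multiplicities into one honest closed walk in $\cA$ whose balance is exactly the prescribed non-negative sum — the routing/connecting paths must be handled so they do not perturb the balance, which is where strong connectivity of $\cA_\pi$ and the fact that $\sigma_\pi$ is a single cycle covering $\pi$ are essential. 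Everything else is a routine application of Farkas' Lemma once the linear system over the primitive-cycle balances is correctly set up.
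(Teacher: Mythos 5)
Your proposal is correct and follows essentially the same route as the paper: the same linear system $\bA_\pi\bx\le\bb$ built from the balances of primitive $\pi$-cycles plus the row $\langle\bx,\varphi(w_\pi)\rangle\le-1$ for the covering cycle $\sigma_\pi$, the same Farkas dichotomy, and the same two constructions (solution $\bx$ yields $L_\pi(\cA)\subseteq S_{\bx,k}$ via cycle-peeling and a bounded-infix argument; a dual vector $\by$ yields the non-negative covering cycle $\sigma'$ by inserting the selected primitive cycles into copies of $\sigma_\pi$ at shared states, which works precisely because $\sigma_\pi$ visits every state touched by $\pi$). The one loose spot, your claim that every $q_\pi$-block ``contains $\sigma_\pi$ plus further primitive cycles'' (its transition multiset need not contain that of $\sigma_\pi$), is at the same level of detail as the paper's own argument and is repaired by the standard trick of passing to a sufficiently high power of the block and peeling off $\sigma_\pi$'s multiset plus primitive cycles.
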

Assume $L_\pi(\cA)\neq \emptyset$, otherwise Condition~(\labelcref{post-separator-exists}) trivially holds. 
We build a system $\bA_\pi\bx \leq \bb$ of linear inequalities as follows. It contains one inequality $\langle\bx, \varphi(v)\rangle \leq 0$ for each word $v$ read by a primitive cycle of transitions in $\pi$. 
By \emph{primitive cycle} we mean a cycle that does not repeat a state. 
Moreover, the system contains the inequality $\langle\bx, \varphi(v_\pi)\rangle \leq -1$ for the
cycle $\sigma_\pi$ over $v_\pi$ that justifies the profile $\pi$. 
Let us quickly remark that the solution space of the system $\bA_\pi\bx \leq \bb$ is independent of the precise choice of the justifying cycle $\sigma_\pi$:
To see this, we claim that $\bA_\pi\bx\leq\bb$ holds if and only if all primitive cyles in $\pi$ have an $\bx$-weighted balance at most zero, and at least one primitive cycle in $\pi$ has a strictly negative $\bx$-weighted balance.
For the ``if'' direction, note that a sufficiently long repetition of $\sigma_\pi$  will contain each primitive cycle as a (possibly non-contiguous) subsequence. This means, the repetition, and thus $\sigma_\pi$, must have a strictly negative $\bx$-weighted balance.
For the converse, we observe that $\sigma_\pi$ can be decomposed into primitive cycles. Thus, if $\sigma_\pi$ has strictly negative $\bx$-weighted balance, then so must at least one of its constituent primitive cycles.

Applying Farkas' Lemma to $\bA_\pi\bx \leq \bb$ either yields a solution $\bx \in \Q_{\geq0}^n$ or a vector $\by \in \Q_{\geq0}^{m}$ with $\by^\top \bA_\pi \geq \bzero$ and $\by^\top \bb < 0$.
In both cases we assume wlog.\ that the given vector has entries in $\N$, as we can always multiply with the lcm of the denominators.

Suppose we have a solution $\bx$.
We claim that then $L_\pi(\cA)\subseteq S_{\bx,k}$, where
$k = |Q_\pi| \cdot h$ and $h$ is the maximal length of a transition label of $\cA$.
This is because $\bx$ weights primitive cycles non-positively, and $k$ is chosen such that for any infix $v$ of a word in $L_\pi(\cA)$, if $|v| > k$, then $v$'s associated transition sequence has to contain a primitive cycle. Thus, infixes at almost all start positions of a word in $L_\pi(\cA)$ must have $\bx$-weighted balance $\le k$.

If we obtain a vector $\by = (y_1,\ldots,y_m)$, then we can view it as a selection of rows in the matrix $\bA_\pi$, where the $j$th row is being selected $y_j$ many times.
Since each row corresponds to a cycle, this is also a selection of cycles.
Then by $\by^\top \bb < 0$ we selected $\sigma_\pi$, where we can insert the other selected cycles.
By $\by^\top \bA_\pi \geq \bzero$ this forms a cycle $\sigma'$ as required, with non-negative letter balance for all letter pairs.
A detailed proof can be found in \cref{appendix-basic-separators}.

Here, we used a system of linear inequalities $\bA_\pi\bx \leq \bb$, which was solely dependent on $\cA$ and $\pi$.
We reasoned that if this system has a solution, then Condition~(\labelcref{post-separator-exists}) has to hold.
This is a fact that we want to refer to in a later proof, and therefore we formalize it here.
\begin{corollary}\label{solution-implies-separator}
  If $\cA$ is a B\"{u}chi automaton with a profile $\pi$ for which there
  is an $\bx\in\N^n$ with $\bA_\pi\bx\le\bb$, then $L_\pi(\cA)\subseteq
  S_{\bx,k}$ for some $k\in\N$.
\end{corollary}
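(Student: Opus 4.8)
The plan is to recognize that this corollary merely records the ``solution branch'' of the proof of \cref{lem-basic-post-separators}, and to spell that branch out once more in a self-contained way. So I would assume we are handed $\bx\in\N^n$ with $\bA_\pi\bx\le\bb$; I may assume $L_\pi(\cA)\neq\emptyset$, since otherwise the conclusion is vacuous, and hence $\pi\neq\emptyset$. Unfolding the system, the hypothesis says precisely that $\langle\bx,\varphi(v)\rangle\le 0$ for the word $v$ of every primitive cycle using only transitions from $\pi$, and that $\langle\bx,\varphi(v_\pi)\rangle\le -1$ for the justifying cycle $\sigma_\pi$ over $v_\pi$. The goal is then to produce a single bound $k$, depending only on $\cA$ and $\|\bx\|_1$, with $L_\pi(\cA)\subseteq S_{\bx,k}$.

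First I would fix $w\in L_\pi(\cA)$ together with an accepting run $\rho$ such that $\Pi(\rho)=\pi$. By definition of the profile of a run, there is a position past which $\rho$ uses only transitions of $\pi$ and uses each of them infinitely often; in particular it visits $q_\pi$ infinitely often, since $q_\pi$ carries an outgoing transition of $\pi$. Split $w=u\cdot v$ so that $v$ is the infinite suffix read by $\rho$ from one such later visit to $q_\pi$ onwards. For Condition~a.) of $S_{\bx,k}$: every infix $f$ of $v$ is read, up to a bounded-length fragment inside one transition label at each end, by a path over transitions of $\pi$; while this path has more than $|Q_\pi|$ transitions it repeats a state, hence contains a cycle that decomposes into primitive $\pi$-cycles, and deleting it changes $\varphi(f)$ by a vector of non-positive $\bx$-weight by hypothesis. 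Iterating leaves a path of at most $|Q_\pi|$ transitions, so $\langle\bx,\varphi(f)\rangle$ is bounded by a constant $k$ in terms of $|Q_\pi|$, the maximal transition-label length $h$, and $\|\bx\|_1$.

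For Condition~b.), I would cut $\rho$'s $v$-part into consecutive blocks $B_0,B_1,\dots$, each a closed walk from $q_\pi$ to $q_\pi$, taken long enough that every block uses each transition of $\pi$ at least $m$ times, where $m$ is the largest number of times a single transition occurs in $\sigma_\pi$ --- possible because $\rho$ visits $q_\pi$ and uses every transition of $\pi$ infinitely often. Let $v=v_0 v_1 v_2\cdots$ be the induced factorization. The transition multiset of $B_\ell$ dominates that of $\sigma_\pi$ componentwise, and both multisets are balanced at every state; hence their difference is a non-negative, state-balanced transition multiset over $\pi$, which decomposes into primitive $\pi$-cycles. Since the letter balance depends only on the transition multiset and is additive, $\varphi(v_\ell)=\varphi(v_\pi)+\sum(\text{primitive-cycle balances})$, so $\langle\bx,\varphi(v_\ell)\rangle\le\langle\bx,\varphi(v_\pi)\rangle\le -1<0$. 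Therefore $w=u\cdot v\in S_{\bx,k}$, and since $w\in L_\pi(\cA)$ was arbitrary, $L_\pi(\cA)\subseteq S_{\bx,k}$.

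The main obstacle is Condition~b.): extracting the \emph{strict} inequality from the single bound $\langle\bx,\varphi(v_\pi)\rangle\le -1$ available in the system. The device that makes it work is to compare a block $B_\ell$ with $\sigma_\pi$ at the level of transition multisets rather than of concrete walks, so that ``subtracting'' $\sigma_\pi$ is meaningful and the remainder genuinely splits into primitive cycles; after that, additivity of the letter balance does the rest. Condition~a.) and the choice of $k$ are routine by comparison, and no new idea beyond the linear-algebra setup of \cref{lem-basic-post-separators} is needed.
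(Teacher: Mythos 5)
Your proof is correct and follows essentially the same route as the paper: split off a suffix read from a visit to $q_\pi$ onward using only $\pi$-transitions, bound infixes via repeated removal of primitive $\pi$-cycles (giving $k$ in terms of $|Q_\pi|$, $h$, and $\bx$), and get strict negativity of each block from the decomposition ``$\sigma_\pi$ plus primitive $\pi$-cycles.'' Your only deviation is making each block dominate the transition \emph{multiset} of $\sigma_\pi$ ($m$-fold occurrence of every transition) before subtracting it, which is a slightly more careful justification of the step the paper phrases as inserting primitive cycles into $\sigma_\pi$, and it buys nothing beyond tightening that detail.
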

With \cref{make-pumpable} and \cref{lem-basic-post-separators}, we can now show
\cref{thm-basic-separators}. Suppose $R=L(\cA)$ for some B\"{u}chi automaton
$\cA$. First, applying \cref{make-pumpable} with $d=0$ yields a B\"{u}chi
automaton $\pump{\cA}$ such that $L(\cA)\subseteq L(\pump{\cA})\cup P_{\ell}$
for some $\ell\in\N$ and $L(\pump{\cA})\cap D_n=\emptyset$. Therefore, it
suffices to show that $L(\pump{\cA})$ is included in a finite union of
languages $S_{\bx,k}$. Suppose not. Then the set $L(\pump{\cA})$ decomposes
into the sets $L_\pi(\pump{\cA})$ for $\pi\in\prof{\pump{\cA}}$. By
\cref{lem-basic-post-separators}, we know that for some $\pi$,
Condition~(\labelcref{post-separator-nonex-cycle}) must hold: Otherwise, each
$L_\pi(\pump{\cA})$ would be included in some $S_{\bx,k}$. But if
(\labelcref{post-separator-nonex-cycle}) holds for $\pi$, then there is a cycle
$\sigma'$ in $\pump{\cA}$ that contains $\pi$ (and thus visits a final state)
and reads a word $v$ with $\varphi(v)\ge\bzero$. Now for some finite prefix
$u$, the word $uv^\omega$ belongs to $L(\pump{\cA})$. Since
$\varphi(v)\ge\bzero$, there is some lower bound $B\in\Z$ such that for each
$i\in[1,n]$ and every prefix $p$ of $uv^\omega$, we have $\varphi_i(p)\ge B$.
Finally, since $L(\pump{\cA})$ is pumpable, we can exchange a prefix in
$w=uv^\omega$ to obtain another word $w'\in L(\pump{\cA})$ where every prefix
$p$ has $\varphi(p)\ge \bzero$. Hence $w'\in D_n$ and thus
$L(\pump{\cA})\cap D_n\ne\emptyset$, a contradiction.

\section{Deciding Regular Separability}\label{sec:decidability}
We now present the algorithm to decide, given a B\"{u}chi VASS $\cV$ whether $\regsep{L(\cV)}{D_n}$. 
We first employ \cref{make-pumpable},  because
for pumpable languages we only have to deal with one type of basic separators.
The next step is to generalize the notion
of profiles from B\"{u}chi automata to B\"{u}chi VASS.
 Recall that for a
sequence $\chi$ of transitions in $\cV$, $\inteff(\chi)$ denotes its effect on
the counters of $\cV$.  If $\chi$ is a transition sequence in $\KM{\cV}$, then
$\chi$ is labeled with a transition sequence of $\cV$, so we define
$\inteff(\chi)$ accordingly. Since we consider B\"{u}chi VASS with input
alphabet $\Sigma_n$, we write $\exteff(\chi)$ for the image of the input word
under $\exteff$. Again, this notation is used for transition sequences in
$\KM{\cV}$. We also write $\eff(\chi)=(\inteff(\chi),\exteff(\chi))$.
\begin{definition}
	Let $\cV$ be a B\"{u}chi VASS. A \emph{profile for $\cV$} is a set
	$\pi$ of edges in $\KM{\cV}$ for which there exists a cycle $\sigma$ in
	$\KM{\cV}$ such that (i)~$\sigma$ contains exactly the edges in $\pi$,
	(ii)~$\sigma$ starts (and ends) in a final state, and
	(iii)~$\inteff(\sigma)\ge \bzero$.
\end{definition}
Clearly, every B\"{u}chi VASS has a finite set of profiles, which we denote by
$\prof{\cV}$. Moreover, $\prof{\cV}$ can be constructed effectively:
Given a set of edges, a simple reduction to
checking unboundedness of a counter
can be used to check if it is a profile.
Furthermore, to every run $\rho$ of $\cV$, we can associate a profile: The run
$\rho$ must have a corresponding run in $\KM{\cV}$, which has a finite set
$\prof{\rho}$ of edges that are used infinitely often. Thus, $\rho$ decomposes
as $\rho_0\rho_1$ such that $\rho_1$ only contains edges from $\pi$. Then,
$\rho_1$ decomposes into $\sigma_1\sigma_2\cdots$ such that each $\sigma_i$
uses every edge from $\Pi(\rho)$ at least once and starts (and ends) in a final
state. Since $\le$ is a well-quasi ordering on $\N^n$, there are $r<s$
such that $\inteff(\sigma_r\cdots \sigma_s)\ge \bzero$. Thus, $\sigma=\sigma_r\cdots
\sigma_s$ is our desired transition sequence showing that $\prof{\rho}$ is a
profile. For each $\pi\in\prof{\cV}$, we denote by $L_\pi(\cV)$ the set of
all words accepted by runs $\rho$ of $\cV$ for which $\prof{\rho}=\pi$. Then clearly:
\begin{lemma}
$L(\cV)=\bigcup_{\pi\in\prof{\cV}} L_\pi(\cV)$.
\end{lemma}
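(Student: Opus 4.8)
The plan is to prove the two inclusions separately. The inclusion $\bigcup_{\pi\in\prof{\cV}} L_\pi(\cV)\subseteq L(\cV)$ is immediate: by definition $L_\pi(\cV)$ collects only labels of accepting runs of $\cV$, namely those whose associated profile equals $\pi$, so every word in $L_\pi(\cV)$ lies in $L(\cV)$, and this is preserved by taking the union over $\pi\in\prof{\cV}$. The reverse inclusion rests entirely on the observation, established in the paragraph immediately preceding the lemma, that the profile $\prof{\rho}$ of any accepting run $\rho$ of $\cV$ is a genuine element of $\prof{\cV}$.

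For the converse, I would take an arbitrary $w\in L(\cV)$ and an accepting run $\rho$ of $\cV$ labeled by $w$. This run lifts to a run of $\KM{\cV}$, and since $\KM{\cV}$ is finite the lifted run uses only finitely many distinct edges; let $\pi=\prof{\rho}$ be those used infinitely often. After a finite prefix the run only uses edges from $\pi$, and since $\rho$ is accepting and each edge of $\pi$ occurs infinitely often, the remainder decomposes as $\sigma_1\sigma_2\cdots$ with every $\sigma_i$ starting (and ending) in a final state and containing all edges of $\pi$. Recording the counter valuations of $\cV$ at the block boundaries yields a sequence in $\N^d$, so by the well-quasi-ordering of $\le$ on $\N^d$ two of them are comparable; the block of $\sigma_i$'s between these two boundaries forms a cycle $\sigma$ of $\KM{\cV}$ that uses exactly the edges of $\pi$, starts (and ends) in a final state, and has $\inteff(\sigma)\ge\bzero$. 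Hence $\pi$ satisfies all three defining conditions of a profile, so $\pi\in\prof{\cV}$, and since $\rho$ witnesses $w\in L_\pi(\cV)$ we conclude $w\in\bigcup_{\pi'\in\prof{\cV}}L_{\pi'}(\cV)$.

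There is essentially no real obstacle here; the only point that needs care — and the preceding paragraph already supplies it — is the claim that every accepting run truly has a profile in $\prof{\cV}$. That argument must (a) pass to $\KM{\cV}$ so that ``edges used infinitely often'' becomes a well-defined, finite object, and (b) apply the well-quasi-order argument to the actual $\N^d$-valued configurations of the run of $\cV$, not to the extended $(\N\cup\{\omega\})^d$-configurations of $\KM{\cV}$, so that the extracted cycle has a genuinely non-negative counter effect as required by condition (iii) of the profile definition. With this in hand the lemma is pure bookkeeping, which is why it can be stated as ``clearly''.
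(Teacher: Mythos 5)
Your proposal is correct and follows essentially the same route as the paper: the easy inclusion is definitional, and the converse is exactly the argument in the paragraph preceding the lemma (lift the accepting run to $\KM{\cV}$, take the infinitely-often-used edges as $\pi$, decompose the tail into blocks at a recurring final state, and use the well-quasi-ordering of $\le$ on the actual $\N^d$-configurations of $\cV$ to extract a block $\sigma$ with $\inteff(\sigma)\ge\bzero$ witnessing $\pi\in\prof{\cV}$). Your two cautionary points (working in $\KM{\cV}$ for finiteness, and applying the wqo to the concrete counter valuations rather than the $\omega$-extended ones) are precisely the content the paper relies on, so nothing further is needed.
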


\subparagraph{A system of inequalities for each profile} 
Our next step is to associate with each profile $\pi\in\prof{\cV}$ a system of linear inequalities.
We need some terminology.  %
A \emph{$\pi$-cycle} is a cycle $\sigma$ in $\KM{\cV}$ that only contains edges
in $\pi$. If in addition, $\sigma$ visits each state of $\KM{\cV}$ at most
once, except for the initial state, which is visited twice, then $\sigma$ is a
\emph{primitive $\pi$-cycle}.  Clearly, a primitive $\pi$-cycle has length
$\le|\pi|$.
Moreover, from every $\pi$-cycle $\sigma$, one can successively cut
out primitive $\pi$-cycles until it is empty.  Therefore, if
$\tau_1,\ldots,\tau_m$ are the primitive $\pi$-cycles of $\KM{\cV}$, then there are numbers
$r_1,\ldots,r_m\in\N$ such that
$\eff(\sigma)=r_1\cdot\eff(\tau_1)+\cdots+r_m\cdot\eff(\tau_m)$.  We call
$\sigma$ a \emph{complete $\pi$-cycle} if this holds for some
$r_1,\ldots,r_m\ge 1$. 
Observe that if $\pi$ is a profile, then this is always witnessed by a complete
$\pi$-cycle: Take any cycle $\sigma$ witnessing that $\pi$ is a profile.
Then $\sigma^{|\pi|}$ contains each primitive $\pi$-cycle as a subsequence.
Hence, the cycle $\sigma^{m\cdot|\pi|}$ is complete: We can carry out the
cutting in each factor $\sigma^{|\pi|}$ so as to cut some $\tau_i$ at least
once. Moreover, $\sigma^{m\cdot|\pi|}$ still witnesses that $\pi$ is a profile,
since $\inteff(\sigma^{m\cdot|\pi|})=m\cdot|\pi|\cdot\inteff(\sigma)\ge \bzero$.

Let us now construct the system of inequalities associated with $\pi$. Let
$\sigma$ be a complete $\pi$-cycle witnessing that $\pi$ is a profile and let
$\tau_1,\ldots,\tau_m$ be the primitive $\pi$-cycles.  Let
$\bA_\pi\in\Z^{(m+1)\times n}$ be the matrix with rows
$\exteff(\tau_1),\ldots,\exteff(\tau_m),\exteff(\sigma)$, and let
$\bb\in\Z^{m+1}$ be the column vector $(0,\ldots,0,-1)$. Then clearly,
$\bA_\pi\bx\le\bb$ is equivalent to $\langle
\bx,\exteff(\sigma)\rangle<0$ and $\langle \bx,\exteff(\tau)\rangle\le 0$ for
each primitive $\pi$-cycle $\tau$.
\subparagraph{Inseparability flowers} An \emph{inseparability flower} is a 
structure in the Karp-Miller graph\vspace{0.05cm}
\begin{minipage}{5.5cm}
 $\KM{\cV}$ as depicted to the right. 
 It consists of a final state $q$ and 
three cycles $\alpha,\beta,\gamma$
that all start in $q$ and that meet the given conditions.
\end{minipage}\hspace{0.4cm}
\begin{minipage}{5cm}
\begin{tikzpicture}[initial text={}]
\node[state,initial] (q0) {};
\node[state,accepting,right=1.5cm of q0] (q1) {$q$};
\path[->] 
(q0) edge [dashed] (q1)
(q1) edge [loop above, out=130, in=50, looseness=6,densely dashed] node[align=center, below]{$\alpha$} (q1)
(q1) edge [loop right, out=40, in=-40, looseness=6,densely dashed] node[align=center, left]{$\beta$} (q1)
(q1) edge [loop below, out=-50, in=-130, looseness=6,densely dashed] node[align=center, above]{$\gamma$} (q1)
;
\node[right=1.3cm of q1, align=left] (conditions) {$\inteff(\alpha\beta\gamma)\ge\bzero$ \\ $\exteff(\alpha\beta)\ge\bzero$ \\ $\exteff(\alpha\beta\gamma)\in\Q\cdot\exteff(\alpha)$};
\end{tikzpicture}
\end{minipage}

Let us give some intuition on why such a flower is the relevant structure to look for.
True to its name, an inseparability flower guarantees the existence of an inseparability witness, i.e.\ a family of words accepted by the pumpable B\"{u}chi VASS $\cV$ that escape every basic separator $S_{\bx,k}$.
Such a family of words therefore needs an accepting run for each member, and the three conditions of the flower provide such runs:
The first condition ensures that the three cycles actually correspond to a transition sequence enabled in $\cV$.
The second condition guarantees that for every $\bx\in\N^n$, the $\bx$-weighted letter balance of $\alpha$ or of $\beta$ is positive; unless they are both zero, in which case
the third condition ensures that $\alpha\beta\gamma$ has $\bx$-weighted balance zero.
This allows us to construct, for each $k$, a run that escapes $S_{\bx,k}$ for all $\bx$:
By sufficiently repeating each cycle $\alpha$, $\beta$, and $\gamma$, we obtain a run that for each $\bx\in\N^n$,
will either (i)~have infixes with $\bx$-weighted balance $>k$, or (ii)~attain some $\bx$-weighted balance infinitely often. Each of these properties rules out membership in $S_{\bx,k}$.
\cref{flower-implies-insep} proves this formally.

\begin{theorem}\label{dec}
	Let $\cV$ be a B\"{u}chi VASS such that $L(\cV)$ is pumpable.
	Then the following are equivalent:
	\begin{myenum}
		\myitem\label{dec-not-separable} $\notregsep{L(\cV)}{D_n}$.
		\myitem\label{dec-no-solution} There is a profile $\pi\in\prof{\cV}$ such that the system $\bA_\pi\bx\le\bb$ has no solution $\bx\in\N^n$.
		\myitem\label{dec-inseparability-flower} There exists an inseparability flower in $\KM{\cV}$.
	\end{myenum}
\end{theorem}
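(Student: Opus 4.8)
The plan is to prove the equivalence as a cycle of implications \impl{dec-not-separable}{dec-no-solution}, \impl{dec-no-solution}{dec-inseparability-flower}, \impl{dec-inseparability-flower}{dec-not-separable}, throughout using the decomposition $L(\cV)=\bigcup_{\pi\in\prof{\cV}}L_\pi(\cV)$ and, for each $\pi$, the system $\bA_\pi\bx\le\bb$ constructed above. A first observation I would record is that $\bA_\pi\bx\le\bb$ has a solution in $\N^n$ if and only if it has one in $\Q_{\ge 0}^n$: its only inhomogeneous inequality is $\langle\bx,\exteff(\sigma)\rangle\le-1$, which is preserved when $\bx$ is multiplied by a positive integer, so a rational solution can be turned into an integral one by clearing denominators.

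For \impl{dec-not-separable}{dec-no-solution} I would argue the contrapositive. Assuming every $\bA_\pi\bx\le\bb$ has a solution $\bx_\pi\in\N^n$, I would mimic \cref{solution-implies-separator} inside $\KM{\cV}$ instead of a B\"uchi automaton: a word of $L_\pi(\cV)$ has a run in $\KM{\cV}$ whose edge set is eventually exactly $\pi$; its terminal part factors into $\pi$-cycles, hence into primitive $\pi$-cycles, all of which $\bx_\pi$ weights $\le 0$, yielding Condition~a.)\ of $S_{\bx_\pi,k_\pi}$ for $k_\pi$ bounded by $|\KM{\cV}|$ times the maximal $\cV$-label length, while $\langle\bx_\pi,\exteff(\sigma)\rangle<0$ lets me regroup that part into blocks of strictly negative $\bx_\pi$-weighted balance, i.e.\ Condition~b.). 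Thus $L_\pi(\cV)\subseteq S_{\bx_\pi,k_\pi}$, so with $X=\{\bx_\pi\mid\pi\in\prof{\cV}\}$ and $k=\max_\pi k_\pi$ one gets $L(\cV)\subseteq\bigcup_{\bx\in X}S_{\bx,k}$, i.e.\ $\limsep{L(\cV)}{D_n}$, and \cref{prefix-independent-limsep} (applicable since $L(\cV)$ is pumpable) gives $\regsep{L(\cV)}{D_n}$.

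For \impl{dec-no-solution}{dec-inseparability-flower} I would take a profile $\pi$ whose system has no solution over $\Q_{\ge 0}^n$, a complete $\pi$-cycle $\sigma$ through a final state $q$ witnessing $\pi$ (so $\inteff(\sigma)\ge\bzero$), and the primitive $\pi$-cycles $\tau_1,\dots,\tau_m$. \cref{farkas-lemma} then gives $\by\in\N^{m+1}$ with $\by^\top\bA_\pi\ge\bzero$ and $\by^\top\bb<0$; since $\bb=(0,\dots,0,-1)$, the last component of $\by$ is $\ge 1$, and $E:=\by^\top\bA_\pi=\sum_i y_i\exteff(\tau_i)+y_{m+1}\exteff(\sigma)\ge\bzero$. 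Realising the multiset of cycles selected by $\by$ as a single cycle at $q$ --- gluing along $\sigma$, which visits every state touched by $\pi$ --- gives a cycle $\mu$ at $q$ that contains all edges of $\pi$, with $\exteff(\mu)=E\ge\bzero$, and (since a cycle in $\KM{\cV}$ returns to its extended configuration) $\inteff(\mu)$ is zero on every non-$\omega$ counter. The finitely many $\omega$-counters on which $\inteff(\mu)$ is negative are exactly those that can still be increased inside $\pi$: an $\omega$-counter of $\pi$'s component that is increased by no primitive $\pi$-cycle has net effect zero on \emph{every} primitive $\pi$-cycle, since otherwise an accepting run with profile $\pi$ would drive it to $-\infty$. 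Hence, by inserting further copies of $\sigma$ and of the primitive $\pi$-cycles that are positive on the offending counters, I would try to repair $\inteff(\mu)$ to be $\ge\bzero$ while keeping $\exteff(\mu)$ collinear with $E$; setting $\alpha:=\mu$, $\beta$ a trivial cycle at $q$, and $\gamma:=\mu^{K}$ for large $K$ then gives a flower, since $\exteff(\alpha\beta)=\exteff(\mu)\ge\bzero$, $\exteff(\alpha\beta\gamma)=(K{+}1)\exteff(\mu)\in\Q\cdot\exteff(\alpha)$, and $\inteff(\alpha\beta\gamma)=(K{+}1)\inteff(\mu)\ge\bzero$. I expect this step to be the main obstacle: turning the Farkas certificate into a single cycle that is simultaneously counter-feasible ($\inteff\ge\bzero$), non-negative on all letter balances ($\exteff\ge\bzero$), and collinear in external effect with $\exteff(\alpha)$ forces a careful distribution of cycle repetitions among the petals --- which is exactly why three petals are needed: $\alpha$ and $\beta$ carry the positive balance, $\gamma$ restores counter-feasibility, and the collinearity condition prevents $\gamma$ from spoiling a flat weighted balance.

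Finally, for \impl{dec-inseparability-flower}{dec-not-separable}, given a flower I would consider for each $k$ the run $\rho(\alpha^{k+1}\beta^{k+1}\gamma^{k+1})^\omega$ in $\KM{\cV}$, with $\rho$ a path from $(q_0,\bzero)$ to $q$, and show its label $w_k$ lies in no $S_{\bx,k}$. If $\langle\bx,\exteff(\alpha)\rangle>0$ or $\langle\bx,\exteff(\beta)\rangle>0$, every suffix of $w_k$ contains an infix $\alpha^{k+1}$ or $\beta^{k+1}$ of $\bx$-weighted balance $>k$, violating Condition~a.); otherwise $\langle\bx,\exteff(\alpha)\rangle\le 0$ and $\langle\bx,\exteff(\beta)\rangle\le 0$, which with $\exteff(\alpha\beta)\ge\bzero$ and $\bx\ge\bzero$ forces both to be $0$, hence $\langle\bx,\exteff(\alpha\beta\gamma)\rangle=0$ by collinearity and so $\langle\bx,\exteff(\gamma)\rangle=0$ as well; then the $\bx$-weighted balance along $w_k$ is bounded, so no suffix decomposes into blocks of strictly negative $\bx$-weighted balance, violating Condition~b.). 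Since $\inteff(\alpha\beta\gamma)\ge\bzero$, the Karp--Miller correspondence lifts this run to an accepting run of $\cV$ whose $\omega$-counters are pumped up a bounded amount within the stem, leaving the periodic suffix untouched; pumpability of $L(\cV)$ is what makes this lift land in $L(\cV)$. The resulting word $w_k'\in L(\cV)$ differs from $w_k$ only in a prefix, and each $S_{\bx,k}$ is closed under such prefix changes (its conditions constrain only the suffix $v$, and the split point may always be pushed further right), so $w_k'$ still escapes every $S_{\bx,k}$. Hence $L(\cV)\not\subseteq\bigcup_{\bx\in X}S_{\bx,k}$ for every finite $X\subseteq\N^n$ and every $k$, so $\limsep{L(\cV)}{D_n}$ fails, and \cref{prefix-independent-limsep} gives $\notregsep{L(\cV)}{D_n}$; this is carried out in detail in \cref{flower-implies-insep}.
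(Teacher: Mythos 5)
Your overall structure (a cycle of three implications) and your arguments for \impl{dec-not-separable}{dec-no-solution} (contrapositive via \cref{solution-implies-separator} applied to $\KM{\cV}$, i.e.\ \cref{x-yields-separator}, combined with \cref{prefix-independent-limsep}) and for \impl{dec-inseparability-flower}{dec-not-separable} (the escape argument of \cref{flower-implies-insep}) are essentially the paper's. One small misattribution in the latter: lifting the run from $\KM{\cV}$ to $\cV$ uses the Karp--Miller pumping property (enabled by $\inteff(\alpha\beta\gamma)\ge\bzero$ and a suitable stem), not pumpability of $L(\cV)$; pumpability enters only through \cref{prefix-independent-limsep}.

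The genuine gap is in \impl{dec-no-solution}{dec-inseparability-flower}. You aim to produce a \emph{single} cycle $\mu$ at a final state, containing all edges of $\pi$, with both $\inteff(\mu)\ge\bzero$ and $\exteff(\mu)\ge\bzero$, and then set $\alpha=\mu$, $\beta$ empty, $\gamma=\mu^K$; this forces both $\exteff(\alpha\beta)\ge\bzero$ and $\inteff(\alpha\beta\gamma)\ge\bzero$ to be carried by one and the same cycle. Such a $\mu$ need not exist when $\bA_\pi\bx\le\bb$ is unsatisfiable, so no ``repair'' can produce it. The paper's own running example refutes the plan: for the profile $\pi_3$ of $\KM{\pump{\cV}}$ the primitive cycles have (counter, letter-balance) effects $(-1,+1)$, $(+1,-1)$ and $(0,-1)$, and any cycle containing all edges of $\pi_3$ decomposes into $z_1,z_2,z_3\ge 1$ copies of these; $\inteff(\mu)\ge 0$ and $\exteff(\mu)\ge 0$ would give $(-z_1+z_2)+(z_1-z_2-z_3)=-z_3\ge 0$, contradicting $z_3\ge 1$. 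Conceptually it must fail: if such a $\mu$ always existed, then, exactly as in the proof of \cref{thm-basic-separators}, pumpability would yield a word in $L(\cV)\cap D_n$, so your construction could never apply to disjoint-but-inseparable instances (\cref{Theorem:CounterExamples}), which are precisely the interesting ones. Your repair step also does not preserve what is needed: inserting primitive cycles that are positive on the offending $\omega$-counters changes the letter balance and can destroy $\exteff(\mu)\ge\bzero$, and ``collinearity with $E$'' is neither maintained by such insertions nor the invariant the flower actually demands. The paper resolves the dilemma by decoupling the two requirements over three nonempty petals, all insertion-variants $\sigma^{\bz}$ of the complete $\pi$-cycle $\sigma$: $\alpha$ corresponds to $\br$, $\beta$ to $\bs$ with $\br+\bs=M\hat{\by}$ (so only $\exteff(\alpha\beta)\ge\bzero$ is guaranteed, while $\inteff(\alpha\beta)$ may be negative), and $\gamma$ to $\bt$ with $\br+\bs+\bt=N\br$ (so $\inteff(\alpha\beta\gamma)=N\inteff(\sigma)\ge\bzero$ and $\exteff(\alpha\beta\gamma)=N\exteff(\alpha)$). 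Note that the flower does \emph{not} require $\exteff(\alpha\beta\gamma)\ge\bzero$ --- in the example it is $-3$. Your closing intuition (``$\alpha$ and $\beta$ carry the positive balance, $\gamma$ restores counter-feasibility'') is exactly right, but your concrete choice ($\beta$ trivial, $\gamma=\mu^K$) places both burdens on one cycle and contradicts it.
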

\subparagraph{The decision procedure}
Before we prove \cref{dec}, let us see how to use it to decide 
separability. 
Given B\"{u}chi VASS $\cV_1$ and $\cV_2$, we 
can compute $\cV$ %
so that $\regsep{L(\cV_1)}{L(\cV_2)}$ if
and only if $\regsep{L(\cV)}{D_n}$, by \cref{one-language-fixed}. 
Then \cref{make-pumpable} tells us that
$L(\pump{\cV})$ is pumpable and we have $\regsep{L(\cV)}{D_n}$ if and only if
$\regsep{L(\pump{\cV})}{D_n}$. Finally, by \cref{dec}, we can check whether
$\regsep{L(\pump{\cV})}{D_n}$ by checking %
the systems
$\bA_{\pi}\bx\le\bb$ for satisfiability: %
If there is
a solution for every $\pi\in\prof{\pump{\cV}}$, then we have separability; otherwise, we have
inseparability.
Since the systems $\bA_{\pi}\bx\le\bb$ are constructed directly from $\KM{\pump{\cV}}$,
we need to explicitly construct the latter.
Therefore our procedure may take Ackermann time, because Karp-Miller graphs can be Ackermann large~\cite[Theorem 2]{mayr1981complexity}.

\begin{example}
  Consider the instance of regular separability where our two inputs are the B\"{u}chi VASS $\cV$ found in \cref{Figure:examples}(left), and another B\"{u}chi VASS accepting the language $D_1$.
  Since we are already in the case of wanting to decide $\regsep{L(\cV)}{D_1}$, we can skip the first step of applying \cref{one-language-fixed}.
  The second step is to apply \cref{make-pumpable} and construct $\pump{\cV}$, which we have already done for this case in \cref{Figure:pumpability}(right).
  
  \begin{figure}[t]
  \begin{center}
    \scalebox{0.85}{
      \begin{tikzpicture}[initial text={},baseline]
        \node (q0) {$((q_0,\omega,0),1)$};
        \node[right=1cm of q0] (q1) {$((q_1,\omega,0),1)$};
        \node[right=1cm of q1] (q2) {$((q_2,\omega,0),1)$};
        \node[right=1cm of q2] (q3) {$((q_2,\omega,\omega),0)$};
        \node[right=1cm of q3] (q4) {$((q_1,\omega,\omega),0)$};
        \node[below=1cm of q1] (p1) {$((q_1,0,0),0)$};
        \node[initial,left=1cm of p1] (p0) {$((q_0,0,0),0)$};
        \node[right=1cm of p1] (p2) {$((q_2,0,0),0)$};
        \node[above=1cm of q1] (r1) {$((q_1,\omega,0),\omega)$};
        \node[left=1cm of r1] (r0) {$((q_0,\omega,0),\omega)$};
        \node[right=1cm of r1] (r2) {$((q_2,\omega,0),\omega)$};
        \node[above=1cm of r2] (s2) {$((q_2,\omega,\omega),\omega)$};
        \node[left=1cm of s2] (s1) {$((q_1,\omega,\omega),\omega)$};
        \path[->] 
          (p0) edge node[left] {$\be_1|\varepsilon$} (q0)
          (p0) edge node[below] {$\bzero|\varepsilon$} (p1)
          (p1) edge node[below] {$\bzero|\varepsilon$} (p2)
          
          (q0) edge node[left] {$\be_1|\varepsilon$} (r0)
          (q0) edge node[below] {$\bzero|\varepsilon$} (q1)
          (q1) edge node[below] {$\bzero|\varepsilon$} (q2)
          (q2) edge node[below] {$-\be_1|a_1$} (q3)
          
          (q3) edge[bend right=10] node[below] {$\bzero|\bar{a}_1$} (q4)
          (q4) edge[bend right=10] node[above] {$\bzero|\varepsilon$} (q3)
          (q3) edge[bend right=30] node[right] {$\be_1|\bar{a}_1$} (s2)
          
          (r0) edge [loop above] node {$\be_1|\varepsilon$} (r0)
          (r0) edge node[below] {$\bzero|\varepsilon$} (r1)
          (r1) edge node[below] {$\bzero|\varepsilon$} (r2)
          (r2) edge node[right] {$-\be_1|a_1$} (s2)
          
          (s1) edge[bend left=10] node[above] {$\bzero|\varepsilon$} (s2)
          (s2) edge[bend left=10] node[below] {$\bzero|\bar{a}_1$} (s1)
          (s2) edge[loop above] node {$\begin{aligned}
              -\be_1|a_1 \\
              \be_1|\bar{a}_1
            \end{aligned}$} (s2)
        ;
      \end{tikzpicture}
    }%
  \end{center}
  \vspace{-0.5cm}
  \caption{The Karp-Miller graph $\KM{\pump{\cV}}$ of the B\"{u}chi VASS $\pump{\cV}$ from \cref{Figure:pumpability}(left).
  Here we did not mark the final states to reduce visual clutter; every state that includes $q_1$ is considered final.
  For similar reasons, we also only labelled the edges of the graph with letters and counter effects.
  The proper edge labels would be full transitions of $\pump{\cV}$, including source and target state.
  \label{Figure:KMpump}}
\end{figure}
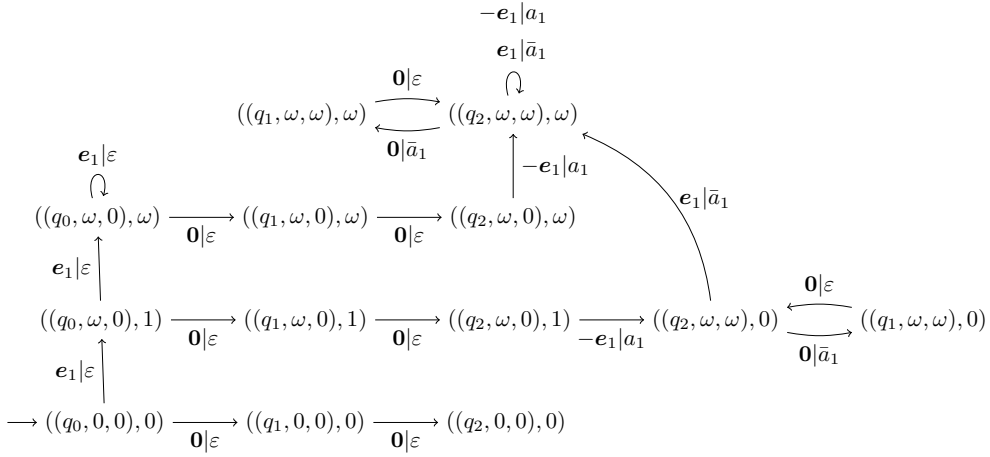
  
  Now we have to construct $\KM{\pump{\cV}}$, which can be found in \cref{Figure:KMpump}.
  There are two relevant parts of $\KM{\pump{\cV}}$, where we can find cycles involving a final state:
  (1) the part on the right, where the state tuples contain $\omega$ twice and the counter value is $0$, and (2) the part at the top with triple $\omega$s.
  In the following we will only write down the states, as the counter values and the other contents of the state tuples will be clear from context.
  
  For part (1), the B\"{u}chi VASS $\pump{\cV}$ has only a single profile $\pi_1$ containing only the two edges between $q_1$ and $q_2$.
  Since each $\pi_1$-cycle $\sigma$ only consists of repetitions of the primitive cycle $q_1 \xrightarrow{\bzero|\varepsilon} q_2 \xrightarrow{\bzero|\bar{a}_1} q_1$, we have $\varphi(\sigma) < 0$.
  Therefore the system $\bA_{\pi_1}\bx\le\bb$ trivially has a solution $\bx = 1$.
  
  Regarding part (2), $\pump{\cV}$ has exactly two more profiles: profile $\pi_2$ containing only the two edges between $q_1$ and $q_2$, and profile $\pi_3$, which additionally contains the two loop edges on $q_2$.
  The cycles of $\pi_2$ look almost exactly like the cycles of $\pi_1$ with only the counter values of the nodes in the graph being different.
  Thus, the system $\bA_{\pi_2}\bx\le\bb$ is the exact same system as $\bA_{\pi_1}\bx\le\bb$ and also trivially has a solution $\bx = 1$.
  
  For $\pi_3$, we have as primitive cycles both the loop edges on $q_2$ as well as the primitive cycle of $\pi_2$.
  To obtain a complete $\pi_3$-cycle, we simply insert both loops into the $\pi_2$-cycle at $q_2$ forming the cycle $\sigma = q_1 \xrightarrow{\bzero|\varepsilon} q_2 \xrightarrow{-\be_1|a_1} q_2 \xrightarrow{\be_1|\bar{a}_1} q_2 \xrightarrow{\bzero|\bar{a}_1} q_1$.
  Since $\sigma$ contains all primitive cycles exactly once without overlap, it is automatically complete.
  We also have $\delta(\sigma) = 0$, meaning $\sigma$ is a cycle witnessing $\pi_3$ as a profile.
  Thus these cycles lead to the following system of inequalities $\bA_{\pi_3}\bx\le\bb$:
  \begin{alignat*}{2}
    1\cdot x_1 &\leq 0 &\qquad &\text{loop 1}\\
    -1\cdot x_1 &\leq 0 &\qquad &\text{loop 2} \\
    -1\cdot x_1 &\leq 0 &\qquad &\pi_2\text{-cycle} \\
    -1\cdot x_1 &\leq -1 &\qquad &\text{complete }\pi_3\text{-cycle}
  \end{alignat*}
  Clearly this system has no solution; the first and last inequality are contradictory.
  Therefore we conclude regular inseparability for $L(\cV)$ and $D_1$.
  
  While not part of the decision procedure, for an inseparable instance of the problem as we have here, we can also find an inseparability flower in $\KM{\pump{\cV}}$.
  In this case we have
  $\alpha = q_1 \xrightarrow{\bzero|\varepsilon} q_2 \xrightarrow{\bzero|\bar{a}_1} q_1$,
  $\beta = q_1 \xrightarrow{\bzero|\varepsilon} q_2 \xrightarrow{-\be_1|a_1} q_2 \xrightarrow{-\be_1|a_1} q_2 \xrightarrow{\bzero|\bar{a}_1} q_1$, and
  $\gamma= q_1 \xrightarrow{\bzero|\varepsilon} q_2 \xrightarrow{\be_1|\bar{a}_1} q_2 \xrightarrow{\be_1|\bar{a}_1} q_2 \xrightarrow{\bzero|\bar{a}_1} q_1$.
  This selection of cycles meets all the requirements of a flower:
  $\delta(\alpha\beta\gamma) = 0$, $\varphi(\alpha\beta) = 0$, and $\varphi(\alpha\beta\gamma) = -3 = 3 \cdot \varphi(\alpha)$.
\end{example}

\subparagraph{Inseparability flowers disprove separability}
The remainder of this section is devoted to proving \cref{dec}.
The implication \impl{dec-not-separable}{dec-no-solution} follows by
applying \cref{solution-implies-separator} to $\KM{\cV}$, viewed as a B\"{u}chi automaton; see
\cref{x-yields-separator}.  For
\impl{dec-inseparability-flower}{dec-not-separable},  we employ
\cref{prefix-independent-limsep}:
\begin{proposition}\label{flower-implies-insep}
	If $L(\cV)$ is pumpable and $\KM{\cV}$ has an
	insep.\ flower, then $\notregsep{L(\cV)}{D_n}$.
\end{proposition}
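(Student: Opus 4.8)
The plan is to go through \cref{prefix-independent-limsep}: since $L(\cV)$ is pumpable, $\regsep{L(\cV)}{D_n}$ is equivalent to $\limsep{L(\cV)}{D_n}$, so it suffices to refute the latter. For this I would prove the sharper statement that for every $k\in\N$ there is a word $w_k\in L(\cV)$ with $w_k\notin S_{\bx,k}$ for \emph{every} $\bx\in\N^n$; this defeats every finite union $\bigcup_{\bx\in X}S_{\bx,k}$ at once, and since $S_{\bx,k}\subseteq S_{\bx,k+1}$ it is enough to handle each $k$ separately. So fix $k$, put $t:=k+1$, let the inseparability flower consist of a final state $q$ with cycles $\alpha,\beta,\gamma$ at $q$ satisfying $\inteff(\alpha\beta\gamma)\ge\bzero$, $\exteff(\alpha\beta)\ge\bzero$ and $\exteff(\alpha\beta\gamma)\in\Q\cdot\exteff(\alpha)$, and let $\rho$ be a path in $\KM{\cV}$ from its initial state to $q$.

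The candidate word $w_k$ will be the label of the run $\rho(\alpha^t\beta^t\gamma^t)^\omega$ of $\KM{\cV}$, which is accepting since it visits $q\in F$ infinitely often. I would lift this run to an accepting run of $\cV$ by the standard Karp--Miller pumping mechanism: the counters carrying $\omega$ at $q$ are pre-loaded to some large value $N$ by iterating, inside $\rho$, the cycles that created those $\omega$-entries (legal because each such cycle has non-negative counter effect), while the non-$\omega$ counters of $q$ need no attention, since each of $\alpha,\beta,\gamma$ returns to $q$ in $\KM{\cV}$ and hence has zero effect on them, so they never dip below a fixed amount that $q$ already provides. Choosing $N$ larger than the worst dip of a single block $\alpha^t\beta^t\gamma^t$ (which is linear in $t$, hence finite for fixed $k$) keeps the $\omega$-counters non-negative forever, because a full block has counter effect $t\cdot\inteff(\alpha\beta\gamma)\ge\bzero$; so only finitely many pumping steps are used. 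The lifted run is accepting, so its label $w_k$ lies in $L(\cV)$ (we may assume, as in the preliminaries, that accepting runs read infinitely many letters, so $\alpha\beta\gamma$ reads at least one letter), and $w_k=p\,z^\omega$ for a finite prefix $p$, where $z$ is the word read along $\alpha^t\beta^t\gamma^t$; note $\varphi(z)=t\cdot\exteff(\alpha\beta\gamma)$.

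It then remains to verify $w_k\notin S_{\bx,k}$ for all $\bx\in\N^n$. For $\bx=\bzero$ this is immediate, since condition~b.) of $S_{\bzero,k}$ is unsatisfiable; so assume $\bx\neq\bzero$ and, for contradiction, $w_k=u.v\in S_{\bx,k}$. As $v$ is ultimately periodic with period word $z$, its $\bx$-weighted partial balances are bounded from below whenever $\langle\bx,\varphi(z)\rangle\ge 0$; but condition~b.) forces the partial balances at the block boundaries of $v=v_0v_1\cdots$ to go to $-\infty$ (each $\langle\bx,\varphi(v_\ell)\rangle$ is a negative integer), so $\langle\bx,\varphi(z)\rangle<0$. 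With $\varphi(z)=t\cdot\exteff(\alpha\beta\gamma)$ and $\exteff(\alpha\beta\gamma)=\lambda\exteff(\alpha)$ this gives $t\lambda\langle\bx,\exteff(\alpha)\rangle<0$, so $\langle\bx,\exteff(\alpha)\rangle\neq 0$. If $\langle\bx,\exteff(\alpha)\rangle>0$, then it is $\ge 1$, so the word read along one occurrence of $\alpha^t$ (an infix of $v$, indeed occurring infinitely often) has $\bx$-weighted balance $t\langle\bx,\exteff(\alpha)\rangle\ge t>k$, contradicting~a.). If $\langle\bx,\exteff(\alpha)\rangle<0$, then $\exteff(\alpha\beta)\ge\bzero$ and $\bx\in\N^n$ give $\langle\bx,\exteff(\beta)\rangle\ge-\langle\bx,\exteff(\alpha)\rangle\ge 1$, and the word read along one occurrence of $\beta^t$, again an infix of $v$, has $\bx$-weighted balance $\ge t>k$, contradicting~a.). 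Either way we obtain a contradiction, so $w_k\notin S_{\bx,k}$, completing the proof.

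The hard part will be the lifting in the second paragraph: making precise that the infinite accepting run of $\KM{\cV}$ becomes an accepting run of $\cV$ whose label is $z^\omega$ up to a finite prefix. One must argue that the $\omega$-counters can be pre-loaded high enough using only finitely many pumping insertions (which hinges on a full block having non-negative effect $t\cdot\inteff(\alpha\beta\gamma)$) and that the remaining counters never obstruct the cycles. By contrast the separator-escaping computation is purely arithmetic and insensitive to the finite prefix; its only delicate point is that a direction $\bx$ orthogonal to $\exteff(\alpha)$ is handled exactly by the third flower condition $\exteff(\alpha\beta\gamma)\in\Q\cdot\exteff(\alpha)$.
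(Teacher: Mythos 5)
Your proposal is correct and follows essentially the same route as the paper: reduce via \cref{prefix-independent-limsep} to refuting $\limsep{L(\cV)}{D_n}$, lift the run $\rho(\alpha^{k+1}\beta^{k+1}\gamma^{k+1})^\omega$ from $\KM{\cV}$ to $\cV$ using $\inteff(\alpha\beta\gamma)\ge\bzero$, and use the flower conditions (with the scalar-multiple condition handling $\bx$ orthogonal to $\exteff(\alpha)$) to show the resulting word escapes every $S_{\bx,k}$. Your case analysis on the sign of $\langle\bx,\exteff(\alpha)\rangle$ is just a reorganization of the paper's trichotomy, and you spell out the standard Karp--Miller lifting in somewhat more detail than the paper does.
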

\begin{proof}
	Suppose there is an inseparability flower $\alpha,\beta,\gamma$ in $\KM{\cV}$ and also $\regsep{L(\cV)}{D_n}$. By
	\cref{prefix-independent-limsep}, there is a $k\in\N$ and a finite set
	$X\subseteq\N^n$ such that $L(\cV)\subseteq\bigcup_{\bx\in X}
	S_{\bx,k}$.
	We claim that for every $\bx\in\N^n$, at least one of the following holds:
	\begin{align}
		&\langle \bx,\exteff(\alpha)\rangle>0,  & &\langle\bx,\exteff(\beta)\rangle>0, & &\text{or}~\langle \bx,\exteff(\alpha\beta\gamma)\rangle=0. \label{rule-out-u}
	\end{align}
	Indeed, if $\langle \bx,\exteff(\alpha)\rangle\le 0$ and
	$\langle\bx,\exteff(\beta)\rangle\le 0$, then
	$\exteff(\alpha\beta)\ge\bzero$ implies that
	$\langle\bx,\exteff(\alpha)\rangle=\langle\bx,\exteff(\beta)\rangle=0$.
	Since $\exteff(\alpha\beta\gamma)=N\cdot\exteff(\alpha)$
	for some $N\in\Q$, we have $\exteff(\alpha\beta\gamma)=\bzero$. This proves the claim.
	Because of \eqref{rule-out-u}, the sequence $\alpha^{k+1}\beta^{k+1}\gamma^{k+1}$ either has an infix $\chi$ with $\langle\bx,\exteff(\chi)\rangle>k$ or we have $\langle\bx,\exteff(\alpha^{k+1}\beta^{k+1}\gamma^{k+1})\rangle=0$.
	Since $\inteff(\alpha^{k+1}\beta^{k+1}\gamma^{k+1})\ge\bzero$, there is a run $\rho$ such that $\rho\alpha^{k+1}\beta^{k+1}\gamma^{k+1}$ is a run in $\cV$.
	Hence, $\rho(\alpha^{k+1}\beta^{k+1}\gamma^{k+1})^\omega$ is a run in $\cV$ whose word cannot belong to $S_{\bx,k}$ for any $\bx\in\N^n$, contradicting $L(\cV)\subseteq\bigcup_{\bx\in X} S_{\bx,k}$.\qedhere
\end{proof}

\subparagraph{Constructing inseparability flowers} 
It remains to show the implication
\impl{dec-no-solution}{dec-inseparability-flower}.  Suppose there is a profile
$\pi\in\prof{\cV}$ whose associated system of inequalities $\bA_\pi\bx\le\bb$ is unsatisfiable. 
By Farkas' Lemma, there exists a $\by\in\N^{m+1}$ such that
$\by^\top\bA_\pi\ge\bzero$ and $\by^\top\bb<0$. From this vector $\by$,
we now construct an inseparability flower in $\KM{\cV}$.

Let $\sigma$ be the complete $\pi$-cycle in $\KM{\cV}$ that was chosen to
construct $\bA_\pi$. 
Let $\tau_1,\ldots,\tau_m$ be the primitive $\pi$-cycles.  
Since $\sigma$ is complete, there is a vector
$\br=(r_1,\ldots,r_m)\in\N^m$ so that $r_1,\ldots,r_m\ge 1$ and
$\eff(\sigma)=r_1\cdot\eff(\tau_1)+\cdots+r_m\cdot\eff(\tau_m)$. 
Moreover, since $\sigma$ contains every edge of $\pi$, we can wlog.\ write
$\sigma=\sigma_0\cdots \sigma_m$ such that between $\sigma_{i-1}$ and
$\sigma_{i}$, $\sigma$ arrives in the initial state of $\tau_i$.
The decomposition allows us to insert further repetitions of the primitive cycles.
For $\bz=(z_1,\ldots,z_m)\in\N^m$ with
$\bz\ge\br$, we define $\sigma^{\bz}$ as
$\sigma_0\tau_1^{z_1-r_1}\sigma_1\cdots \tau_m^{z_m-r_m}\sigma_m$. 
Then 
$\eff(\sigma^{\bz})=z_1\cdot\eff(\tau_1)+\cdots+z_m\cdot\eff(\tau_m)$. 
In particular, for $\bs,\bt\ge\br$, we have $\eff(\sigma^{\bs}\sigma^{\bt})=\eff(\sigma^{\bs+\bt})$.

\newcommand{\trans}{\mathsf{trans}}
Recall that every transition in a Karp-Miller graph is labeled by a VASS transition, and so every transition sequence $\chi$ in $\KM{\cV}$ is labeled by a transition sequence in $\cV$, which we denote by $\trans(\chi)$. 
We now define the transition sequences $\alpha$, $\beta$, and $\gamma$ as $\trans(\sigma^{\bz})$ for suitable vectors $\bz$. 
For $\alpha$, we take $\trans(\sigma)$, the transitions labeling the complete $\pi$-cycle. 
Observe that $\sigma= \sigma^{\br}$. 
We proceed to define $\beta=\trans(\sigma^{\bs})$ and $\gamma=\trans(\sigma^{\bt})$. 
The choice of the vectors $\bs$ and $\bt$ has to meet the requirements on an inseparability flower: $\exteff(\alpha\beta)\ge \bzero$, $\inteff(\alpha\beta\gamma)\ge\bzero$, and $\exteff(\alpha\beta\gamma)\in\Q\cdot\exteff(\alpha)$.

\subparagraph{Step I: Building $\boldsymbol{\beta}$.} 
We will define $\bs$ so that
$\exteff(\alpha\beta)=\exteff(\sigma^{\br}\sigma^{\bs})=\exteff(\sigma^{\br+\bs})\ge\bzero$.
The remaining two requirements (i.e.\ $\delta(\alpha\beta\gamma)\ge\bzero$ and $\varphi(\alpha\beta\gamma)\in\Q\cdot\varphi(\alpha)$) will be ensured with an appropriate choice of~$\bt$ in Step II. Let us now describe how to pick $\bs$.
Recall that $\by$ is the vector from the application of Farkas' Lemma. 
It can be understood as assigning a repetition count $y_i$ to every primitive cycle $\tau_i$ in the profile and a repetition count $y_{m+1}$ to the complete $\pi$-cycle $\sigma$. 
Since $\by^\top\bA_\pi\ge\bzero$, and since our goal is to make  $\exteff(\alpha\beta)$ non-negative, we will use $\by$ to construct a vector $\hat{\by}=(\hat{y}_1,\ldots,\hat{y}_m)\in\N^m$ so that $\exteff(\sigma^{\hat{\by}})=\by^\top\bA_\pi$. 
The right definition is $\hat{y}_i:=y_i+y_{m+1}\cdot r_i$ for $i\in[1,m]$, because
\begin{align*} 
\by^\top\bA_\pi = \sum_{i=1}^m y_i\cdot\exteff(\tau_i)+y_{m+1}\cdot\exteff(\sigma) = 
\sum_{i=1}^m (y_i+y_{m+1}r_i)\exteff(\tau_i)=\exteff(\sigma^{\hat{\by}}).
\end{align*}
We now choose $M\in\N$ such that $\bs=M\cdot\hat{\by}-\br\ge \br$. 
This is possible since
all entries in $\hat{\by}$ are positive,
due to $y_{m+1} > 0$ by $\by^\top\bb<0$, and $r_i > 0$ for all $i$ by definition.
Then we have
	$\exteff(\alpha\beta)=\exteff(\sigma^{\br}\sigma^{\bs})=\exteff(\sigma^{\br+\bs})=\exteff(\sigma^{M\cdot\hat{\by}})=M\cdot\exteff(\sigma^{\hat{\by}})\ge\bzero$. 

\subparagraph{Step II: Building $\boldsymbol{\gamma}$.}
It remains to define $\bt$ so that $\gamma=\trans(\sigma^{\bt})$ satisfies
$\inteff(\alpha\beta\gamma)=\inteff(\sigma^{\br+\bs+\bt})\geq \bzero$ and
$\exteff(\alpha\beta\gamma)\in\Q\cdot\exteff(\alpha)$. 
The idea is to choose $\bt$ %
so that $\br+\bs+\bt$ is a positive multiple of $\br$. 
Such a choice is possible, because $\br$ has positive entries everywhere:
We pick $N\in\N$ such that $\bt:=N\cdot\br-\bs-\br\ge\br$.
Then indeed
	$\inteff(\alpha\beta\gamma)=\inteff(\sigma^{\br+\bs+\bt})=\inteff(\sigma^{N\cdot\br})=N\cdot \inteff(\sigma^{\br})=N\cdot \inteff(\sigma)\ge\bzero$ and
	$\exteff(\alpha\beta\gamma)=\exteff(\sigma^{\br+\bs+\bt})=\exteff(\sigma^{N\cdot\br})=N\cdot\exteff(\sigma^{\br})=N\cdot\exteff(\alpha)$.

\section{One-dimensional B\"{u}chi VASS}\label{sec:one-dim}
Our second contribution is the precise complexity of separability for the $1$-dimensional case.
\begin{restatable}{theorem}{oneDimLowerBound}\label{one-dim-complexity}
  Regular separability for $1$-dimensional B\"{u}chi VASS with binary encoded
	updates is $\PSPACE$-complete.
\end{restatable}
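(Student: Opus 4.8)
For the $\PSPACE$ upper bound I would proceed through the machinery already developed. By \cref{one-language-fixed}, in polynomial time the instance reduces to deciding $\regsep{L(\cV)}{D_1}$, where $\cV$ is a $1$-dimensional B\"uchi VASS over $\Sigma_1$ (here $n=1$, as the dimension of $\cV_2$ is $1$) of size polynomial in the input. Then \cref{make-pumpable} and \cref{dec} tell us that $\notregsep{L(\cV)}{D_1}$ holds iff $\KM{\pump{\cV}}$ contains an inseparability flower. The point of $n=1$ is that $\exteff$ is a single integer, so the condition $\exteff(\alpha\beta\gamma)\in\Q\cdot\exteff(\alpha)$ is vacuous unless $\exteff(\alpha)=0$, in which case it merely requires $\exteff(\alpha\beta\gamma)=0$; in particular a flower may be sought with $\alpha=\beta=\gamma=\sigma$ for a suitable cycle $\sigma$. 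I would first make this concrete as a purely linear criterion: a flower exists iff there is a profile $\pi$ of $\KM{\pump{\cV}}$---a set of edges admitting a cycle through a final state that uses exactly these edges and has $\inteff\ge\bzero$---such that either some primitive $\pi$-cycle has positive letter balance, or all primitive $\pi$-cycles have non-negative letter balance, which by \cref{dec} is exactly unsolvability over $\N$ of the associated system $\bA_\pi\bx\le\bb$.

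The substance of the upper bound is then to decide this criterion in $\PSPACE$ without ever materializing the Karp-Miller graphs, which can be exponentially large under binary updates. Following the outline, for $n=1$ everything collapses onto the $2$-dimensional VASS $\bar\cV$ obtained from $\cV$ by tracking the letter balance of $\Sigma_1$ in a second counter (as in the proof of \cref{make-pumpable}): its two live counters are the counter of $\cV$ and the letter balance. The $\omega$-annotations describing the relevant fragments of $\KM{\bar\cV}$ and $\KM{\pump{\cV}}$ only record which of these two coordinates is unbounded from a given configuration, and that is itself a coverability question on $\bar\cV$, hence a $\PSPACE$ query. Unfolding the product defining $\pump{\cV}$ and lifting the flower run back through the pumpability of \cref{make-pumpable}, the criterion becomes: $\bar\cV$ admits a run with a constant number of phases---a stem reaching a final state $q$ with the second counter kept non-negative (encoding ``stay out of $D_1$''), followed by the three loops $\alpha,\beta,\gamma$ back to $q$---subject to polynomially many linear side-constraints on the phase effects (the sign conditions above, and $\inteff(\alpha\beta\gamma)\ge\bzero$). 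Since \cite{DBLP:journals/jacm/BlondinEFGHLMT21} shows reachability in a $2$-dimensional VASS with binary updates to be in $\PSPACE$ and its reachability relation to be effectively Presburger-definable, one can compose a bounded number of these relations, intersect with the linear side-constraints, and guess the polynomially many relevant states and $\omega$-patterns, all within $\PSPACE$. I expect this translation---verifying that the two nested Karp-Miller constructions of \cref{make-pumpable} and \cref{dec} genuinely reduce, for $n=1$, to a polynomial-size family of $2$-VASS reachability problems with linear constraints---to be the main obstacle.

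For $\PSPACE$-hardness I would reduce from a $\PSPACE$-complete problem on $2$-dimensional VASS with binary updates, e.g.\ reachability~\cite{DBLP:journals/jacm/BlondinEFGHLMT21}; since $\PSPACE=\mathsf{coPSPACE}$, it is enough to make non-reachability correspond to separability. The guiding observation is that a $1$-dimensional B\"uchi VASS together with the fixed partner $D_1$ already provides two counters: the explicit counter of the VASS and a virtual counter given by the letter balance, where a separator---being forced disjoint from $D_1$---acts as a monitor of non-negativity of the virtual counter. Reusing gadgets in the spirit of \cref{Figure:examples}(left), in which every enforced return to a final state lowers the letter balance, one compiles a $2$-VASS instance $\mathcal{W}$ into a $1$-dimensional B\"uchi VASS $\cV$ (plus a B\"uchi automaton for $D_1$) so that $\cV$ admits an inseparability flower---equivalently, produces words escaping every $S_{\bx,k}$---exactly when $\mathcal{W}$ reaches its target. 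The delicate point, emphasized in the paper, is that separators need not detect precisely when the letter balance crosses zero; I would neutralize this by routing every zero-test-like step through the B\"uchi acceptance condition and the $\inteff(\alpha\beta\gamma)\ge\bzero$ requirement of flowers, so that only the sign pattern of loops, not the timing of a zero crossing, determines the answer.
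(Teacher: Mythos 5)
Your upper-bound route follows the paper's in outline (collapse everything onto the two-counter system $\bar{\cV}$, linearize the non-linear flower condition for $n=1$, and invoke \cite{DBLP:journals/jacm/BlondinEFGHLMT21}), but three of your steps would fail as written. First, the reduction of \cref{one-language-fixed} is \emph{not} polynomial under binary updates: it passes through transducers that expand an update $+h$ into the string $a_1^h$, so the intermediate $\cV$ (and any $\bar{\cV}$ built from it) can be exponentially large; the paper has to bypass $\cV$ and construct $\bar{\cV}$ directly from $\cV_1,\cV_2$, re-compressing letter balances into the second binary-encoded counter (\cref{Lemma:CompactVBar}). Second, your finite-run criterion on $\bar{\cV}$ is mis-stated. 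Because $L(\cV)\cap D_1=\emptyset$, the VASS $\bar{\cV}$ has no accepting runs at all, and the flower's loops at the final state typically drive the balance counter down; they exist only at Karp--Miller states whose balance coordinate is $\omega$. A ``stem plus three loops with the second counter kept non-negative'' therefore does not characterize inseparability: the correct characterization (\cref{witness-run-2dim}) needs an explicit pumping phase --- the cycles $\sigma_1,\sigma_2$ and Condition~(1) --- certifying that the balance coordinate really becomes $\omega$, on top of $\exteff(\alpha\beta)\ge 0$, $\inteff(\alpha\beta\gamma)\ge\bzero$ and the degenerate multiple condition. Your aside that a flower ``may be sought with $\alpha=\beta=\gamma$'' is also false: in the paper's own one-dimensional running example the Karp--Miller graph contains a flower, yet no single cycle through a final state has both $\inteff\ge\bzero$ and $\exteff\ge 0$ (every cycle there with non-negative counter effect has strictly negative balance), so restricting to single-cycle flowers is incomplete. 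Finally, the flower characterization presupposes $L(\cV)\cap D_1=\emptyset$, so you additionally need a $\PSPACE$ disjointness check for the original pair; the paper folds both this and the witness run into the constrained-runs problem for $2$-VASS, solved via linear path schemes (\cref{constrained-runs-in-pspace}).

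For the lower bound you only give a plan, and its central step --- compiling a $2$-VASS reachability instance into a $1$-dimensional B\"uchi VASS so that inseparability flowers appear exactly when the target is reached --- is precisely where the difficulty sits; no construction is given, the letter balance cannot simply serve as a faithful second counter with zero tests, and, as you note yourself, a separator is under no obligation to track when the balance crosses zero. The paper's hardness argument is far simpler and complete: reduce from intersection emptiness of finite-word $1$-dimensional VASS coverability languages, which is $\PSPACE$-complete \cite{DBLP:journals/iandc/FearnleyJ15}, pad both languages with $\#^\omega$, and use the fact that for such coverability languages regular separability coincides with disjointness \cite{WSTSRegSep2018}; a separate reduction from bounded one-counter automata shows hardness even when the inputs are promised disjoint. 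As it stands, your hardness direction is not a proof.
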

For the lower bound, we use a simple reduction from the disjointness
problem $L_1\cap L_2\overset{?}{=}\emptyset$ for finite-word languages of
$1$-dim.\ VASS~\cite{DBLP:journals/iandc/FearnleyJ15}. However, we also show
that separability is $\PSPACE$-hard even if the input languages are promised to be
disjoint. See \cref{appendix-one-dim-hardness}.

For the upper bound, we rely on the results in \cref{sec:decidability}, but need a modification. 
There, to simplify the exposition, we first make the input language
pumpable, which may incur an Ackermannian blowup. 
A closer look at the results, however, reveals that we can also check separability
directly on the Karp-Miller graph of $\bar{\cV}$ as defined in Section~\ref{Section:Outline}.
\begin{restatable}{proposition}{flowerInVBarSufficient}\label{flower-in-Vbar-sufficient}
Let $\cV$ be a B\"{u}chi VASS with $L(\cV)\subseteq\Sigma_n^\omega$. Then
$\notregsep{L(\cV)}{D_n}$ if and only if $\KM{\bar{\cV}}$ has an inseparability
flower.
\end{restatable}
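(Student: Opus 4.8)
The plan is to revisit the proof of \cref{dec} and observe that two of the three equivalent conditions can be transplanted from $\KM{\cV}$ (where $\cV = \pump{\cV'}$ is the pumpable Büchi VASS) to $\KM{\bar{\cV}}$, where $\bar{\cV}$ is the $(d+n)$-dimensional Büchi VASS tracking letter balances in $n$ extra counters, \emph{before} making anything pumpable. The key structural insight from \cref{make-pumpable} is that $\pump{\cV}$ is essentially the product of $\cV$ with $\KM{\bar{\cV}}$; in particular the Karp-Miller graph $\KM{\pump{\cV}}$ is tightly linked to $\KM{\bar{\cV}}$, since the additional $n$ counters of $\bar{\cV}$ reaching $\omega$ is exactly what is recorded in the states of $\pump{\cV}$. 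So an inseparability flower in $\KM{\pump{\cV}}$ should correspond to one in $\KM{\bar{\cV}}$, and vice versa, because the letter-balance effect $\exteff$ of a cycle in $\KM{\cV}$ is precisely the effect on the last $n$ counters of the corresponding cycle in $\bar{\cV}$, and a flower's conditions $\inteff(\alpha\beta\gamma)\ge\bzero$, $\exteff(\alpha\beta)\ge\bzero$, $\exteff(\alpha\beta\gamma)\in\Q\cdot\exteff(\alpha)$ are statements purely about these effects together with $\KM$-reachability and visiting a final state.

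Concretely I would proceed in three steps. First, for the ``only if'' direction: assume $\notregsep{L(\cV)}{D_n}$. By \cref{make-pumpable}, $\regsep{L(\cV)}{D_n}$ iff $\regsep{L(\pump{\cV})}{D_n}$, and $L(\pump{\cV})$ is pumpable; by \cref{dec}, $\KM{\pump{\cV}}$ has an inseparability flower $\alpha,\beta,\gamma$ at a final state $q$. Since $\pump{\cV}$ is the product of $\cV$ with $\KM{\bar{\cV}}$, the states of $\KM{\pump{\cV}}$ project to states of $\KM{\bar{\cV}}$ (the extended-configuration coordinate over $\bar{\cV}$), and each edge of $\KM{\pump{\cV}}$ projects to an edge of $\KM{\bar{\cV}}$ carrying the same VASS transition; hence the flower projects to a stem, a final state, and three cycles in $\KM{\bar{\cV}}$. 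These cycles have the same $\inteff$ on the first $d$ counters and, crucially, the same $\exteff$, because $\exteff$ on $\KM{\cV}$ equals the effect on the last $n$ counters of the matching cycle in $\bar{\cV}$; the condition $\inteff(\alpha\beta\gamma)\ge\bzero$ on the first $d$ counters survives, and to make $\alpha\beta\gamma$ enabled as a run of $\bar\cV$ we also need the last $n$ counters' partial sums to stay non-negative, which is where we invoke that along the flower in $\KM{\pump{\cV}}$ the corresponding $\bar{\cV}$-counters have already been raised to $\omega$ (this is exactly the content of the bullet in the discussion after \cref{make-pumpable} relating runs of $\pump{\cV}$ to runs in $\KM{\bar{\cV}}$). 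Thus we get a genuine inseparability flower in $\KM{\bar{\cV}}$.

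Second, the ``if'' direction: suppose $\KM{\bar{\cV}}$ has an inseparability flower $\alpha,\beta,\gamma$ at a final state. I would re-run the proof of \cref{flower-implies-insep}, but now I must produce words in $L(\cV)$ rather than in the pumpable language. The flower conditions let us, for each $k$, sufficiently repeat $\alpha,\beta,\gamma$ to obtain a run in $\bar{\cV}$ (here $\inteff(\alpha\beta\gamma)\ge\bzero$ on all $d+n$ counters and the $\omega$-entries guarantee enabledness) whose label word $w_k$ escapes every $S_{\bx,k}$; since $\bar{\cV}$ just augments $\cV$ with bookkeeping counters, $w_k\in L(\cV)$. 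If $L(\cV)$ were separable from $D_n$, then after making it pumpable via \cref{make-pumpable} we would have $L(\pump{\cV})\subseteq\bigcup_{\bx\in X}S_{\bx,k}$ for some finite $X$ and some $k$ by \cref{prefix-independent-limsep}; but the words $w_k$ (after possibly exchanging a prefix to land in $L(\pump{\cV})$, using \cref{make-pumpable}(2) and that prefix exchange preserves escaping $S_{\bx,k}$ by \cref{prefix-independent-limsep}) contradict this. Hence $\notregsep{L(\cV)}{D_n}$.

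The main obstacle I anticipate is the ``if'' direction's careful handling of the relationship between $L(\cV)$, $L(\pump{\cV})$, and the $S_{\bx,k}$: the flower lives in $\KM{\bar{\cV}}$, which over-approximates $L(\cV)$ only on words whose letter balances have been certified unbounded, so I must ensure the run I build in $\bar{\cV}$ genuinely uses the $\omega$-columns (equivalently, that the repeated cycles occur after the Karp-Miller pumping that raised those columns to $\omega$), and then argue that the resulting $\cV$-words, possibly after a prefix swap into $L(\pump{\cV})$, still escape all $S_{\bx,k}$. The rest is a routine transfer of the effect-algebra identities ($\eff(\sigma^{\bs}\sigma^{\bt})=\eff(\sigma^{\bs+\bt})$, the $\exteff$/extra-counter correspondence) between the two Karp-Miller graphs.
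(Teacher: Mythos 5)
Your first direction (inseparability implies a flower in $\KM{\bar{\cV}}$) is essentially the paper's argument: obtain a flower in $\KM{\pump{\cV}}$ via \cref{dec} and push it through the product structure of $\pump{\cV}$ into $\KM{\bar{\cV}}$; your additional worry about making $\alpha\beta\gamma$ ``enabled as a run of $\bar{\cV}$'' is superfluous, since a flower is a pattern in the Karp--Miller graph and requires no realization as a run of $\bar{\cV}$.

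The converse direction, however, has a genuine gap. For a flower in $\KM{\bar{\cV}}$, the condition $\inteff(\alpha\beta\gamma)\ge\bzero$ constrains only the original $d$ counters of $\cV$, while $\exteff$ denotes the effect on the $n$ balance-tracking counters, and $\exteff(\alpha\beta\gamma)$ is in general strictly negative (this is made explicit in \cref{witness-run-2dim}, where $\inteff(\alpha\beta\gamma)\ge\bzero$ becomes $x_5\le x_8$ and concerns the first counter only). So your parenthetical claim ``$\inteff(\alpha\beta\gamma)\ge\bzero$ on all $d+n$ counters'' is not available, and the infinite run of $\bar{\cV}$ you propose to build by repeating the cycles typically does not exist: on such a run every balance counter would stay non-negative forever, i.e.\ its word would lie in $D_n$, whereas in the relevant case the flower passes through $\omega$-entries in the balance coordinates precisely because no such accepting run of $\bar{\cV}$ exists (this is the observation made after \cref{witness-run-2dim}). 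What you actually need are accepting runs of $\cV$, obtained from the $\KM{\bar{\cV}}$-paths by compensating the $\omega$-coordinates through pumping, together with a language to which \cref{prefix-independent-limsep} applies; your appeal to \cref{make-pumpable}(2) to ``exchange a prefix to land in $L(\pump{\cV})$'' is also unjustified, since that item only states the inclusions $L(\pump{\cV})\subseteq L(\cV)\subseteq L(\pump{\cV})\cup P_k$ and provides no prefix-exchange from $L(\cV)$ into $L(\pump{\cV})$. The paper sidesteps all of this by transferring the flower the other way: every cycle of $\KM{\bar{\cV}}$ is, by construction of $\pump{\cV}$, a cycle of $\pump{\cV}$ and hence appears in $\KM{\pump{\cV}}$, so $\KM{\pump{\cV}}$ has an inseparability flower; then \cref{dec}, applied to the pumpable language $L(\pump{\cV})$, gives $\notregsep{L(\pump{\cV})}{D_n}$, and \cref{make-pumpable}(3) transfers this to $\notregsep{L(\cV)}{D_n}$. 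Replacing your direct word construction by this transfer repairs the proof.
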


\cref{flower-in-Vbar-sufficient} allows us to phrase inseparability as the existence of a run in
$\bar{\cV}$ that satisfies certain constraints. 
Recall that if $\cV$ is $1$-dimensional and over $\Sigma_1$, then $\bar{\cV}$ has two counters the second of which tracks the letter balance.
\begin{corollary}\label{witness-run-2dim}
	Let $\cV$ be a $1$-dimensional B\"{u}chi VASS with
	$L(\cV)\subseteq\Sigma_1^\omega$ and $L(\cV)\cap D_1=\emptyset$. Then
	$\notregsep{L(\cV)}{D_1}$ if and only if there exist states $p,q,r$ with 
	$r$ final, and a run in $\bar{\cV}$ as follows:
	\begin{minipage}{4cm}
	\footnotesize
	\begin{align*}
		(q_0,0,0)&\autsteps\overbrace{(p,x_1,y_1)\autsteps(p,x_2,y_2)}^{\sigma_1}
		\autsteps\overbrace{(q,x_3,y_3)\autsteps(q,x_4,y_4)}^{\sigma_2} \\
		&\autsteps\lefteqn{\overbrace{\phantom{(r,x_5,y_5)\autsteps(r,x_6,y_6)}}^{\alpha}}(r,x_5,y_5)\autsteps\underbrace{(r,x_6,y_6)\autsteps \lefteqn{\overbrace{\phantom{r,x_7,y_7)\autsteps(r,x_8,y_8)}}^{\gamma}}(r,x_7,y_7)}_{\beta}\autsteps (r,x_8,y_8) 
	\end{align*}
	\end{minipage}
	\qquad
	\begin{minipage}{7cm}
	\footnotesize
	\begin{enumerate}[(1)]
		\item\label{witness-run-make-omega} $y_3<y_4$ and also (a)~$x_3\le x_4$ \\
		or (b)~$x_1<x_2$ and $y_1\le y_2$
		\item\label{witness-run-exteff} $y_5\le y_7$
		\item\label{witness-run-inteff} $x_5\le x_8$
		\item\label{witness-run-multiple} if $y_5=y_6$, then $y_5=y_8$.\vspace{0.1cm}
	\end{enumerate}
	\end{minipage}
\end{corollary}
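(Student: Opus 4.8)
The plan is to invoke \cref{flower-in-Vbar-sufficient}, which already makes $\notregsep{L(\cV)}{D_1}$ equivalent to the existence of an inseparability flower in $\KM{\bar{\cV}}$; it then remains to translate ``flower in $\KM{\bar{\cV}}$'' into the displayed run of $\bar{\cV}$. I would fix the following reading of $\bar{\cV}$: since $\cV$ is $1$-dimensional, $\bar{\cV}$ has two counters, coordinate~$1$ being the original counter of $\cV$ (so that $\inteff$ of a transition sequence of $\KM{\bar{\cV}}$ is its effect on coordinate~$1$) and coordinate~$2$ tracking the letter balance (so that $\exteff$ is its effect on coordinate~$2$). The first thing to record is that, since $L(\cV)\cap D_1=\emptyset$, every inseparability flower of $\KM{\bar{\cV}}$ sits at an extended configuration whose second coordinate is $\omega$: if it were finite, then $\alpha,\beta,\gamma$ would all have letter balance~$0$, so along $\rho(\alpha\beta\gamma)^\omega$ (with a suitable stem $\rho$) the letter balance would stay bounded and non-negative, and that word would lie in $L(\cV)\cap D_1$. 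With this in hand the equivalence reduces to matching the Karp-Miller pumping that produces this $\omega$ with the cycles $\sigma_1,\sigma_2$ of the run.

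\textbf{From a run to a flower.}
Given a run of the displayed shape, I would mimic it in $\KM{\bar{\cV}}$ and use condition~(1) to force coordinate~$2$ to $\omega$ at $r$: in case~1(a), $\sigma_2$ strictly increases coordinate~$2$ without decreasing coordinate~$1$, so $\KM{\bar{\cV}}$ raises coordinate~$2$ right after $\sigma_2$; in case~1(b), $\sigma_1$ first raises coordinate~$1$ to $\omega$, after which $\sigma_2$ --- now unconstrained on coordinate~$1$ --- raises coordinate~$2$. Repeating $\alpha\beta\gamma$, whose effect $x_8-x_5$ on coordinate~$1$ is non-negative by~(3), then (after an easy split on whether $x_5<x_8$) brings us to an extended configuration at $r$ --- either $(r,\omega,\omega)$, or an $(r,x_5,\omega)$ respected by all three cycles --- at which $\alpha,\beta,\gamma$ are genuine cycles of $\KM{\bar{\cV}}$. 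The three flower conditions then follow by direct calculation: $\inteff(\alpha\beta\gamma)=x_8-x_5\ge 0$ by~(3); $\exteff(\alpha\beta)=y_7-y_5\ge 0$ by~(2); and $\exteff(\alpha\beta\gamma)=y_8-y_5$ lies in $\Q\cdot(y_6-y_5)=\Q\cdot\exteff(\alpha)$, which is automatic when $y_6\ne y_5$ and, when $y_6=y_5$, follows from $y_8=y_5$, guaranteed by~(4). So a flower exists and \cref{flower-in-Vbar-sufficient} yields $\notregsep{L(\cV)}{D_1}$.

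\textbf{From a flower to a run.}
Conversely, I would start from an inseparability flower $\alpha,\beta,\gamma$ at a final extended configuration $(r,\bmm)$ of $\KM{\bar{\cV}}$, so $\bmm(2)=\omega$ by the observation above. Following a path from the root of the Karp-Miller tree to $(r,\bmm)$, I would locate the step at which coordinate~$2$ was set to $\omega$; by the Karp-Miller construction this happens because some cycle at some state $q$ has strictly positive effect on coordinate~$2$, and at that point coordinate~$1$ either did not decrease along that cycle (giving case~1(a)) or had already been raised to $\omega$ by an earlier cycle at a state $p$ with strictly positive effect on coordinate~$1$ (case~1(b)). Lifting the connecting subpaths and these cycles back to transition sequences of $\bar{\cV}$ --- which is possible since finite paths of $\KM{\bar{\cV}}$ are labelled by transition sequences of $\bar{\cV}$ --- produces $\sigma_1$ at $p$, $\sigma_2$ at $q$, and $\alpha,\beta,\gamma$ appended at $r$, i.e.\ a run of the displayed shape; conditions~(1)--(4) then hold, with~(1) recording the two branches of the $\omega$-introduction and (2)--(4) being the flower conditions rewritten in terms of the coordinate differences $x_i,y_i$.

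\textbf{Main obstacle.}
The hard part will be the Karp-Miller bookkeeping, most delicately in ``flower $\Rightarrow$ run'': one has to pin down exactly which coordinate is raised to $\omega$ at which step and in which order, and lift an $\omega$-introducing cycle of $\KM{\bar{\cV}}$ back to a concrete cycle of $\bar{\cV}$ --- whose counters may temporarily dip --- so that the resulting dichotomy is precisely the one in conditions 1(a)/1(b). A secondary subtlety, in ``run $\Rightarrow$ flower'', is to make sure $\alpha,\beta,\gamma$ are cycles of $\KM{\bar{\cV}}$ and not merely of $\bar{\cV}$; this is exactly what forces the extra pumping of $\alpha\beta\gamma$, and one must verify (using $x_5\le x_8$, and in the boundary case where coordinate~$1$ cannot be pushed to $\omega$ also $L(\cV)\cap D_1=\emptyset$) that it always works.
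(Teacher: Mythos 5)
Your proposal is correct and follows essentially the same route as the paper: invoke \cref{flower-in-Vbar-sufficient}, observe that disjointness from $D_1$ forces the flower's second coordinate to be $\omega$, and then translate between the flower conditions and conditions (1)--(4) of the displayed run, with the Karp--Miller acceleration bookkeeping (and the $\sigma_1/\sigma_2$ pumping) handled exactly as the paper sketches. One tiny remark: in the ``run $\Rightarrow$ flower'' boundary case where coordinate~$1$ is never accelerated, the absence of accelerations already forces $x_5=x_6=x_7=x_8$, so $(r,x_5,\omega)$ is preserved by all three cycles and no appeal to $L(\cV)\cap D_1=\emptyset$ is needed there.
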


Observe that an inseparability flower in $\KM{\bar{\cV}}$
must carry $\omega$ in the second coordinate, meaning the letter balance is unbounded.  Otherwise, it would yield an
accepting run of~$\bar{\cV}$, which cannot exist because $L(\cV)\cap
D_1=\emptyset$. 
If the flower has $\omega$ in the second
coordinate, we can construct a \emph{finite} run as above. 
The
cycles $\sigma_1$ and $\sigma_2$ plus  Condition~\labelcref{witness-run-make-omega} ensure that indeed the second coordinate
becomes $\omega$. %
Condition~\labelcref{witness-run-exteff} is
$\exteff(\alpha\beta)\ge\bzero$. 
Condition~\labelcref{witness-run-inteff} says
$\inteff(\alpha\beta\gamma)\ge\bzero$. Finally, to express
$\exteff(\alpha\beta\gamma)\in\Q\cdot\exteff(\alpha)$, note that for
integers $a\in\Q\cdot b$ iff $b=0$ implies
$a=0$. 
Condition~\labelcref{witness-run-multiple} expresses that
$y_6-y_5=0$ implies $y_8-y_5=0$.

In order to apply \cref{witness-run-2dim} for deciding
$\regsep{L(\cV_1)}{L(\cV_2)}$ for $1$-dim.\ B\"{u}chi VASS $\cV_1,\cV_2$ with
binary counter updates, we would like to follow the approach for the general case
and use \cref{one-language-fixed} to first construct $\cV$ so
that $\regsep{L(\cV_1)}{L(\cV_2)}$ if and only if $\regsep{L(\cV)}{D_1}$. 
From $\cV$, we would then construct the $2$-dimensional B\"uchi VASS $\bar{\cV}$ that tracks the letter balance, and on $\bar{\cV}$ we would then check the conditions of Corollary~\ref{witness-run-2dim}.
The problem is that, under binary updates, the intermediary $\cV$ may become exponentially large.
We use the fact that also $\bar{\cV}$ has binary counters available.
This allows us to directly construct a compact variant of $\bar{\cV}$:
\begin{restatable}{lemma}{compactVBar}\label{Lemma:CompactVBar}
	Given $1$-dim.\ B\"uchi VASS $\cV_1,\cV_2$ with binary updates, there is a a $1$-dim.\ B\"uchi VASS $\cV$
  with $L(\cV_1)\cap L(\cV_2)=\emptyset$ iff $L(\cV)\cap D_1=\emptyset$, $\regsep{L(\cV_1)}{L(\cV_2)}$ iff $\regsep{L(\cV)}{D_1}$,
  and we can construct in time polynomial in $|\cV_1|+|\cV_2|$ the
  $2$-dim.\ B\"uchi VASS $\bar{\cV}$ (binary updates).
\end{restatable}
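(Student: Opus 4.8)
The plan is to revisit the reduction underlying \cref{one-language-fixed} and carry it out without ever materialising the intermediary $\cV$. Recall that this reduction builds $\cV$ (over $\Sigma_n$ with $n=1$, i.e.\ over $\Sigma_1$) essentially as a product of $\cV_1$ and $\cV_2$: a run of $\cV$ guesses a common word $w$, follows an accepting run of $\cV_1$ on $w$ while keeping $\cV_1$'s counter, simultaneously follows a run of $\cV_2$ on $w$, and --- in place of the guessed letters of $w$ --- emits, for each traversed $\cV_2$-transition with counter update $\delta\in\Z$, the block $a_1^{\delta}$ if $\delta\ge 0$ and $\bar{a}_1^{-\delta}$ otherwise. (Acceptance of $\cV$ asks the $\cV_1$- and the $\cV_2$-components to accept simultaneously, a generalized B\"uchi condition turned into a plain one by the usual round-robin gadget.) Since the emitted letter balance at any point equals the current value of $\cV_2$'s counter, a word of $L(\cV)$ lies in $D_1$ exactly when the simulated $\cV_2$-run keeps its counter non-negative, i.e.\ is a genuine run; hence $L(\cV)\cap D_1=\emptyset$ iff $L(\cV_1)\cap L(\cV_2)=\emptyset$, and the separability equivalence $\regsep{L(\cV)}{D_1}\iff\regsep{L(\cV_1)}{L(\cV_2)}$ is precisely \cref{one-language-fixed}. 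So a $\cV$ with the first two required properties exists; the only reason it is not polynomial is that a binary-encoded update $\delta=\pm 2^k$ of $\cV_2$ becomes a transition label of length $2^k$.

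The key observation is that $\bar\cV$ --- obtained from $\cV$ by adding one counter tracking the $\Sigma_1$-letter balance and relabelling every transition by $\varepsilon$ --- does not inherit this blow-up, because $\bar\cV$ is permitted binary counter updates. By the remark above, the new counter of $\bar\cV$ is just $\cV_2$'s counter. Moreover every transition of $\cV$ emits a \emph{monotone} block (a word in $a_1^{*}\cup\bar{a}_1^{*}$), so collapsing the block $a_1^{\delta}$ (resp.\ $\bar{a}_1^{-\delta}$) into the single binary update $+\delta$ (resp.\ $-\delta$) on the new counter is faithful: for $a_1^{\delta}$ this is clear, and for $\bar{a}_1^{-\delta}$ the VASS enabledness condition, that the new counter has value at least $-\delta$, is exactly the condition that every prefix of the block keeps the letter balance non-negative (the balance drops by exactly $1$ per step), which is in turn exactly the condition that $\cV_2$'s counter survives the update. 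Consequently $\bar\cV$ is literally the $2$-dimensional product VASS of $\cV_1$ and $\cV_2$: its control is the product of the two control graphs together with the round-robin gadget, counter $1$ is $\cV_1$'s counter, counter $2$ is $\cV_2$'s counter, the $\Sigma$-labels of $\cV_1$ and $\cV_2$ are synchronised internally, and all output is suppressed. All of this is polynomial in $|\cV_1|+|\cV_2|$ and computable in polynomial time, since the inputs are $1$-dimensional with binary updates and synchronising the finitely many, explicitly given transition words is polynomial; the updates of $\bar\cV$ just concatenate the (already binary) updates of the two inputs.

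It remains to note that the collapse just described turns $\cV$ into exactly $\bar\cV$ as defined in \cref{Section:Outline}, so $\KM{\bar\cV}$ is the same graph regardless of whether one keeps the exponential labels or not; hence \cref{flower-in-Vbar-sufficient} and \cref{witness-run-2dim} apply verbatim to the compact $\bar\cV$. The main obstacle in writing this out is the faithfulness of the block collapse --- in particular the $\bar{a}_1^{-\delta}$ case, where one must check that replacing a monotone down-block by a single binary decrement changes neither which prefixes stay non-negative nor which runs are enabled --- together with verifying that the round-robin B\"uchi gadget and the label synchronisation stay polynomial; the remaining steps are routine bookkeeping.
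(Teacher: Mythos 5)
Your overall strategy is the same as the paper's: rather than materialising the exponentially large $\cV$ of \cref{one-language-fixed}, you construct the $2$-dimensional $\bar{\cV}$ directly, exploiting that the only source of blow-up is the translation of a binary update $\delta$ of $\cV_2$ into a block $a_1^{\delta}$ or $\bar{a}_1^{-\delta}$, which in $\bar{\cV}$ collapses back into a single binary update on the balance counter. This is exactly the compression argument in the paper's proof (which routes through the transducer construction of \cref{vass-vs-transducers} rather than an explicit product, but the content is the same), and your observations about faithfulness of the collapse and about $\KM{\bar\cV}$ being unchanged are fine.

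There is, however, a genuine soundness gap in the way you re-derive the underlying $\cV$: a $\cV_2$-transition with update $\delta=0$ emits $a_1^0=\varepsilon$ in your encoding. Consequently, a pair of accepting runs of $\cV_1$ and $\cV_2$ on a common word $w$ whose $\cV_2$-run eventually uses only zero-update transitions yields a run of $\cV$ whose emitted label is a \emph{finite} word, hence contributes nothing to $L(\cV)\subseteq\Sigma_1^\omega$. Both claimed equivalences can then fail: take $\cV_1=\cV_2$ to be a one-state B\"uchi VASS accepting $b^\omega$ with all updates zero; then $L(\cV_1)\cap L(\cV_2)\neq\emptyset$ and in particular $\notregsep{L(\cV_1)}{L(\cV_2)}$, but your $\cV$ satisfies $L(\cV)=\emptyset$, so $L(\cV)\cap D_1=\emptyset$ and $\regsep{L(\cV)}{D_1}$. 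For the same reason you cannot simply cite \cref{one-language-fixed} for your product: that lemma is proved for the transduction $T$ with $L(\cV_2)=TD_1$, and with your encoding the induced transduction only satisfies $TD_1\subseteq L(\cV_2)$, which is not enough for \cref{movetrans}. The paper forestalls exactly this problem in the proof of \cref{vass-vs-transducers} by translating the zero update into $a_1\bar{a}_1$ rather than $\varepsilon$, so that every infinite run emits an infinite word; the fix costs nothing in your setting, since $\varphi(a_1\bar{a}_1)=0$ means the compact $\bar{\cV}$ (and hence the polynomial-time bound) is unchanged, only the correctness argument for the uncompressed $\cV$ needs the padding. With that amendment your argument goes through.
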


\subparagraph{Detecting constrained runs in $2$-VASS}
It remains to check for the existence of runs in $\bar{\cV}$ as described in
\cref{witness-run-2dim}, and to check whether $L(\cV_1)\cap L(\cV_2)=\emptyset$.
Both of these problems reduce to what we call the
\emph{constrained runs problem for $2$-VASS}.
Recall that \emph{Presburger arithmetic} is the first-order theory of $(\N,+,<,0,1)$.
We will use the existential fragment to express conditions on counter values
of VASS like the ones from \cref{witness-run-2dim}.
The \emph{constrained runs problem} is the following:
\begin{description}
	\item[Given] A $2$-dim.\ VASS $\cV$ (with updates encoded in binary), a
		number $m\in\N$, states $q_1,\ldots,q_m$ in $\cV$, a
		quantifier-free Presburger formula
		$\psi(x_1,y_1,\ldots,x_m,y_m)$, and $s,t\in[1,m]$, $s\le t$.
	\item[Question] Does there exist a run
		$(q_0,0,0)\autsteps(q_1,x_1,y_1)\autsteps\cdots\autsteps(q_m,x_m,y_m)$
		that visits a final state between $(q_s,x_s,y_s)$ and
		$(q_t,x_t,y_t)$ and satisfies
		$\psi(x_1,y_1,\ldots,x_m,y_m)$?
\end{description}
\Cref{Lemma:CompactVBar} and \Cref{witness-run-2dim} imply that if $L(\cV_1)\cap
L(\cV_2)=\emptyset$, then $\regsep{L(\cV_1)}{L(\cV_2)}$ reduces to the
constrained runs problem on $\bar{\cV}$. 
Moreover, checking $L(\cV_1)\cap L(\cV_2)=\emptyset$
reduces via a product construction to checking emptiness of a $2$-VASS. 
Such a $2$-VASS has an accepting run iff $(q_0,0,0)\autsteps(q,x,y)\autsteps(q,x',y')$ with $(x,y)\le(x',y')$ and $q$ final. 
Hence, this problem 
also reduces to the constrained runs problem for $2$-VASS. We thus need to show:
\begin{restatable}{proposition}{constrainedRunsInPSPACE}\label{constrained-runs-in-pspace}
	The constrained runs problem for $2$-VASS is solvable in $\PSPACE$.
\end{restatable}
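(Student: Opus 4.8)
The plan is to reduce the constrained runs problem for $2$-VASS to reachability in a $2$-VASS and then invoke the known complexity bounds for the latter. First I would observe that a run $(q_0,0,0)\autsteps(q_1,x_1,y_1)\autsteps\cdots\autsteps(q_m,x_m,y_m)$ that visits a final state between positions $s$ and $t$ can be encoded as a single run in a product construction: we take $m+1$ disjoint copies $\cV^{(0)},\ldots,\cV^{(m)}$ of $\cV$ (with an extra bit recording whether a final state has been seen in the right segment), wired together so that control passes from $\cV^{(i-1)}$ to $\cV^{(i)}$ exactly at a state named $q_i$, with the final-state-bit flipped on once $s\le i\le t$ and a final state is entered. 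The resulting machine is still a $2$-VASS, but has size polynomial in $m$ and in $|\cV|$ (note $m$ is given in unary as an explicit list of states, so this is genuinely polynomial). The remaining difficulty is the Presburger constraint $\psi(x_1,y_1,\ldots,x_m,y_m)$, which refers to the \emph{intermediate} counter values, not just the final ones. To handle this, I would augment the machine with a small amount of auxiliary bookkeeping: at the moment control leaves copy $\cV^{(i)}$, the counter values $(x_i,y_i)$ are exactly the current counters, so we can ``probe'' them there. But a plain $2$-VASS cannot store $2m$ numbers, so instead I would push the evaluation of $\psi$ to the very end using the following trick, as in \cite{DBLP:journals/jacm/BlondinEFGHLMT21}: since $\psi$ is quantifier-free, it is a Boolean combination of linear (in)equalities over the $x_i,y_i$; guess the truth value of each atom up front, conjoin the corresponding linear constraints, and note that each atom $\sum a_i x_i + \sum b_i y_i \sim c$ becomes a linear constraint on the sequence of \emph{increments} of the run, which — because the $x_i,y_i$ are themselves partial sums of increments — is again a linear constraint on the total multiset of transitions fired, i.e. on the Parikh image of the run restricted to each segment. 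So the question becomes: does there exist a run of a fixed $2$-VASS of polynomial size whose segment-wise Parikh image satisfies a given system of linear inequalities?

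The second step is to turn that into plain coverability/reachability. For a $2$-VASS, a run exists from $(q_0,0,0)$ to some configuration $(q,x,y)$ with $q$ final iff there is such a run with $x,y$ bounded — but more usefully, I would use the characterization that a $2$-VASS has an accepting (Büchi-style, but here finite) run of the required shape iff there are short witnessing runs, appealing to the fact, established in \cite{DBLP:journals/jacm/BlondinEFGHLMT21}, that reachability in binary-encoded $2$-VASS is in $\PSPACE$, and moreover that reachability relations of $2$-VASS are effectively semilinear with descriptions of at most exponential magnitude but definable in $\PSPACE$. Concretely: the reachability set $\{(x,y) : (q_0,0,0)\autsteps(q,x,y)\}$ for each control state $q$ is an effectively computable semilinear set, and membership in it — as well as the existential Presburger queries we layered on top — can be decided in $\PSPACE$ by the results of that paper. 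Chaining the $m+1$ segments together multiplies the number of such semilinear queries by a polynomial factor, which keeps us in $\PSPACE$. To incorporate the ``visits a final state between positions $s$ and $t$'' requirement, I would, as sketched above, bake it into the control structure so it becomes just another reachability condition rather than a separate constraint.

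Putting it together, the algorithm is: (i) build the polynomial-size product $2$-VASS $\cV'$ with the segmentation and final-state bit hard-wired; (ii) rewrite $\psi$ as an existential Presburger formula over the segment-wise Parikh/counter data, guessing atom truth values nondeterministically (this only adds polynomially many bits to the configuration or can be handled by a case split whose number of branches is exponential but each branch is a polynomial-size problem, still fine for $\PSPACE$); (iii) use the $\PSPACE$ procedure of \cite{DBLP:journals/jacm/BlondinEFGHLMT21} for reachability and reachability-relation queries in binary $2$-VASS to decide the resulting conjunction. Since $\PSPACE$ is closed under the polynomially many nondeterministic guesses and conjunctions involved, the whole procedure runs in $\PSPACE$.

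I expect the main obstacle to be the careful encoding in step (ii): making sure that a quantifier-free Presburger formula referring to the \emph{intermediate} counter valuations $(x_i,y_i)$ — which are not directly readable off the final configuration of a $2$-VASS — can be reduced to reachability-type queries without blowing up the dimension beyond two. The key realization that makes this work is that the intermediate valuation at the boundary between segment $i$ and segment $i+1$ \emph{is} a final-configuration-style quantity of the run of $\cV^{(i)}$ in isolation, so one decomposes the existential query along the segmentation and applies semilinearity of $2$-VASS reachability relations to each piece; the global constraint $\psi$ then becomes a Presburger constraint linking the (semilinear) parameters of the $m+1$ pieces, which is an existential Presburger satisfiability question of polynomial size over sets with $\PSPACE$-describable semilinear representations, hence decidable in $\PSPACE$. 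Everything else — the product construction, the final-state-bit, threading the states $q_1,\ldots,q_m$ — is routine.
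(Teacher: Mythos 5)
Your plan is essentially the paper's proof: both guess the Boolean skeleton of $\psi$ to turn it into a system of linear (in)equalities over the intermediate counter values, and then invoke the linear-path-scheme results of Blondin et al.\ for binary $2$-VASS segment by segment. The one step to tighten is the last one: the per-segment reachability relations cannot be written down as polynomial-size semilinear descriptions (there may be exponentially many linear-path schemes), and the segments share boundary variables, so instead of ``polynomially many semilinear queries'' one should nondeterministically pick a single polynomial-length LPS for each of the $m+1$ segments and solve one joint polynomial-size integer linear system combining the LPS non-negativity constraints with the guessed constraints from $\psi$; its small solutions bound all intermediate configurations exponentially, so they can be guessed and verified in $\PSPACE$ --- which is exactly how the paper concludes.
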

For \cref{constrained-runs-in-pspace}, we show that if there is a constrained run,
then there is one with at most exponential counter values along the way.
For this, we use methods from~\cite{DBLP:journals/jacm/BlondinEFGHLMT21}.

\subparagraph{Complexity in higher dimension} 
We leave open two natural questions: (i)~What is the complexity of regular
separability for B\"{u}chi $d$-VASS, for each $d\ge 2$? (ii)~What is the complexity
of regular separability for B\"{u}chi VASS (where the dimension is part of the
input)?

Given that the regular separability and the disjointness problem usually (but
not always~\cite{Kopczynski16,ThinniyamZetzsche2019a}) coincide regarding
decidability, we expect the complexity of regular separability to be $\PSPACE$
in every fixed dimension $d$ and $\EXPSPACE$ in general. 
The lower bounds follow from
\cref{one-dim-complexity} for fixed $d$ and from~\cite{WSTSRegSep2018} (because disjointness is $\EXPSPACE$-complete~\cite{Esp98a,lipton1976reachability}). However, it is not clear how to show the upper bounds.

The clearest obstacle is that inseparability flowers involve a non-linear
condition: The requirement
$\exteff(\alpha\beta\gamma)\in\Q\cdot\exteff(\alpha)$ is not expressible in
Presburger arithmetic. There are several generic results providing $\EXPSPACE$
upper bounds for detecting particular types of runs in
VASS~\cite{DBLP:journals/jcss/Demri13,DBLP:journals/ijfcs/AtigH11,DBLP:conf/mfcs/BlockeletS11}.
However, the numerical properties directly expressible there are confined to
Presburger arithmetic. 
The only reason we could obtain the $\PSPACE$ upper bound for $d=1$ is that the non-linear condition degenerates into a linear condition in dimension one: It is equivalent to ``$\exteff(\alpha\beta\gamma)=0$ or $\exteff(\alpha)\ne 0$''.

\label{beforebibliography}
\newoutputstream{pages}
\openoutputfile{main.pg}{pages}
\addtostream{pages}{\getpagerefnumber{beforebibliography}}
\closeoutputstream{pages}

\bibliography{references.bib}

\appendix
\section{Proof Details for Overview}\label{appendix-problem}
\subsection{Proof of Part 1 of \cref{Theorem:CounterExamples}}
Here we proof the first part of \cref{Theorem:CounterExamples}, which is the following:
\begin{theorem}\label{counter-example-trivial-regsep}
  There are B\"uchi VASS languages $L_1$, $L_2$ with $L_1\cap L_2= \emptyset$ and $L_1\not\sep L_2$.
\end{theorem}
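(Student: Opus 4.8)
The plan is to exhibit a concrete witness pair. Take $L_1 := L(\cV)$ for the three-state B\"uchi VASS $\cV$ of \cref{Figure:examples}(left), and $L_2 := D_1$. First I would record that $D_1$ is itself a B\"uchi VASS language: a one-dimensional B\"uchi VASS with a single (final) state, one transition reading $a_1$ with effect $\be_1$ and one reading $\bar a_1$ with effect $-\be_1$, keeps its counter equal to the letter balance, so its accepting runs are exactly the infinite words all of whose prefixes have non-negative balance, i.e.\ the words of $D_1$.

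To prove $L_1 \cap L_2 = \emptyset$ I would analyse accepting runs of $\cV$. Such a run must leave $q_0$ through the unique edge $q_0 \to q_1$ --- after $c \ge 0$ iterations of the $\be_1|\varepsilon$-loop, hence with counter value $c$ --- and then oscillate between $q_1$ and $q_2$, visiting $q_1$ infinitely often. Split the part after $q_0$ into \emph{excursions} of the form $q_1 \to q_2 \autsteps q_1$. If the $j$-th excursion uses the $a_1$-loop $p_j$ times and the $\bar a_1$-loop $r_j$ times, then its letter-balance effect is $p_j - r_j - 1$ (the $-1$ from the $q_2 \to q_1$ transition) and its effect on the counter is $r_j - p_j$. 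Since the counter of a run is always non-negative, evaluating it right after the $j$-th excursion gives $c + \sum_{i\le j}(r_i - p_i) \ge 0$, so the letter balance of the corresponding prefix of the label equals $\sum_{i\le j}(p_i - r_i) - j \le c - j$. For $j > c$ this is negative, so the label has a prefix of negative balance and lies outside $D_1$; since the run was arbitrary, $L(\cV) \cap D_1 = \emptyset$.

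For $\notregsep{L(\cV)}{D_1}$, suppose for contradiction that $R = L(\cA)$ is a regular separator, where $\cA$ is a B\"uchi automaton with $n$ states. A direct construction shows $u_n := (a_1^n \bar a_1^{n+1})^\omega \in L(\cV)$: pump the counter to $n$ at $q_0$, and then in each round read $a_1^n$ by $n$ uses of the $-\be_1|a_1$-loop, $\bar a_1^n$ by $n$ uses of the $\be_1|\bar a_1$-loop, and one more $\bar a_1$ via $q_2 \to q_1$; the counter returns to $n$ after each round and $q_1$ is visited infinitely often. Hence $\cA$ has an accepting run $\rho$ on $u_n$. I would split $\rho$ into consecutive finite segments, one per occurrence of $a_1^n\bar a_1^{n+1}$; in the $\ell$-th segment, the part reading $a_1^n$ traverses $n+1$ states of $\cA$, so by pigeonhole it contains an $a_1$-labelled cycle of some length $d_\ell \in [1,n]$. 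Inserting one extra copy of that cycle into each segment yields an accepting run of $\cA$ (no final-state visit is destroyed) on the word $w' := a_1^{n+d_1}\bar a_1^{n+1}\, a_1^{n+d_2}\bar a_1^{n+1}\cdots$. Because every $d_\ell \ge 1$, the balance after the $\ell$-th $\bar a_1^{n+1}$-block equals $\sum_{i\le\ell}(d_i - 1) \ge 0$ and within a block it never drops below such values, so $w' \in D_1$ --- contradicting $L(\cA) \cap D_1 = \emptyset$. (If the uniform word $(a_1^m\bar a_1^{n+1})^\omega$ from the overview is wanted, insert $L/d_\ell$ copies in segment $\ell$, with $L$ the $\mathrm{lcm}$ of the finitely many values $d_\ell$; this gives $m = n + L > n$.)

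The step I expect to be the main obstacle is the pumping argument just described. A single application of the usual B\"uchi pumping lemma only lengthens one block of $u_n$, leaving the remaining blocks with letter balance $-1$ each, so the resulting word still escapes $D_1$; the crucial point is that the segments of $\rho$ can be pumped \emph{independently and simultaneously}, each via its own repeated state inside its $a_1^n$-part, so that every block grows at once and the global letter balance stays non-negative. The disjointness part is comparatively routine, but depends on correctly pairing the unavoidable $-1$ of letter balance per excursion against the non-negativity invariant of the counter.
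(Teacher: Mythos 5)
Your proof is correct and follows essentially the same route as the paper: the same witness pair $L(\cV)$ (Figure~1, left) and $D_1$, disjointness by playing the $-1$ letter balance incurred at each return to $q_1$ against counter non-negativity, and inseparability by pumping inside the $a_1^n$-blocks of $(a_1^n\bar a_1^{n+1})^\omega$ to land in $D_1$. Your only deviations are presentational: a direct telescoping-sum version of the disjointness argument, and block-wise (possibly non-uniform) pumping, which if anything handles more carefully the uniformity point the paper's sketch glosses over with a single exponent $m$.
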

\begin{proof}\label{proof-counter-example}
  We choose $L_1=L(\cV)$, where $\cV$ is the B\"{u}chi VASS in \cref{Figure:examples}(left), and $L_2=D_1$, the Dyck language.
  We claim that each $w\in L(\cV)$ can be written as $w=uv_1v_2\cdots$ with $\varphi_1(v_\ell)<0$ for every $\ell\in\N$. This clearly implies $L(\cV)\cap D_1=\emptyset$.
  Suppose $w\in L(\cV)$.
  Note that on $q_2$, reading $a_1$ decrements the counter and reading $\bar{a}_1$ increments the counter. Thus, from a configuration $(q_2,x)$, a word $v$ read in $q_2$ can have balance $\varphi_1(v)$ at most $x$. And moreover, if $\varphi_1(v)>0$, then this decreases the counter. Furthermore, in order to visit $q_1$, the balance has to drop once.
  Therefore, between any two (not necessarily successive) visits to the final state $q_1$, one of the following holds:
  (i)~the counter strictly decreases or (ii)~the input word $v$ satisfies $\varphi_1(v)<0$.
  Since $q_1$ is visited infinitely often, we can decompose $w=uv_1v_2\ldots$ such that 
  after reading $v_\ell$, we are in $(q_1,x_\ell)$ and we have $x_1\le x_2\le\cdots$.
  Then ``(i)'' cannot happen for any $v_\ell$ and thus we have $\varphi_1(v_\ell)<0$ for every $\ell$.
  Hence, the claim is proven.
  
  It remains to show $\notregsep{L(\cV)}{D_1}$. Towards a contradiction, suppose
  there is a B\"{u}chi automaton $\cA$ with $n$ states such that $L(\cV)\subseteq
  L(\cA)$ and $L(\cA)\cap D_1=\emptyset$. Note that $\cV$ accepts
  $(a_1^n\bar{a}_1^{n+1})^\omega$: We drive up the counter to $n$ in $q_0$ and
  then read each $a_1^n\bar{a}_1^{n+1}$ in a loop from $q_1$ to $q_1$.  However,
  a run of $\cA$ must cycle on some non-empty infix of $a_1^n$ and thus, for some
  $m>n$, also accept $w=(a_1^{m}\bar a_1^{n+1})^\omega$.  Since $w\in D_1$, that
  is a contradiction.
\end{proof}
The second part of \cref{Theorem:CounterExamples} is proven in \cref{appendix-intersection}.

\subsection{Proof of \cref{one-language-fixed}}
\oneLanguageFixed*
For the proof, we need the concept of rational transductions of infinite words.

\subparagraph{Rational transductions} 
A \emph{finite state B\"{u}chi transducer} is a tuple
$\cT = (Q,\Sigma,\Gamma,E,q_0,Q_f)$ consists of a finite set of \emph{states} $Q$,
an input alphabet $A$, an \emph{initial state} $q_0 \in Q$, a set of \emph{final states} 
$Q_f \subseteq Q$, and a \emph{transition relation} $E \subseteq Q \times (\Sigma \cup \{\varepsilon\})\times(\Gamma\cup\{\varepsilon\}) \times Q$. 
For a \emph{transition} $(q,a,b,q') \in E$, we also write 
$q \xrightarrow{(a,b)} q'$.
The transducer $\cT$ \emph{recognizes} the binary relation
$T(\cT) \subseteq \Sigma^\omega\times\Gamma^\omega$
containing precisely those pairs $(u,v)\in\Sigma^\omega\times\Gamma^\omega$, for which there is a transition sequence
\[ q_0 \xrightarrow{(a_1,b_1)} q_1 \xrightarrow{(a_2,b_2)} \ldots  \]
such that $u=a_1a_2\cdots$, $v=b_1b_2\cdots$, and for infinitely many $i\in\N$, we have $q_i\in Q_f$.
We say that a relation $T\subseteq\Sigma^\omega\times\Gamma^\omega$ is \emph{rational} if there is a finite-state B\"{u}chi transducer $\cT$ with $T=T(\cT)$.
For a language $L\subseteq\Gamma^\omega$ and a relation $T\subseteq\Sigma^\omega\times\Gamma^\omega$, we define
\[ TL=\{u\in\Sigma^\omega \mid \exists v\in L\colon (u,v)\in T\}. \] 
Moreover, for relations $T\subseteq\Sigma^\omega\times\Gamma^\omega$ and $S\subseteq\Theta^\omega\times\Sigma^\omega$, we define
\[ S\circ T =\{ (u,w)\in \Theta^\omega\times\Gamma^\omega \mid \exists v\in\Sigma^\omega\colon (u,v)\in S,~(v,w)\in T\}. \]
Using a simple product construction, we observe that for rational transductions
$S$ and $T$, the relation $S\circ T$ is (effectively) rational as well.
By simply exchanging the two input coordinates, one can also show that if $T\subseteq\Sigma^\omega\times\Gamma^\omega$ is rational, then so is
\[ T^{-1}=\{(u,v)\in\Gamma^\omega\times\Sigma^\omega \mid (v,u)\in T\}. \]
The following is also entirely straightforward.
\begin{lemma}\label{vass-vs-transducers}
A language $L\subseteq\Sigma^\omega$ is a B\"{u}chi VASS language if and only
if there exists a rational transduction $T$ and a number $n\in\N$ such that
$L=TD_n$. Moreover, the translation can be performed in exponential time.
\end{lemma}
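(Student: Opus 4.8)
The plan is to prove the two implications separately. For the ``if'' direction I will isolate two facts: (a)~that $D_n$ is itself a B\"uchi VASS language, and (b)~that the class of B\"uchi VASS languages is closed under applying a rational transduction on the left, i.e.\ $TL$ is a B\"uchi VASS language whenever $L$ is and $T$ is rational; combining them gives that $TD_n$ is a B\"uchi VASS language. For the ``only if'' direction I will, given a B\"uchi VASS $\cV$ of dimension $n$, exhibit a B\"uchi transducer $\cT$ with $L(\cV)=T(\cT)\,D_n$, where the key idea is to encode each counter update of $\cV$ as a block of Dyck letters.

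For~(a): take the one-state B\"uchi VASS over $\Sigma_n$ with $n$ counters whose single state $q_0$ is initial and final, with transitions $q_0\xrightarrow{a_i,\,\be_i}q_0$ and $q_0\xrightarrow{\bar a_i,\,-\be_i}q_0$ for $i\in[1,n]$ ($\be_i$ the $i$th unit vector). After reading a prefix $v$ the counter valuation equals $\varphi(v)$, the VASS semantics admits the next step exactly when $\varphi(v)\ge\bzero$, and $q_0$ is always final, so the accepted language is $\{w\in\Sigma_n^\omega\mid\forall v\in\prefix(w)\colon\varphi(v)\ge\bzero\}=D_n$. For~(b): realise $T$ by a finite-state B\"uchi transducer $\mathcal{S}$ and form the synchronised product of $\mathcal{S}$ with a B\"uchi VASS $\mathcal{W}$ recognising $L$: a product transition is $((p,q),a,\delta',(p',q'))$ whenever $p\xrightarrow{(a,b)}p'$ in $\mathcal{S}$ and $q\xrightarrow{b}q'$ in $\mathcal{W}$ with VASS effect $\delta'$ (if $b=\varepsilon$ then $q=q'$, $\delta'=\bzero$), using the standard generalised-B\"uchi flag to require that both components visit final states infinitely often. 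This yields a B\"uchi VASS over $\Sigma$ accepting exactly $\{u\mid\exists v\in L\colon(u,v)\in T\}=TL$, in polynomial time. (For $L=D_n$ the component $\mathcal{W}$ of~(a) has all states final, so the flag is unnecessary.)

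For the ``only if'' direction, $\cT$ simulates a run of $\cV$ on its input $u$: for each transition $(q,w,\delta,q')$ of $\cV$ it reads $w$ letter by letter and emits a word over $\Sigma_n$ with letter balance exactly $\delta$, emitting first all increment letters $a_i^{\delta(i)}$ (for coordinates with $\delta(i)>0$) and then all decrement letters $\bar a_i^{-\delta(i)}$; it may additionally emit padding blocks $a_i\bar a_i$ at any time, so that the output is a genuine infinite word (if $\dim\cV=0$ we add one dummy coordinate, so take $n=\max(\dim\cV,1)$). The final states of $\cT$ are those of $\cV$, using the paper's standing assumption that accepting runs carry infinite input labels. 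With the ``increments first'' ordering, during the block for a transition leading from valuation $\bmm$ to $\bmm'=\bmm+\delta$, every intermediate balance is componentwise at least $\min(\bmm,\bmm')$ and the padding blocks only ever raise balances temporarily; hence an output word produced on input $u$ lies in $D_n$ if and only if all valuations $\bzero,\bmm_1,\bmm_2,\dots$ of the simulated transition sequence are non-negative, i.e.\ iff that sequence is a genuine accepting run of $\cV$. Consequently $u\in L(\cV)$ iff some $v\in D_n$ satisfies $(u,v)\in T(\cT)$, that is $L(\cV)=T(\cT)\,D_n$. Since a transducer emits at most one letter per transition, encoding a binary-coded update $\delta(i)$ costs a chain of $|\delta(i)|$ transitions, so $|\cT|$ and the construction time are exponential in $|\cV|$; together with the polynomial ``if'' direction this gives the ``exponential time'' claim.

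The main obstacle is entirely the bookkeeping in the ``only if'' direction: making the equivalence ``every prefix of $v$ has non-negative balance'' $\Longleftrightarrow$ ``every counter valuation of the simulated run of $\cV$ is non-negative'' precise — which is exactly why the increments-before-decrements ordering inside a block is needed and why padding must be net-zero — and dealing with the minor technicalities that $v$ must be a genuine infinite word and that $D_n$ is empty when $n=0$. Everything else is a routine synchronised-product argument.
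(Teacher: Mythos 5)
Your construction mirrors the paper's: the ``only if'' direction encodes a binary-coded update $+h$ / $-h$ on counter $i$ as a block $a_i^h$ resp.\ $\bar a_i^h$ (whence the exponential blow-up), keeps the output infinite by balance-neutral padding (the paper instead replaces every $\bzero$-update by $a_1\bar a_1$, which is equivalent), and the ``if'' direction is the reverse simulation, which you phrase as ``$D_n$ is a one-state B\"uchi VASS language'' plus closure of B\"uchi VASS languages under rational transductions. Your correctness argument for the only-if direction (block-internal balances lie between consecutive counter valuations, padding is neutral, infinite input labels by the standing assumption) is fine and matches the intended proof.

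However, one step of your ``if'' direction fails as stated. You claim the synchronised product of the transducer $\mathcal{S}$ with the one-state VASS for $D_n$ ``accepts exactly $TL$'', and you even drop the acceptance flag because all states of the $D_n$-component are final. But a pair $(u,v)$ belongs to $T(\mathcal{S})$ only if $v$ is an \emph{infinite} word, and nothing in your product forces the simulated run of $\mathcal{S}$ to emit infinitely many non-$\varepsilon$ output letters. Concretely, take $\mathcal{S}$ with a single final state and the sole loop $(c,\varepsilon)$: then $T(\mathcal{S})=\emptyset$, hence $TD_n=\emptyset$, yet your product B\"uchi VASS accepts $c^\omega$. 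So the product can strictly overapproximate $TD_n$. Note that requiring ``both components visit final states infinitely often'' does not help, since the $D_n$-component may sit in its (final) state forever without consuming an output letter. The repair is the same one-bit bookkeeping the paper already uses for inputs: record in the state whether a non-$\varepsilon$ \emph{output} has occurred since the last accepting visit, and only count visits where it has. (A second, smaller point: your general closure claim tacitly assumes the VASS component has letter-labelled transitions; this is harmless here because the $D_n$-automaton does.) With the infinite-output fix, your proposal coincides with the paper's argument.
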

Here, the automaton underlying an $n$-dim.\ B\"{u}chi VASS is translated into a
transducer with input in $D_n$ and vice-versa.
More precisely, for $h \in \N$ an operation of $+h$ on the $i$th counter is translated
into the string $(a_i)^h$, whereas $-h$ is translated into $(\bar{a}_i)^h$. 
The $\bzero$-vector is hereby translated into $a_1\bar{a}_1$ instead of $\varepsilon$,
to ensure that every infinite run of the B\"{u}chi VASS actually corresponds to
an infinite word in $D_n$.
The only reason why this construction is not feasible in polynomial time, is because we assume that counter operations of B\"{u}chi VASS are encoded in binary.
In particular, the string $(a_i)^h$ mentioned above takes $h$ steps to write down, whereas the size of the B\"{u}chi VASS is only dependent on $\log{h}$.
However, the construction only takes polynomial time, if counter updates are encoded in unary, or if strings such as $(a_i)^h$ are subjected to some exponential compression.

We also need the following lemma. The proof is exactly the
same as the corresponding proof in~\cite{CzerwinskiZetzsche2020a}. The only
difference is that we have infinite instead of finite words.
\begin{lemma}\label{movetrans}
Let $T\subseteq\Sigma^\omega\times\Gamma^\omega$ be rational and $L\subseteq\Sigma^\omega$ and $K\subseteq\Gamma^\omega$. Then $\regsep{L}{TK}$ if and only if $\regsep{T^{-1}L}{K}$.
\end{lemma}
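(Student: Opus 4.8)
The plan is to prove both implications by explicitly constructing a separator on one side out of a given separator on the other side, relying on two standard closure properties: (i)~if $R$ is $\omega$-regular and $S$ is a rational transduction, then the image $SR$ is again $\omega$-regular, and (ii)~$\omega$-regular languages are closed under complement. Property~(i) applies to $T^{-1}$ as well, since $T^{-1}$ is rational whenever $T$ is (as already noted in the excerpt).

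For the direction ``$\regsep{L}{TK}$ $\Rightarrow$ $\regsep{T^{-1}L}{K}$'', suppose $R\subseteq\Sigma^\omega$ is regular with $L\subseteq R$ and $R\cap TK=\emptyset$. I would set $R':=T^{-1}R\subseteq\Gamma^\omega$, which is regular by~(i). Monotonicity of the image operation gives $T^{-1}L\subseteq T^{-1}R=R'$. For disjointness, if some $w$ were in $R'\cap K$, then $w\in K$ and there would be $u\in R$ with $(u,w)\in T$; but then $u\in TK$, contradicting $R\cap TK=\emptyset$. Hence $R'$ separates $T^{-1}L$ from $K$.

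For the converse direction, suppose $R'\subseteq\Gamma^\omega$ is regular with $T^{-1}L\subseteq R'$ and $R'\cap K=\emptyset$. Here the existential preimage $TR'$ need not contain $L$, since a word of $L$ may have no $T$-image at all, so instead I would use the \emph{universal} preimage $R:=\Sigma^\omega\setminus T(\Gamma^\omega\setminus R')$, which is regular by~(i) and~(ii). If some $u\in L$ had a $T$-image $w\notin R'$, then $w\in T^{-1}L\subseteq R'$, a contradiction; hence $L\subseteq R$. And if $u\in R\cap TK$, pick $w\in K$ with $(u,w)\in T$; since $R'\cap K=\emptyset$ we get $w\notin R'$, whence $u\in T(\Gamma^\omega\setminus R')$, contradicting $u\in R$. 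So $R$ separates $L$ from $TK$.

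The only delicate point, and the one I expect to need the most care, is closure property~(i) for transductions of \emph{infinite} words: the synchronized product of a B\"uchi transducer with a B\"uchi automaton for the target language may create accepting runs that consume only finitely many letters on the input side, and these must be discarded so that the resulting automaton defines a genuine subset of $\Sigma^\omega$ (resp.\ $\Gamma^\omega$). This is folklore for $\omega$-rational relations and is in any case already implicitly used elsewhere in this paper (e.g.\ in \cref{vass-vs-transducers}), so I would simply invoke it; everything else is a routine set-theoretic verification that mirrors the finite-word argument of~\cite{CzerwinskiZetzsche2020a}.
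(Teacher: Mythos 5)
Your proof is correct and essentially matches the paper's argument: the forward direction is identical (take $T^{-1}R$ as the separator), and for the converse the paper merely flips both separability relations (complementing the separator, using closure of $\omega$-regular languages under complement) and reapplies the forward direction to the rational transduction $T^{-1}$, which when unfolded produces exactly your separator $\Sigma^\omega\setminus T(\Gamma^\omega\setminus R')$. The closure of $\omega$-regular languages under images of rational transductions that you flag as the delicate point is likewise invoked without proof in the paper, so no gap arises there.
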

\begin{proof}
Suppose $L\subseteq R$ and $R\cap TK=\emptyset$ for some regular
$R$. Then clearly $T^{-1}L\subseteq T^{-1}R$ and $T^{-1}R\cap
K=\emptyset$. Therefore, the regular set $T^{-1}R$ witnesses
$\regsep{T^{-1}L}{K}$. Conversely, if $\regsep{T^{-1}L}{K}$, then
$\regsep{K}{T^{-1}L}$ and hence, by the first direction,
$\regsep{(T^{-1})^{-1}K}{L}$. Since $(T^{-1})^{-1}=T$, this reads
$\regsep{TK}{L}$ and thus $\regsep{L}{TK}$.
\end{proof}
We are now ready to prove \cref{one-language-fixed}. 
\begin{proof}[Proof of \cref{one-language-fixed}]
Given B\"{u}chi VASS $\cV_1$ and $\cV_2$, where $\cV_2$ is $n$-dimensional,
\cref{vass-vs-transducers} allows us to compute in exponential time a rational
transduction $T$ such that $L(\cV_2)=TD_n$. We apply \cref{vass-vs-transducers} again to construct a
B\"{u}chi VASS $\cV$ for $T^{-1}L(\cV_1)$. Then we have 
\[ \regsep{L(\cV_1)}{L(\cV_2)} \iff \regsep{L(\cV_1)}{TD_n} \iff \regsep{T^{-1}L(\cV_1)}{D_n}\iff \regsep{L(\cV)}{D_n}, \]
where the second equivalence is due to \cref{movetrans}.
\end{proof}

\section{Proof Details for Pumpability}\label{appendix-pumping}
Let us formally define the constructions of $\bar{\cV}$ and $\pump{\cV}$.

\begin{definition}
  Let $\cV=(Q, q_0, T, F)$ be a $d$-dimensional B\"{u}chi VASS over $\Sigma_n$.
  Then $\bar{\cV}=(Q, q_0, \bar{T}, F)$ is the $(d+n)$-dimensional B\"{u}chi VASS
  over $\Sigma_n$ with transitions constructed as follows:
  $(q, \varepsilon, (\delta,\varphi(w)), q')\in \bar{T}$ if and only if
  $(q, w, \delta, q') \in T$.
  
  Furthermore, $\pump{\cV}=(\pump{Q}, q_{\mathsf{pump},0}, \pump{T}, \pump{F})$
  is the $d$-dimensional B\"{u}chi VASS over $\Sigma_n$ constructed as follows:
  \begin{itemize}
    \item $\pump{Q} = Q \times (\mathbb{N}\cup\{\omega\})^{d+n}$, i.e. the states of $\KM{\bar{\cV}}$,
    \item $q_{\mathsf{pump},0} = (q_0,\bzero)$,
    \item $\pump{F} = F \times (\mathbb{N}\cup\{\omega\})^{d+n}$, and
    \item $(\pump{q}, w, \delta, \pump{q'}) \in \pump{T}$ if and only if there is a transition
    $(\pump{q}, t, \pump{q'})$ in $\KM{\bar{\cV}}$ labelled by
    $t = (q, \varepsilon, (\delta,\varphi(w)), q')\in \bar{T}$.
  \end{itemize}
\end{definition}

We split the first two parts of \cref{make-pumpable} into
\cref{pump-limsep,pump-inclusion}, which we prove separately.
The third part later follows from \cref{prefix-independent-limsep},
which, in turn, follows from \cref{thm-basic-separators}.

\begin{lemma}\label{pump-limsep}
	$L(\pump{\cV})$ is pumpable.
\end{lemma}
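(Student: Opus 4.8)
The plan is to take $w\in L(\pump{\cV})$ and $k\in\N$, fix an accepting run $\rho$ of $\pump{\cV}$ on $w$, and modify a prefix of $\rho$ so as to push the relevant letter balances past $k$ without ever making a balance negative. We may assume wlog.\ that every transition of $\cV$ carries a label of length at most one (split transitions otherwise; this preserves $L(\cV)$ and we take $\pump{\cV}$ to be built from the split version), so that ``prefix of $w$'' coincides with ``word read by an initial segment of $\rho$''. Recall that $\pump{\cV}$ is essentially $\KM{\bar{\cV}}$ carrying the original $\cV$-transitions: a state of $\pump{\cV}$ records an extended configuration of $\bar{\cV}$, whose last $n$ counters $d+1,\dots,d+n$ track the letter balances $\varphi_1,\dots,\varphi_n$ of the word read so far, and these counters never drop below $0$ while they hold a concrete value. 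Let $I\subseteq[1,n]$ be the set of coordinates on which $\varphi_i$ eventually turns negative on a prefix of $w$. The first step is the key observation: for each $i\in I$ the counter $d+i$ must be raised to $\omega$ somewhere on $\rho$, since as long as it is concrete it equals $\varphi_i$ of the prefix read so far and is therefore forced to stay $\ge 0$, contradicting $i\in I$; and an $\omega$-entry never disappears, so there is a position $p$ beyond which $d+i$ is $\omega$ for all $i\in I$ simultaneously. Write $w=w_0w_1$ with $w_0$ the prefix read by $\rho$ up to $p$, and let $c_p$ denote the state of $\rho$ there.

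The second step exploits how those $\omega$-entries arose. By the Karp-Miller construction, each raising of $d+i$ to $\omega$ is caused by a control-state cycle $\ell_i$ of $\cV$ occurring in $\rho$ before position $p$, whose word $u_i$ has $\varphi_i(u_i)\ge 1$ and whose effect on every coordinate that still holds a concrete value at that moment — in particular on the $d$ counters of $\cV$ — is non-negative. I would then replay $\rho$ up to $p$ but insert, for each $i\in I$, a large number of extra iterations of $\ell_i$ at its first occurrence; by the standard pumping property of Karp-Miller graphs these insertions keep the run of $\pump{\cV}$ valid and land in a state $\widehat c$ that dominates $c_p$ coordinatewise (with $\omega$ as top). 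The word $w_0'$ read along this modified prefix then satisfies $\varphi(w_0')\ge\varphi(w_0)$ and, once enough iterations are used, $\varphi_i(w_0')\ge\max\{\varphi_i(w_0),0\}+k$ for every $i\in I$. Moreover $w_0'$ stays a prefix of a word in $D_n$: on any $\pump{\cV}$-run a balance $\varphi_j$ can turn negative only once $d+j$ has become $\omega$, so by pumping the loops $\ell_i$ early and far enough we keep every balance non-negative throughout $w_0'$ — here the pumping amounts have to be chosen in the order in which the loops occur along $\rho$, larger for earlier loops, so that pumping a later loop does not drag an already-boosted balance back below zero.

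The final step is the splice: since $\widehat c\ge c_p$ and a larger extended configuration of $\KM{\bar{\cV}}$ simulates a smaller one, the suffix of $\rho$ from $c_p$ — an accepting run of $\pump{\cV}$ reading $w_1$ — can be replayed from $\widehat c$, keeping all counters at least as large and still visiting final states infinitely often; hence $w_0'w_1\in L(\pump{\cV})$, which together with the two balance inequalities is precisely pumpability. The main obstacle is the bookkeeping in the second step: organising the Karp-Miller pumping so that it simultaneously (i) lifts every $I$-balance past $\max\{\varphi_i(w_0),0\}+k$, (ii) keeps all balances non-negative throughout $w_0'$ even though pumping one loop may raise further coordinates to $\omega$ and decrease the balances of other coordinates, and (iii) ends in a state dominating $c_p$ so that the splice in the last step goes through.
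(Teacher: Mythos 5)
Your overall plan --- locate the accelerations that turn the balance counters $d+i$, $i\in I$, into $\omega$, and pump the responsible loops to boost the balances past $k$ before splicing the old suffix back on --- is in the right spirit, but it contains one claim that is false in general and that the rest of the argument leans on. You assert that each accelerating loop $\ell_i$ has non-negative effect ``on every coordinate that still holds a concrete value at that moment --- in particular on the $d$ counters of $\cV$''. The Karp--Miller acceleration condition only gives you non-negativity on coordinates that are \emph{concrete in the extended configuration} at that point, and in $\KM{\bar{\cV}}$ the first $d$ coordinates (the original counters of $\cV$) may themselves already be $\omega$ there. On such coordinates $\ell_i$ can have arbitrarily negative effect. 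In $\pump{\cV}$, however, these $d$ counters are genuine VASS counters that must stay non-negative concretely; so inserting a large number of iterations of $\ell_i$ may simply not be a run of $\pump{\cV}$, and your splice step also breaks, because you no longer know that the modified prefix ends with concrete $d$-counter values dominating those at $c_p$ (which is what you need to replay $\rho_1$ ``keeping all counters at least as large''). Note that the same difficulty already infects your bookkeeping item (ii): repairing it means also pumping the loops that accelerated the first-$d$ coordinates, with amounts chosen in the order the accelerations occur --- at which point you are re-proving, by hand, the standard realization property of Karp--Miller graphs. (A smaller, fixable imprecision: iterating $\ell_i$ ``at its first occurrence'' is only a path in $\KM{\bar{\cV}}$ if the extra iterations are taken \emph{after} the acceleration step, since the graph contains no states with larger concrete values in coordinate $d+i$; from the accelerated state the loop is indeed a cycle.)

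The paper avoids all of this by dropping down to the \emph{concrete} VASS $\bar{\cV}$ and invoking exactly that realization property once: for the extended configuration $(q,\bar{\bmm})$ reached at the chosen split point, and for any bound, there is a concrete run $\bar{\rho}'_0$ of $\bar{\cV}$ reaching $(q,\bar{\bmm}')\in\N^{d+n}$ whose values exceed the bound on all $\omega$-coordinates (including $\omega$'s among the first $d$) and are at least the original values elsewhere. Concreteness in $\bar{\cV}$ then gives you everything you were struggling to arrange: the last $n$ counters never go negative, so the new prefix $w_0'$ is a prefix of a word in $D_n$; the first $d$ counters are non-negative and dominate the old configuration, so $\rho'_0\rho_1$ is a run of $\cV$ and, via the trace of $\bar{\rho}'_0$ in $\KM{\bar{\cV}}$, the word $w_0'w_1$ lies in $L(\pump{\cV})$; and choosing the bound as $\max_i\{0,-\varphi_i(w_0)\}+k$ yields the two balance inequalities. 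I recommend restructuring your proof around that single lemma about $\bar{\cV}$ rather than around pumping individual loops inside $\pump{\cV}$.
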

\begin{proof}
	Consider some $w\in L(\pump{\cV})$ and some $k\in\N$.  Let $\rho$ be an
	accepting run of $\cV$ over $w$.  By construction of $\pump{L}$, there
	exists a corresponding run $\bar{\rho}$ in $\KM{\bar{\cV}}$. Let
	$\Omega\subseteq[d+1,d+n]$ be the set of coordinates where the states
	of $\KM{\bar{\cV}}$ carry $\omega$ eventually during $\bar{\rho}$.  Then at
	some point, $\bar{\rho}$ visits an extended configuration $(q,\bar{\bmm})\in
	Q\times\N_\omega^{d+n}$ where all coordinates from $\Omega$ in
	$\bar{\bmm}$ are $\omega$. Decompose
	$\bar{\rho}=\bar{\rho}_0\bar{\rho}_1$ so that $\bar{\rho}_0$ reaches
	$(q,\bar{\bmm})$. Let $\rho=\rho_0\rho_1$  and $w=w_0w_1$ be the
	corresponding decompositions of $\rho$ and $w$.  Then $\rho_0$ reaches
	a configuration $(q,\bmm)\in Q\times\N^d$ in $V$. 

  Let $\ell = \max_{i \in [1,n]}\{0,-\varphi_i(w_0)\} + k$.
	By the construction of Karp-Miller graphs, there exists a run
	$\bar{\rho}'_0$ in $\bar{\cV}$ that reaches a configuration
	$(q,\bar{\bmm}')\in\N^{d+n}$ such that $\bar{\bmm}'(i)\ge\bmm(i)$ for
	$i\in[1,d+n]\setminus\Omega$, and $\bar{\bmm}'(i)>\bmm(i)+\ell$ for $i\in \Omega$.
	Then
	$\bar{\rho}'_0$ corresponds to a run $\rho'_0$ in $\cV$. It reaches a
	configuration $(q,\bmm')$ with $\bmm'\ge\bmm$ and thus $\rho'_0\rho_1$ is
	a run of $\cV$.  It reads a word $w'_0w_1\in \pump{L}$, where $w'_0$ is the prefix
	read by $\rho'_0$. Since $w'_0$ was also read by $\bar{\rho}'_0$ in $\bar{\cV}$, it
  is a prefix of some word in $D_n$, as mandated by the additional counters of $\bar{\cV}$.

	We claim that now $\varphi(w'_0)\ge \varphi(w_0)$ and for every
	$i\in[1,n]$ where $w$ ever becomes negative, we have
	$\varphi_i(w'_0)\ge\max\{\varphi_i(w_0),0\}+k$.  The first condition follows
	from the fact that
	$\varphi_i(w'_0)=\bmm'(d+i)\ge\bmm(d+i)=\varphi_i(w_0)$. For the second
	condition, note that if $w$ ever becomes negative in coordinate $i$,
	then $\bar{\rho}$ must necessarily visit a configuration where in
	coordinate $i$, there is an $\omega$. In particular, we have
	$d+i\in\Omega$ and thus $\varphi_i(w'_0)=\bmm'(d+i)\ge
	\bmm(d+i)+\ell=\varphi_i(w_0)+\max_{i \in [1,n]}\{0,-\varphi_i(w_0)\} + k
  \ge\max\{\varphi_i(w_0),0\}+k$.
\end{proof}

\begin{lemma}\label{pump-inclusion}
	There exists a $k\in\N$ such that $L(\pump{\cV})\subseteq L(\cV)\subseteq  L(\pump{\cV}) \cup P_k$.
\end{lemma}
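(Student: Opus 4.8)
The plan is to prove the two inclusions separately, using the product construction of $\pump{\cV}$ with $\KM{\bar{\cV}}$ and the semantic meaning of $\omega$-entries in the Karp-Miller graph.

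\textbf{The inclusion $L(\pump{\cV})\subseteq L(\cV)$.} This direction is the easy one. Take $w\in L(\pump{\cV})$ witnessed by an accepting run of $\pump{\cV}$. By the definition of $\pump{\cV}$, each transition of this run corresponds to a transition of $\KM{\bar\cV}$, which in turn is labelled by a transition of $\bar\cV$, which by construction of $\bar\cV$ arises from a transition of $\cV$ reading the same word with the same effect on the first $d$ counters. Projecting away the $\KM{\bar\cV}$-component of the states, we get an infinite transition sequence of $\cV$ reading $w$, and since the final states of $\pump{\cV}$ are exactly those whose $Q$-component lies in $F$, this sequence visits $F$ infinitely often. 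It remains to check it is a genuine run of $\cV$, i.e.\ that the first $d$ counters stay non-negative; but a transition is present in $\KM{\bar\cV}$ only if adding its effect keeps all (extended) coordinates non-negative, and on coordinates that never reach $\omega$ the Karp-Miller computation faithfully tracks the true counter value of $\bar\cV$ (hence of $\cV$ on the first $d$ coordinates). So $w\in L(\cV)$.

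\textbf{The inclusion $L(\cV)\subseteq L(\pump{\cV})\cup P_k$.} Here I would take $k$ to be the largest finite number appearing in any extended configuration of $\KM{\bar\cV}$ (this is finite because $\KM{\bar\cV}$ is finite). Take $w\in L(\cV)$ with an accepting run $\rho$ in $\cV$. Lift $\rho$ to a run $\bar\rho$ of $\bar\cV$, which additionally tracks $\varphi(w)$ in the last $n$ coordinates. The Karp-Miller over-approximation property stated in the preliminaries says every finite prefix of $\bar\rho$ is mimicked in $\KM{\bar\cV}$ \emph{as long as the coordinates stay non-negative}; the issue is precisely that the letter-balance coordinates $d+1,\dots,d+n$ of $\bar\cV$ may become negative along $\bar\rho$, which has no counterpart in the non-negative configurations of $\KM{\bar\cV}$ unless the relevant coordinate has already been raised to $\omega$. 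I would distinguish two cases. Case (a): along $\bar\rho$, every time some balance coordinate $d+i$ is about to go negative, the corresponding coordinate has already been raised to $\omega$ in the Karp-Miller run. Then $\bar\rho$ has a full mimicking run in $\KM{\bar\cV}$ (never leaving the non-negative extended configurations), and this run is accepting since $\rho$ is; combining it with $\rho$ via the product gives an accepting run of $\pump{\cV}$ on $w$, so $w\in L(\pump{\cV})$. Case (b): otherwise, let $i\in[1,n]$ and let $v$ be the earliest prefix of $w$ with $\varphi_i(v)<0$ that occurs \emph{before} coordinate $d+i$ is raised to $\omega$ in the Karp-Miller run of the prefix $\bar\rho$ read so far. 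Up to that moment the Karp-Miller run faithfully tracks the value of coordinate $d+i$, namely $\varphi_i$ of the prefix read; so on every proper prefix $u$ of $v$ we have $0\le \varphi_i(u)\le k$ (the upper bound because $\varphi_i(u)$ equals a finite entry of an extended configuration of $\KM{\bar\cV}$, hence is at most $k$). This is exactly the defining condition of $P_{i,k}$, so $w\in P_{i,k}\subseteq P_k$.

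\textbf{Main obstacle.} The routine bookkeeping is fine; the delicate point is making Case (b) precise — namely arguing that \emph{before} the first balance coordinate crosses below zero, the Karp-Miller run of the corresponding prefix is forced to track that coordinate's true value exactly (so that it is bounded by $k$), rather than having jumped to $\omega$ prematurely. This requires using the Karp-Miller construction rule: a coordinate is set to $\omega$ at some node only if an \emph{earlier} configuration on the path has a strictly smaller value in that coordinate while being $\le$ componentwise; one has to observe that if this happened before the crossing, then we would be in Case (a) for that coordinate, contradicting the choice of $v$. Once this is spelled out, both inclusions and the choice of $k$ (the maximal finite entry of $\KM{\bar\cV}$) drop out, matching the informal justification already given after the statement of \cref{make-pumpable} in the main text.
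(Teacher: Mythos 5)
Your proposal is correct and takes essentially the same route as the paper's proof: the same choice of $k$ (the largest finite entry in the states of $\KM{\bar{\cV}}$) and the same key observation that a coordinate that is still finite along the mimicking Karp--Miller path equals the true letter balance and is hence bounded by $k$, so a balance that crosses zero without previously exceeding $k$ puts $w$ into $P_{i,k}$, while otherwise the coordinate has been accelerated to $\omega$ and the run of $\cV$ lifts to $\KM{\bar{\cV}}$ (you argue the contrapositive, the paper argues directly; your flagged ``obstacle'' is resolved exactly as you sketch, taking the earliest blocking coordinate over all $i$). One small correction on the easy inclusion: your KM-based justification for non-negativity of the first $d$ counters breaks down for coordinates that are accelerated to $\omega$, but no such argument is needed, since a run of $\pump{\cV}$ is by definition a run of a $d$-dimensional B\"uchi VASS whose transitions carry the very same updates $\delta$ as the corresponding transitions of $\cV$, so projecting away the $\KM{\bar{\cV}}$-component immediately yields an accepting run of $\cV$.
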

\begin{proof}
	The inclusion $L(\pump{\cV})\subseteq L(\cV)$ is obvious from the
	construction.  For the second inclusion, define $k \in \N$ to be the largest
	number occurring in the states of $\KM{\bar{\cV}}$.  We claim that then
	$L(\cV)\subseteq P_k\cup L(\pump{\cV})$.  Let $w\in L$ be accepted by a
	run $\rho$ in $\cV$ and suppose $w\notin P_{i,k}$ for some $i\in[1,n]$.
	If $u$ is a prefix of $w$, then we say that $i\in[1,n]$ is
	\emph{crossing at $u$} if $\exteff_i(u)<0$ and $\exteff_i(v)\ge 0$
	for every prefix $v$ of $u$.  Observe that whenever $i$ is crossing at
	$u$, then $\exteff_i(v)>k$ for some prefix $v$ of $u$: Otherwise, $w$
	would belong to $P_{i,k}$.  This implies that $\rho$ has a
	corresponding run in $\KM{\bar{\cV}}$: Whenever a counter in
	$[d+1,d+n]$ drops below zero, it must have been higher than $k$ before
	and thus been set to $\omega$. Therefore, $w$ is also accepted by
	$\KM{\bar{\cV}}$ and thus $w\in L(\pump{\cV})$.
\end{proof}

\section{Proof Details for Basic Separators}\label{appendix-basic-separators}

\begin{proof}[Proof of \cref{lem-basic-post-separators}]
  First of all, if $L_\pi(\cA)$ is empty, then Condition (\labelcref{post-separator-exists}) trivially holds. Thus, in the following we assume that $L_\pi(\cA) \neq \emptyset$ and in particular that the final state $q_\pi$ associated with the profile $\pi$ is reachable from $\cA$'s initial state.
  
  We want to set up a system of linear inequalities that has a solution $\bx$ if and only if there is a $k$ such that $L_\pi(\cA) \subseteq S_{\bx,k}$.
  Therefore let us talk about some requirements that are necessary for the above inclusion to hold.
  These requirements will be on cycles of transitions in $\pi$, and we make sure that they can be expressed as linear inequalities.
  
  The cycle $\sigma_\pi$ that contains exactly the transitions in $\pi$ has to be over a word $v_\pi \in \Sigma_n^*$ with $\langle\bx, \varphi(v_\pi)\rangle \leq -1$.
  Otherwise, we can just repeat $\sigma_\pi$ infinitely often and prepend any prefix leading to $q_\pi$ from $\cA$'s initial state, yielding a word that violates requirement~b.) of $S_{\bx,k}$.
  Any primitive cycle $\sigma$ in $\cA$ of transitions in $\pi$ has to be over a word $v$ with $\langle\bx, \varphi(v)\rangle \leq 0$.
  Otherwise we repeat $\sigma_\pi$ infinitely often from some arbitrary prefix reaching $q_\pi$ like before, and then perform $k+1$ insertions of the cycle $\sigma$ into each copy of $\sigma_\pi$.
  This yields a word that violates requirement~a.) of $S_{\bx,k}$, and we can do this for any $k$.
  
  We can now use these requirements on cycles to construct a linear system of inequalities $\bA_\pi\bx \leq \bb$ for $\bx$.
  For the cycle $\sigma_\pi$ corresponding to word $v_\pi \in \Sigma_n^*$, we add the inequality
  \[
  x_1\varphi_1(v_\pi) + x_2\varphi_2(v_\pi) + \ldots + x_n\varphi_n(v_\pi) \leq -1,
  \]
  and for each primitive cycle $\sigma$ of transitions in $\pi$ over a word $v \in \Sigma_n^*$, we add the inequality
  \[
  x_1\varphi_1(v) + x_2\varphi_2(v) + \ldots + x_n\varphi_n(v) \leq 0.
  \]

  Let us argue that the precise choice of the justifying cycle $\sigma_\pi$ does not affect the satisfiability of the system $\bA_\pi\bx \leq \bb$.
  To this end we argue that $\bx \in \N^n$ is a valid solution to the system if and only if (1) all primitive cycles have an $\bx$-weighted balance at most zero, and (2) at least one primitive cycle has a strictly negative $\bx$-weighted balance. 
  Constraint (1) is clearly equivalent to the inequalities added for each primitive cycle.

  For constraint (2), assume that the inequality $\langle\bx, \varphi(v_\pi)\rangle \leq -1$ holds.
  Now observe that any valid choice of $\sigma_\pi$ is a cycle and therefore can be constructed by inserting primitive cycles into each other a finite number of times.
  If all primitive cycles had non-negative $\bx$-weighted balance, then the $\bx$-weighted balance for $\sigma_\pi$ could not be negative.

  For the other implication direction, assume that constraint (2) holds, and let the primitive cycle with negative $\bx$-weighted balance be $\sigma'$.
  Since any valid choice of $\sigma_\pi$ contains each transition in $\pi$, its $|\pi|$-fold repetition $\sigma_\pi^|\pi$ contains each primitive cycle as a (possibly non-contiguous) subsequence.
  Now, if we delete $\sigma'$ from $\sigma_\pi^|\pi$, the remaining (possibly not connected) transition sequences still combine to form a collection of cycles, since $\sigma'$ is a cycle.
  Thus, the summed-up $\bx$-weighted balance of this collection is the sum of $\bx$-weighted balances of primitive cycles, and can therefore be at most zero by Condition (1).
  Then adding $\sigma'$ back in gives us that $\langle\bx, \varphi(v_\pi^|\pi|)\rangle$ is negative, and therefore $\langle\bx, \varphi(v_\pi)\rangle$ is as well.
  Since the letter balance can only have integer-values, and weighting by $\bx \in \N$ does not change this, it follows that $\langle\bx, \varphi(v_\pi)\rangle \leq -1$.
  
  The characterization of solutions $\bx$ via constraints (1) and (2) is clearly independent of $\sigma_\pi$, meaning the precise choice of the latter does not affect satisfiability of the system $\bA_\pi\bx \leq \bb$.
  Furthermore, the restriction of $\bx \in \N^d$ is not a meaningful one, as we can always compute a solution in $\Q^d$ from one in $\N^d$, as we explain below.
  
  Applying Farkas' Lemma (\cref{farkas-lemma}) to this system of equations, we either obtain a vector $\bx \in \Q_{\geq0}^n$ as a suitable solution, or we obtain a vector $\by \in \Q_{\geq0}^{m}$ with $\by^\top \bA_\pi \geq \bzero^\top$ and $\by^\top \bb < 0$, where $m$ is the number of rows of $\bA_\pi$.
  
  In the first case, we can multiply the entries of $\bx$ by their denominators' least common multiple, say $\ell$, to yield a suitable vector $\ell \cdot \bx = \bx' \in \N^n$.
  Furthermore we set $k = |Q_\pi| \cdot h$, where $|Q_\pi| \supseteq \{q_\pi\}$ is the set of all states of $\cA$ adjacent to transitions in $\pi$, and $h$ is the length of the longest word appearing as a transition label of $\cA$.
  With this we can show that $L_\pi(\cA) \subseteq S_{\bx',k}$:
  Each word $w \in L_\pi(\cA)$ decomposes into $uv$ with $v = v_0v_1v_2\cdots$ such that $u$ leads to $q_\pi$ from $\cA$'s initial state and each $v_j$ corresponds to some cycle $\sigma_j$ on $q_\pi$, that contains each transition of $\pi$ at least once.
  Then we have $\langle\bx', \varphi(v_j)\rangle = \ell \cdot \langle\bx, \varphi(v_j)\rangle < \ell \cdot 0 = 0$ as required by $S_{x,k}$:
  each cycle $\sigma_j$ can be obtained by starting with $\sigma_\pi$, which contributes at most $-1$ to this value, and inserting finitely many primitive cycles, which all add at most $0$.
  Moreover, we need to show $\langle\bx', \varphi(f)\rangle \leq k$ for every infix $f$ of $v$.
  Towards a contradiction assume there is at least one infix $f$ of $v$, for which this does not hold.
  Since $f$ fulfils $\langle\bx', \varphi(f)\rangle > |Q_\pi| \cdot h$, and $h$ is the maximum length of a transition label, the transition sequence corresponding to $f$ has to be longer than $|Q_\pi|$.
  Thus this sequence repeats a state and therefore has to contain a primitive cycle.
  However, all such primitive cycles add at most $0$ to the value $\langle\bx', \varphi(f)\rangle$, meaning one could delete the word corresponding to this cycle from $f$ and still fulfil the aforementioned requirement.
  One can repeatedly remove primitive cycles until one obtains a word $f'$ of length $|f'| \leq |Q_\pi| \cdot h$ with $\langle\bx', \varphi(f')\rangle > |Q_\pi| \cdot h$.
  This is a contradiction, therefore infixes such as $f$ cannot exist.
  
  In the other case we also multiply $\by$ with the least common multiple of its entries, say $\ell$, to yield $\ell \cdot \by = \by' \in \N^{m}$.
  Furthermore, each row of the matrix $\bA_\pi$ essentially contains the $\varphi$-values of its corresponding cycle.
  The requirement ${\by'}^\top \bA_\pi = \ell \by^\top \bA_\pi \geq \ell \cdot \bzero^\top = \bzero^\top$ can then be seen as a selection of cycles, whose combined $\varphi$-values are all $0$ or above.
  Moreover, the requirement ${\by'}^\top \bb = \ell \by^\top \bb < \ell \cdot 0 = 0$ ensures that $\sigma_\pi$ is selected at least once, because all other entries of $\bb$ are $0$, meaning we would have $\by^\top \bb = 0$ if $\sigma_\pi$ was not selected.
  This means we can combine all the selected cycles into one large cycle $\sigma'$ via matching states, which is possible because $\sigma_\pi$ visits all states in $Q_\pi$.
  Since the combined $\varphi$-values of all the cycles selected by $\by$ are $0$ or above, we have that $\sigma'$ corresponds to a word $w'$ with $\varphi(w') \geq 0$.
  Finally, $\sigma'$ also contains all transitions of $\pi$ as required, because it contains the cycle $\sigma_\pi$.
\end{proof}

\subparagraph{Regarding \cref{thm-basic-separators}}
We mentioned in \ref{sec:basic-separators} that a single value of $k$ is sufficient for a finite union of basic separators $P_{i,k}$ and $S_{\bx,k}$.
This is because we have $P_{i,k} \subseteq P_{i,k+1}$ and $S_{\bx,k} \subseteq S_{\bx,k+1}$ for each $i \in [1,n], \bx \in \N^n, k \in \N$.
Therefore it suffices to show the following:

{\itshape
  Let $\cA$ be a B\"{u}chi automaton with $L(\cA) = R \subseteq \Sigma_n^\omega$ and $R \cap D_n = \varnothing$. Then there is a finite set $X \subseteq \N^n$ and a number $k \in \N$ such that $R \subseteq \bigcup_{i \in [1,n]} P_{i,k} \cup \bigcup_{\bx \in X} S_{\bx,k}$.
}

Here $R$ is a \emph{separator candidate} in the sense of the original phrasing of the theorem, because it is $\omega$-regular and disjoint from $D_n$.

\begin{proof}[Proof of \cref{thm-basic-separators}]
  We begin by invoking \cref{make-pumpable} on $\cA$ to obtain a B\"{u}chi automaton $\pump{\cA}$, whose language is pumpable, and a number $\ell$ such that $L(\pump{\cA}) \subseteq L(\cA) \subseteq L(\pump{\cA}) \cup P_{\ell}$.
  Using the theorem this way is feasible, because B\"{u}chi automata can be seen as $0$-dimensional B\"{u}chi VASS.
  Since $L(\cA) \cap D_n = \emptyset$ and $L(\pump{\cA}) \subseteq L(\cA)$ we have $L(\pump{\cA}) \cap D_n = \emptyset$.
  It now suffices to show that the basic separators theorem holds for $L(\pump{\cA})$:
  If there are $X,k$ such that $L(\pump{\cA}) \subseteq \bigcup_{i \in [1,n]} P_{i,k} \cup \bigcup_{\bx \in X} S_{\bx,k}$ then $L(\cA) \subseteq L(\pump{\cA}) \cup P_{\ell} \subseteq \bigcup_{i \in [1,n]} P_{i,o} \cup \bigcup_{\bx \in X} S_{\bx,o}$, where $o = \max(k,\ell)$.
  
  Now consider the decomposition $L(\pump{\cA}) = \bigcup_{\pi \in \Pi(\pump{\cA})} L_\pi(\pump{\cA})$. If we can show that each language $L_\pi(\pump{\cA})$ is contained in a finite union of basic separators, then we are done. In the following let us fix a profile $\pi$ of $\pump{\cA}$.
  
  We now invoke \cref{lem-basic-post-separators} on $\pump{\cA}$ and $\pi$. If Condition (\labelcref{post-separator-exists}) holds, then this already yields $\bx,k$ such that $L_\pi(\pump{\cA}) \subseteq S_{\bx,k}$, and we need not concern ourselves with the languages $P_{i,k}$.
  
  In the other case, Condition (\labelcref{post-separator-nonex-cycle}) yields a cycle $c'$ in $\pump{\cA}$ that contains all transitions in $\pi$ and is over a word $w'$ with $\varphi(w') \geq 0$. Since Condition (\labelcref{post-separator-exists}) did not hold, we know that $L_\pi(\pump{\cA})$ is not empty, which means that all states adjacent to transitions of $\pi$ are reachable from $\cA$'s initial state, including the final state $q_\pi$ associated with $\pi$. Let $u'$ be a word that reaches $q_\pi$ from $\pump{\cA}$'s initial state. Then $\tilde{w} = u'(w')^\omega \in L(\pump{\cA})$.
  
  Let $m$ be the lowest value of $\varphi_i$ for any index $i$ and prefix of $\tilde{w}$, formally $m = \min_{i \in [1,n], v \in \prefix(\tilde{w})}\varphi_i(v)$.
  Since $\varphi(w') \geq 0$ we know that $m \in \Z$ is well-defined.
  Moreover, since $L(\pump{\cA})$ is pumpable, there is a decomposition $\tilde{w} = u_0w_1$ and a word $v_0 \in \Sigma_n^*$ such that $v_0w_1 \in L(\pump{\cA})$, $\varphi(v_0) \geq \varphi(u_0)$, and $\varphi_i(v_0) \geq \varphi_i(u_0) + |m|$ for all indices $i$ where there is a $v \in \prefix(\tilde{w})$ with $\varphi_i \geq 0$.
  Then swapping $u_0$ for $v_0$ in $\tilde{w}$ can only increase the $\varphi_i$-values of its prefixes, and in fact all such values that fell below $0$ are now raised above $0$ by choice of $|m|$.
  This means that $v_0w_1 \in L(\pump{\cA}) \cap D_n$, which is a contradiction.
\end{proof}

\section{Proof Details for Decidability}\label{appendix-decidability}
\prefixIndependentLimSup*
\begin{proof}
	The ``if'' direction is trivial. Conversely, let  $\regsep{L}{D_n}$.
	By \cref{thm-basic-separators}, we have $L\subseteq\bigcup_{i\in[1,n]}
	P_{i,k}\cup\bigcup_{\bx\in X} S_{\bx,k}$ for some finite
	$X\subseteq\N^n$ and $k\in\N$. We claim that $L\subseteq
	\bigcup_{\bx\in X}S_{\bx,k}$, which yields $\limsep{L}{D_n}$. Indeed,
	given $u\in L$, pumpability yields a word $u'\in L$ such that $u'\sim
	u$ and $u'\notin P_{i,k}$ for any $i\in[1,n]$. Since $u'\in L\subseteq
	P_k \cup \bigcup_{\bx\in X} S_{\bx,k}$, we conclude $u'\in S_{\bx,k}$.
	Finally, observe that membership in $S_{\bx,k}$ is not affected by
	changing a finite prefix of a word. Therefore, we also have $u\in
	S_{\bx,k}$.
\end{proof}

\begin{restatable}{lemma}{xYieldsSeparator}\label{x-yields-separator}
	Let $\pi\in\prof{\cV}$. If $\bA_\pi\bx\le\bb$ for $\bx\in\N^n$, then
	$L_\pi(\cV)\subseteq S_{\bx,k}$ for some $k\in\N$.
\end{restatable}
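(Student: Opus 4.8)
The plan is to apply \cref{solution-implies-separator} to the Karp--Miller graph $\KM{\cV}$, regarded as a Büchi automaton $\cA$ over $\Sigma_n$: each edge of $\KM{\cV}$ is relabelled by the input word of the VASS transition it carries. Three points need to be checked. First, $\pi$ is also a profile of $\cA$ in the sense of \cref{sec:basic-separators}: by definition of $\prof{\cV}$ there is a cycle in $\KM{\cV}$ containing exactly the edges of $\pi$ and passing through a final state, and this is exactly what a Büchi-automaton profile requires (the extra condition $\inteff(\sigma)\ge\bzero$ is only a strengthening). Note moreover that ``primitive cycle'' means the same in both settings, namely a simple cycle, so the primitive $\pi$-cycles $\tau_1,\dots,\tau_m$ of $\KM{\cV}$ are precisely the primitive cycles of transitions in $\pi$ used to build the Büchi-automaton system for $\cA$.

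Second, I would check that the hypothesis $\bA_\pi\bx\le\bb$ of the lemma --- where $\bA_\pi$ is the Büchi-VASS system built from $\tau_1,\dots,\tau_m$ and a complete $\pi$-cycle $\sigma$ --- implies that the same $\bx\in\N^n$ satisfies the Büchi-automaton system associated with $\cA$ and $\pi$. Here I use the two characterizations already established: the Büchi-VASS system holds iff $\langle\bx,\exteff(\tau_i)\rangle\le 0$ for all $i$ and $\langle\bx,\exteff(\sigma)\rangle<0$, while the Büchi-automaton system holds iff every primitive $\pi$-cycle has $\bx$-weighted balance $\le 0$ and at least one has it strictly negative. Since $\exteff(\sigma)=\sum_i r_i\exteff(\tau_i)$ with all $r_i\ge 1$, the inequality $\langle\bx,\exteff(\sigma)\rangle<0$ together with $\langle\bx,\exteff(\tau_i)\rangle\le 0$ for all $i$ forces $\langle\bx,\exteff(\tau_{i_0})\rangle<0$ for some $i_0$; hence $\bx$ satisfies the Büchi-automaton system, and \cref{solution-implies-separator} yields a $k\in\N$ with $L_\pi(\cA)\subseteq S_{\bx,k}$.

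Third, it remains to see that $L_\pi(\cV)\subseteq L_\pi(\cA)$. Let $\rho$ be an accepting run of $\cV$ with $\prof{\rho}=\pi$. By the over-approximation property of Karp--Miller graphs, $\rho$ has a corresponding accepting run $\bar\rho$ in $\KM{\cV}$ labelled by the same input word; and by the definition of $\prof{\rho}$ adopted in \cref{sec:decidability}, the set of edges occurring infinitely often in $\bar\rho$ is exactly $\prof{\rho}=\pi$. Thus $\bar\rho$ is an accepting run of $\cA$ whose profile is $\pi$, so its label lies in $L_\pi(\cA)$. Therefore $L_\pi(\cV)\subseteq L_\pi(\cA)\subseteq S_{\bx,k}$, which is the claim. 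The main obstacle is purely bookkeeping: making sure that the two slightly different definitions of ``the system $\bA_\pi\bx\le\bb$'' (the Büchi-automaton one of \cref{sec:basic-separators} and the Büchi-VASS one of \cref{sec:decidability}) and of ``(primitive) $\pi$-cycle'' are reconciled, and that the passage from $\cV$ to the relabelled $\KM{\cV}$ preserves both the label and the profile of a run; once these identifications are spelled out, the lemma reduces to \cref{solution-implies-separator}.
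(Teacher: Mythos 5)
Your proposal is correct and follows essentially the same route as the paper's proof, which also regards $\KM{\cV}$ as a B\"{u}chi automaton, notes that the cycle witnessing the VASS profile $\pi$ is an admissible justifying cycle for $\pi$ in the automaton sense, and then applies \cref{solution-implies-separator} to get $L_\pi(\cV)\subseteq L_\pi(\KM{\cV})\subseteq S_{\bx,k}$. The only difference is that you spell out explicitly the bookkeeping the paper leaves implicit (that the two systems $\bA_\pi\bx\le\bb$ have the same solutions and that profiles of runs transfer), which is fine.
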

\begin{proof}
	We regard $\KM{\cV}$ as a B\"{u}chi automaton.  Then, $\pi$ is in
	particular a profile for $\KM{\cV}$. Moreover, the cycle witnessing
	that $\pi$ is a profile is also an admissible cycle for $\pi$ in
	$\KM{\cV}$ as a B\"{u}chi automaton. Thus, \cref{solution-implies-separator} implies 
	$L_\pi(\cV)\subseteq L_\pi(\KM{\cV})\subseteq S_{\bx,k}$ for some $k\in\N$. 
\end{proof}

\section{Proof Details for One-dimensional B\"{u}chi VASS}\label{appendix-one-dim}
\subsection{\cref{one-dim-complexity}: $\PSPACE$-hardness}\label{appendix-one-dim-hardness}
We begin with the straightforward reduction from intersection emptiness of
finite-word languages of $1$-dim.\ VASS. Suppose $L_1,L_2\subseteq\Sigma^*$ are
finite-word languages of $1$-dim.\ VASS with succinct counter updates, and
acceptance by final state. Checking whether the intersection $L_1\cap L_2$ is
empty is $\PSPACE$-complete~\cite{DBLP:journals/iandc/FearnleyJ15}.  We
construct B\"{u}chi VASS for $L_1\#^\omega$ and~$L_2\#^\omega$, where $\#$ is a
fresh letter.  Since $L_1$ and $L_2$ are coverability languages of finitely-branching WSTS,
we know from \cite[Theorem 7]{WSTSRegSep2018} that $L_1\cap L_2=\emptyset$ if
and only if $\regsep{L_1}{L_2}$.  Furthermore, with a fresh letter $\#$, it is
easy to observe that $\regsep{L_1}{L_2}$ if and only if
$\regsep{L_1\#^\omega}{L_2\#^\omega}$. 

\subparagraph{Hardness for disjoint languages}
In the $\PSPACE$-hardness proof above, one can notice that the languages $L_1\#^\omega$
and $L_2\#^\omega$ are regularly separable if and only if they are disjoint.
 In order to further highlight the disparity between
the finite-word case of WSTS languages (where disjointness and separability coincide~\cite{WSTSRegSep2018}) and the infinite-word case, we want to present a proof that $\PSPACE$-hardness already
holds if the input languages are promised to be disjoint: Note that with this promise,
separability in the finite-word case becomes trivial.

Here, we reduce directly from configuration reachability in bounded one-counter automata, which was shown to be PSPACE-hard in \cite{FearnleyJurdzinski13}.

  A bounded one-counter automaton $B=(\cV_B,b)$ consists of a $1$-dim. VASS $\cV_B$ equipped with a bound $b \in \N$ on its counter values.
  This means transitions of $B$ are enabled if and only if they meet the firing restrictions of a VASS and also lead to a configuration $(q,m)$ with $m \leq b$.
  Here, counter values and the bound $b$ are encoded in binary.
  In particular, the size of $B$ is that of the underlying VASS plus $\log{b}$, and the size of a configuration $(q,m)$ is $\log{m}$.
  
  Now we want to construct two $1$-dim.\ B\"{u}chi VASS $\cV_1$ and $\cV_2$, whose languages are always disjoint, but are $\omega$-regular separable if and only if $(q,m)$ is not reachable from $(q_0,0)$ in $B = (\cV_B,b)$.
  Let $T$ be the set of transitions of $B$.
  We use $\Sigma = T \cup \{\#\} \cup \Sigma_1$ as the alphabet for $\cV_1$ and $\cV_2$.
  Let $\cV_{D1}$ be the $1$-dim.\ B\"{u}chi VASS accepting $D_1$, i.e.\ $\cV_{D1}$ consists of a single state, both initial and final, with two loops $\be_1 | a_1$ and $-\be_1 | \bar{a}_1$.
  Furthermore let $\cV_S$ be the $1$-dim.\ B\"{u}chi VASS from \cref{Figure:examples}(left) accepting the language $S$ with $S\cap D_1=\emptyset$ but $\notregsep{S}{D_1}$, which we talked about in \cref{Section:Outline} (see the proof of the first statement in \cref{Theorem:CounterExamples}).
  
  We start constructing $\cV_1$ by using a copy of $\cV_B$ with all states being non-final and every transition $t \in T$ labelled with $t$ itself.
  Then we add a copy of $\cV_{D1}$ with its only state still being final.
  To connect the two copies, we add the transition $-m | \#$ from state $q$ of $\cV_B$ to the initial state of $\cV_{D1}$.
  
  For $\cV_2$ we also start with a copy of $\cV_B$ with all non-final states and transitions labeled with themselves, but we also invert every transition effect, changing it from $z \in \mathbb{Z}$ to $-z$.
  Then we add a new initial state~$q_0'$ with the same outgoing transitions as the initial state of $\cV_B$, except we change their original effects $z$ to $b - z$.
  These new transitions of $q_0'$ are labelled with their original copies from~$T$.
  Additionally, we add a copy of $\cV_S$ with $q_2$ still being a final state.
  The two copies are then connected with a transition $m - b | \#$ from $q$ to the initial state of $\cV_S$.
  If $(q,m) = (q_0,0)$, we also add a transition $0 | \#$ from $q_0'$ to the initial state of $\cV_S$.
  
  Now let $R_1$ be the set of all transition sequences over $T$ that cover $(q,m)$ in $\cV_B$ and do not necessarily respect the bound $b$. 
  Formally, $\rho \in R_1$ if $\rho$ leads from $(q_0,0)$ to $(q,m')$ in $\cV_B$ for some $m' \in \N$ with $m' \geq m$.
  Moreover let $R_2$ be the set of all transition sequences over $T$ that reach $q$ with a counter value below $m$, when respecting the upper bound $b$, but not necessarily the lower bound $0$ of VASS counters.
  Formally, $\rho \in R_2$ if $\rho$ leads from $(q_0,0)$ to $(q,z')$ in $B'$ for some $z' \in \mathbb{Z}$ with $z' \leq m$, where $B' = (\cV',b)$ and $\cV'$ is just $\cV$ interpreted as a $\Z$-VASS.
  We now want to argue that there are languages $L_1, L_2$ such that $L(\cV_1) = R_1\#L_1$ and $L(\cV_2) = R_2\#L_2$.
  
  $L(\cV_1) = R_1\#L_1$ is easy to see, since $\cV_1$ simulates $\cV_B$ faithfully, and can only read $\#$ if a configuration $(q,m)$ or greater is reached.
  For $L(\cV_2) = R_2\#L_2$ observe that before reading $\#$, $\cV_2$ essentially simulates $\cV_B$ with inverted counter values, starting with $b$ instead of $0$.
  Since $\cV$ can go above $b$, this essentially simulates going below $0$ in $B$.
  The $\#$ can also only be read if in $q$ the counter is valued at least $m - b$, which corresponds to at most $m$ before inversion.
  Let us now show that $L(\cV_1) \cap L(\cV_2) = \emptyset$, and furthermore $\regsep{L(\cV_1)}{L(\cV_2)}$ if and only if $(q_0,0) \rightarrow (q,m)$ in $B$.
  
  The configuration $(q,m)$ not being reachable in $B$ is equivalent to $R_1$ and $R_2$ being disjoint.
  In this case $L(\cV_1)$ and $L(\cV_2)$ also have to be disjoint, since the prefixes before the '$\#$' of their words cannot coincide.
  They are however $\omega$-regular separable:
  With $Q$ being the states of $B$, an exponential size B\"{u}chi automaton $A$ with states $Q \times \{0,\ldots,b\}$ can simulate $B$.
  To make $A$ accept all words with prefixes in $R_1$, we add a final state with loops on all input letters, that is reachable by every transition that would make the counter value go above $b$.
  Now $L(\cV_1) \subseteq L(A)$ is clear. %
  Since transition sequences that do not respect the bound $b$ cannot be prefixes of elements of $R_2$, $L(A) \cap L(\cV_2) = \varnothing$ immediately follows.
  Thus, $L(A)$ is an $\omega$ regular separator for $L(\cV_1)$ and $L(\cV_2)$, which also means that they are disjoint.
  
  If $(q,m)$ is reachable in $B$, we have a finite transition sequence $\rho \in R_1 \cap R_2$.
  Reading $\rho$ then leads to $(q,m)$ in $\cV_1$, respectively to $(q,b-m)$ in $\cV_2$.
  Therefore if $\#$ is read right after, the counter value of either B\"{u}chi VASS would be $0$.
  This implies that $\rho{}\#D_1 \subseteq L(\cV_1)$ and $\rho{}\#S \subseteq L(\cV_1)$, as the second component of either VASS would be simulated faithfully after this prefix.
  A regular separator $A$ for $L(N_1)$ and $L(N_2)$ would therefore have to accept all words in $\rho{}\#D_1$ but no words in $\rho{}\#S$.
  By adding a new initial state $q_{init}$ to $A$ and adding all outgoing transitions of states reachable via $\rho{}\#$ in the original $A$ to $q_{init}$, we obtain an $\omega$-regular separator for $D_1$ and $S$.
  This is a contradiction, since we established earlier that these languages are not $\omega$-regular separable as shown in the proof of the first half of \cref{Theorem:CounterExamples}.
  
  It remains to show that $L(\cV_1) \cap L(\cV_2) = \emptyset$ in the case where $(q,m)$ is reachable in $B$.
  For transition sequences $\rho$ over $T$, we know that $\rho \in R_1 \cap R_2$ if and only if $\rho$ reaches $(q,m)$ in $B$.
  Therefore two words $w_1 \in L(\cV_1)$ and $w_2 \in L(\cV_2)$ can only agree on a prefix $\rho{}\#$, if $\rho$ has this property.
  However, in this case $w_1 = \rho{}\#w_1'$ for some $w_1' \in D_1$ and $w_2 = \rho{}\#w_2'$ for some $w_2' \in S$.
  This yields $w_1 \neq w_2$ since $D_1$ and $S$ are disjoint.

\subsection{Proof of \cref{flower-in-Vbar-sufficient}}
\flowerInVBarSufficient*
\begin{proof}
We first invoke \cref{make-pumpable} to obtain $\pump{\cV}$ with $\notregsep{L(\pump{\cV})}{D_n}$ if and only if $\notregsep{L(\cV)}{D_n}$.
Recall that $\pump{\cV}$ was constructed as the product of $\cV$ and $\KM{\bar{\cV}}$, which means that every cycle of $\pump{\cV}$ is also a cycle of $\KM{\bar{\cV}}$.

For the if direction, we get that $\KM{\pump{\cV}}$ contains an inseparability flower by \cref{dec}.
Its three cycles then correspond to three cycles of $\pump{\cV}$, which then also appear in $\KM{\bar{\cV}}$, where they still fulfill the requirements of an inseparability flower.

For the only if direction, observe that $\KM{\pump{\cV}}$ is essentially the product construction of $\KM{\cV}$ and $\KM{\bar{\cV}}$.
Furthermore, any transition sequence permitted by $\bar{\cV}$ is also permitted by $\cV$, as the former only added restrictions in the form of more counters, but did not remove any.
Thus, each cycle of $\KM{\bar{\cV}}$ (including the ones that make up its inseparability flower)
also appears as a cycle in $\KM{\pump{\cV}}$.
This implies that $\KM{\pump{\cV}}$ also has an inseparability flower, and by \cref{dec} it follows that $\notregsep{L(\pump{\cV})}{D_n}$.
\end{proof}

\subsection{Proof of \cref{Lemma:CompactVBar}}
\compactVBar*
\begin{proof}
  Recall that $\bar{\cV}$ is constructed from a B\"{u}chi VASS $\cV$ over alphabet $\Sigma_n$ by adding $n$ additional counters and for each transition $t$ replacing its label $w \in \Sigma_n^*$ with $\varepsilon$ and instead adding to $t$ an effect of $\varphi(w)$ on the $n$ additional counters.
  A precise definition can be found in \cref{appendix-pumping}.
  To now proof \cref{Lemma:CompactVBar} we have to argue that we can modify \cref{one-language-fixed} to directly construct $\bar{\cV}$ instead of $\cV$,
  and that the modified version is feasible in polynomial time.
  
  If we analyze the proof of \cref{one-language-fixed} in \cref{appendix-problem} then we obtain exponential time complexity for this construction.
  The bottleneck here is \cref{vass-vs-transducers}.
  However, we already mentioned in the proof of \cref{vass-vs-transducers}, that its associated complexity shrinks from exponential to polynomial time, if we can somehow compress the exponentially long transition labels that we end up with.
  An adequate compression for this is replacing an exponentially long string $w \in \Sigma_n^*$ by its effect on the letter balance $\delta(w)$, which is exactly what we do when going from $\cV$ to $\bar{\cV}$.
  Since we encode $\delta(w)$ in binary, this is an exponential compression, and therefore the time complexity of constructing $\bar{\cV}$ directly is only polynomial, as required.
  
  Note that for two $1$-dimensional B\"{u}chi VASS as input, we have $n = 1$.
  But our proof shows that constructing $\bar{\cV}$ in polynomial time would still be feasible for B\"{u}chi VASS of arbitrary dimension $n$.
\end{proof}

\subsection{Proof of \cref{constrained-runs-in-pspace}}
\constrainedRunsInPSPACE*
\begin{proof}
  We show that if there is a constrained run,
  then there is one where all counters have at most exponential values along the way.
  For this, we rely on a result from~\cite{DBLP:journals/jacm/BlondinEFGHLMT21}
  about linear path schemes. 
  
  A \emph{linear path scheme} (LPS) for a $2$-dimensional VASS $\cV$ is a
  regular expression of the form $S = \sigma_0\lambda_1\sigma_1\cdots\lambda_n\sigma_n$.
  Its alphabet is the set $T$ of transition of $\cV$, and each infix $\lambda_i$
  corresponds to a cycle of transitions in $\cV$.
  
  Each LPS $S$ induces a reachability relation $\to_S$ over configurations of
  $\cV$, where $(q,x,y)\to_S(q',x',y')$ if and only if there are numbers
  $x_1,\ldots,x_n\in\N$ such that $\sigma_0\lambda_1^{x_1}\sigma_1\cdots \lambda_m^{x_n}\sigma_n$
  is a run of $\cV$ from $(q,x,y)$ to $(q',x',y')$.
  In \cite[Theorem 3.1]{DBLP:journals/jacm/BlondinEFGHLMT21}, it is shown that for
  any two states $q,q'$ in a $2$-VASS $\cV$, there exists a set $\mathcal{S}$
  of LPSs, each of which is of polynomial length, such that for $x,y,x',y' \in \N$,
  $(q',x',y')$ is reachable from $(q,x,y)$ if and only if $(q,x,y)\to_S(q',x',y')$
  for some $S$ from $\mathcal{S}$.
  
  In \cite{DBLP:journals/jacm/BlondinEFGHLMT21}, this yields a $\PSPACE$ algorithm
  for configuration reachability in $2$-dimensional VASS:
  If there is run reaching a certain configuration, then there
  is one of the form $\sigma_0\lambda_1^{x_1}\sigma_1\cdots\lambda_n^{x_n}\sigma_n$
  for some LPS $\sigma_0\lambda_1\sigma_1\cdots\lambda_n\sigma_n$ of polynomial length.
  Now the fact that $\sigma_0\lambda_1^{x_1}\sigma_1\cdots\lambda_n^{x_n}\sigma_n$ is
  a run between two given configurations can be expressed using a set of linear
  inequalities over $x_1,\ldots,x_n$. Since each solvable polynomial-sized set of
  linear inequalities has a solution with at most exponential entries, this yields
  a run where all counters are at most exponential.
  
  We only need to extend this argument from \cite{DBLP:journals/jacm/BlondinEFGHLMT21}
  slightly: First, we want to guess a system of linear inequalities, whose solutions
  would satisfy the Presburger formula $\psi$.
  To this end, we view $\psi$ as a propositional formula by treating each atomic formula
  as a proposition. For Presburger arithmetic, an atomic formula is either an equality
  $t_1 = t_2$ or an inequality $t_1 < t_2$, where $t_1, t_2$ are additive terms over
  variables and/or the constants $0, 1$.
  With this propositional view of $\psi$, we can guess an assignment to its propositions,
  and verify that its a satisfying assignment, feasible in polynomial space.
  If this assignment sets an atomic formula of the form $t_1 = t_2$ to false, this
  means that $t_1 < t_2$ or $t_2 < t_1$ has to hold. Similarly if $t_1 < t_2$ is set
  to false then $t_1 = t_2$ or $t_2 < t_1$ has to hold. In both cases, we simply guess
  one of the two atomic formulas that have to hold instead. With these guesses together
  with the unchanged atomic formulas that were set to true, we obtain a system of
  equalities and inequalities, whose solutions would satisfy $\psi$.
  Formally, this system is comprised of matrices $\bA \in\Z^{\ell\times m}, \bC \in\Z^{k\times m}$
  and vectors $\bb \in \Z^\ell, \bd \in \Z^k$ with entries encoded in binary, such that
  $x \in \N^m$ is a solution if and only if $\bA\bx < \bb$ and $\bC\bx = \bd$.
  In fact unary encodings would suffice for our definition of Presburger, since an
  entry of e.g.\ $3$ would have come from a term of the form $y + y + y$ for a variable
  $y$, meaning all entries are polynomial in the size of $\psi$. However, we do not require
  unary encodings and can also work with binary ones.
  
  Now we only need to check that there is a constrained run
  $(q_0,0,0)\autsteps(q_1,x_1,y_1)\autsteps\cdots\autsteps(q_m,x_m,y_m)$, whose
  counter values indeed satisfy these equalities and inequalities.
  This is the case if $\bA\bz < \bb$ and $\bC\bz = \bd$, where $\bz=(x_1,y_1,\ldots,x_m,y_m)$.
  If such a constrained run  exists, then for each $i\in[1,m]$, there is an LPS for the part
  $(q_{i-1},x_{i-1},y_{i-1})\autsteps(q_{i},x_i,y_i)$ such that said run conforms
  to each of these LPSs. By imposing (a)~the linear inequalities of
  \cite{DBLP:journals/jacm/BlondinEFGHLMT21}, which make sure that all counters
  stay non-negative, and (b)~our linear inequalities $\bA\bz < \bb$ and equalities
  $\bC\bx = \bd$, we obtain a new (poynomial-size) system of linear inequalities
  over the exponents in the LPSs.
  
  By \cite{vonzurGathenSieveking1978} systems like these have minimal solutions
  with at most exponential entries, yielding an overall run with at most
  exponential counter values. Binary encoding then means that these solutions
  only take up polynomial space. More specifically, this implies that we can
  simply guess configurations $(q_1,x_1,y_1)$ to $(q_m,x_m,y_m)$ of the
  constrained run in $\PSPACE$, and then check that equalities and inequalities
  of our system hold for them, i.e.\ that they are actual solutions to the
  system. This concludes the description of our decision procedure.
  
  As a final remark, note that \cite{vonzurGathenSieveking1978}
  assumes inequalities of the form $t_1 \leq t_2$ rather than $t_1 < t_2$.
  However, since we seek solutions in $\N^m$, we can simply express $t_1 < t_2$
  as $t_1 + 1 \leq t_2$ to circumvent this issue. 
\end{proof}

\section{Regular Separability vs.\ Intersection}\label{appendix-intersection}
In this section we prove the second part of \cref{Theorem:CounterExamples}.
To this end we present a class of WSTS such that, for their $\omega$-languages, intersection is decidable whereas regular separability is not. 
A $d$-dimensional \emph{reset B\"{u}chi VASS} over alphabet $\Sigma$ is a tuple $\cV=(Q,q_0,T,F)$. 
The only difference to B\"uchi VASS is in the finite set of transitions which, besides adding a vector, may reset a counter,  $T\subseteq
Q\times(\Z^d\cup \{\reset_1,\ldots,\reset_d\})\times\Sigma^*\times Q$. 
The configurations are defined like for B\"uchi VASS, but the transition relation has to be adapted.  
We have $(q,\bmm) \xrightarrow{w} (q',\bmm')$ if there is a transition
$(q,x,w,q')$ such that either
\begin{myenum}
  \myitem $x\in\Z^d$ and $\bmm'=\bmm+x$ or
  \myitem $x=\reset_i$ for some $i\in[1,d]$ and $\bmm'(j)=\bmm(j)$ for $j\in[1,d]\setminus\{i\}$ and $\bmm'(i)=0$.
\end{myenum}
Acceptance is defined as before, and so is the language (of infinite words) $L(\cV)$. 

For general B\"{u}chi reset VASS, emptiness and intersection are undecidable~\cite[Theorem~10]{Mayr2003}. 
We consider a slight restriction of the model that makes the problems decidable.
A B\"{u}chi reset VASS is \emph{weak} if there is no path from a final state to a reset transition. In particular, an
accepting run can only perform finitely many resets. Note that the usual
product construction of $\cV_1$ and $\cV_2$ to yield a B\"{u}chi reset VASS
for $L(\cV_1)\cap L(\cV_2)$ preserves weakness. 
\begin{theorem}
  For weak B\"{u}chi reset VASS, emptiness (hence intersection) is decidable.
\end{theorem}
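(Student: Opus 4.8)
The plan is to use weakness to split every accepting run into a reset-using prefix followed by a reset-free accepting suffix, and thereby reduce emptiness to the combination of two problems that are already known to be decidable: coverability in reset VASS, and emptiness of ordinary B\"{u}chi VASS. Write $\cV'$ for the B\"{u}chi VASS obtained from $\cV$ by deleting all reset transitions (keeping the same final set $F$). Weakness says that no final state can reach, in the control graph, the source of a reset transition, so on any accepting run every reset happens strictly before the first visit to a final state. From this I would first derive the characterization: $L(\cV)\neq\emptyset$ if and only if there are a state $q$ and a valuation $\bmm\in\N^d$ such that (a)~$(q,\bmm)$ is reachable from $(q_0,\bzero)$ in $\cV$, and (b)~$\cV'$, started in configuration $(q,\bmm)$, has an accepting run. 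For ``$\Leftarrow$'' one concatenates the prefix witnessing (a) with the reset-free accepting run from (b), which is a legal accepting run of $\cV$; for ``$\Rightarrow$'' one takes $(q,\bmm)$ to be the configuration reached right after the last reset of a fixed accepting run.

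Next, for each state $q$ let $U_q\subseteq\N^d$ be the set of valuations satisfying (b). Since additional tokens never disable a transition and a self-covering run stays self-covering, $U_q$ is upward closed and hence has a finite basis $\min(U_q)$ by Dickson's lemma. Using upward-closedness, ``(a) and (b) for some $\bmm$'' is equivalent to: there are $q$ and $\bmm_0\in\min(U_q)$ such that $(q,\bmm_0)$ is \emph{coverable} from $(q_0,\bzero)$ in $\cV$. Coverability in reset VASS is decidable, because reset VASS are well-structured transition systems (the reset operation is monotone) with an effective predecessor basis, so the backward coverability algorithm terminates. It therefore only remains to compute the finite sets $\min(U_q)$.

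For that I would invoke the Valk--Jantzen lemma: a basis of an upward-closed set $U_q$ is computable as soon as one can decide, for every $\bw\in(\N\cup\{\omega\})^d$, whether $U_q$ contains some $\bmm\leq\bw$ (with the $\omega$-coordinates unconstrained). Membership $\bmm\in U_q$ is itself decidable, being an instance of B\"{u}chi VASS emptiness~\cite{Habermehl97} (move the initial state to $q$ and prepend a transition adding $\bmm$). For the parametric query, the finitely many coordinates $i$ with $\bw_i\in\N$ leave only finitely many choices for $\bmm_i$; for the coordinates with $\bw_i=\omega$ one prepends to $\cV'$ a fresh initial state carrying, for each such $i$, a self-loop with effect $\be_i$ and empty label, together with a single $\varepsilon$-transition into $q$ adding the chosen finite values. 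The resulting B\"{u}chi VASS is nonempty exactly when $U_q$ contains some $\bmm\leq\bw$, and that is decidable. Assembling these ingredients yields the decision procedure for emptiness; the ``hence intersection'' part follows since the usual synchronous product of two weak B\"{u}chi reset VASS is again weak.

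The step I expect to be the main obstacle is exactly the handling of the \emph{parametric} initial configuration of $\cV'$. One cannot simply compute the set of configurations coverable in $\cV$ and the set $\bigcup_q\{q\}\times U_q$ and intersect them, since coverability sets of reset VASS are not computable in general (boundedness for reset nets is undecidable); the Valk--Jantzen lemma is precisely what allows one to query coverability of $\cV$ only on the \emph{finitely many} minimal good configurations, thus breaking the apparent circularity.
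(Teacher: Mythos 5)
Your proposal is correct and follows essentially the same route as the paper: delete the reset transitions, observe that the configurations from which the reset-free B\"uchi VASS has an accepting run form an upward-closed set with a computable finite basis, and reduce emptiness of the weak B\"uchi reset VASS to covering one of the finitely many minimal elements, which is decidable since reset VASS coverability is decidable (and the product preserves weakness for the intersection claim). The only cosmetic difference is that you compute the basis via the Valk--Jantzen lemma, implementing the required query with a pumping gadget for the $\omega$-coordinates, whereas the paper spells out an equivalent saturation procedure based on deciding B\"uchi VASS emptiness from ideal-represented downward-closed initial sets --- the same gadget in disguise.
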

Here, emptiness can be decided using standard techniques. We order the
configurations $Q\times\N^d$ in the usual way: We have $(q,\bmm)\le (q',\bmm')$
if $q=q'$ and $\bmm\le\bmm'$. First, one observes that for any
B\"{u}chi VASS $\cV$, the set $U(\cV)\subseteq Q\times\N^d$ of all
configurations $(q,\bmm)$ from which an infinite accepting run can start, is
upward closed. Moreover, using a saturation procedure, we can effectively
compute the finitely many minimal elements
$(q_1,\bmm_1),\ldots,(q_\ell,\bmm_\ell)$ of $U(\cV)$.
The details can be found in \cref{compute-min-conf} at the end of this section.
Then, for a weak
B\"{u}chi reset VASS $\cV$, we do the following. We construct the B\"{u}chi
VASS $\cV'$, which is obtained from $\cV$ by deleting all reset transitions.
Now $L(\cV)$ is non-empty if and only if $\cV$, as a reset VASS, can cover any
of the configurations $(q_1,\bmm_1),\ldots,(q_\ell,\bmm_\ell)$ of $U(\cV')$.
Whether the latter is the case can be decided because coverability is decidable
in reset
VASS~\cite{DBLP:conf/icalp/DufourdFS98,DBLP:journals/tcs/FinkelS01,DBLP:journals/iandc/AbdullaCJT00}.

\begin{theorem}\label{regsep-undecidable-wbrvass}
  For weak B. reset VASS over $\Sigma_1$, regular separability from $D_1$ is undecidable.
\end{theorem}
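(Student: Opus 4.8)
The plan is to reduce from the (place-)boundedness problem for reset VASS, which is undecidable~\cite{DBLP:conf/icalp/DufourdFS98}: given a reset VASS $N$ (a reset Petri net, i.e.\ a reset VASS with a single control state, already suffices), it is undecidable whether the first counter of $N$ is bounded over all reachable configurations. From such an $N=(Q,q_0,T,F)$ (the set $F$ is irrelevant) I would build a weak Büchi reset VASS $\cV$ over $\Sigma_1$ whose state set is $Q$ together with two fresh states $q_1,q_2$, with $q_1$ the only final state. First, $\cV$ contains all transitions of $N$ with every label rewritten to $\varepsilon$ (counter updates, including resets, unchanged), and no state of $Q$ is final. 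Second, from every $q\in Q$ there is an $\varepsilon$-labelled transition with zero effect into $q_1$. Third, on $\{q_1,q_2\}$ we install a copy of the ``$q_1/q_2$-fragment'' of the Büchi VASS $\cV_S$ from \cref{Figure:examples}(left), acting on counter $1$: $q_1\xrightarrow{\bzero\mid\varepsilon}q_2$, $q_2\xrightarrow{\bzero\mid\bar a_1}q_1$, $q_2\xrightarrow{-\be_1\mid a_1}q_2$, $q_2\xrightarrow{\be_1\mid\bar a_1}q_2$ (with $\be_1$ the first unit vector of $\cV$). Since all reset transitions lie in the $N$-part and every path reaching $q_1$ or $q_2$ stays there, no final state lies before a reset, so $\cV$ is weak; and since each cycle through $q_1$ reads at least one letter, every accepting run is labelled by an infinite word.

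Next I would pin down $L(\cV)$. An accepting run must eventually enter $q_1$ and then remain in $\{q_1,q_2\}$; hence $L(\cV)$ is exactly the set of words read by the $q_1/q_2$-fragment when started in $q_1$ with counter value $c=\bmm(1)$ for some reachable configuration $(q,\bmm)$ of $N$ (the $N$-part contributes only $\varepsilon$). Two facts come straight from arguments already in the paper. (a) $L(\cV)\cap D_1=\emptyset$: exactly as in the proof that $S\cap D_1=\emptyset$ (which uses only the $q_1/q_2$-fragment), between consecutive visits to $q_1$ either counter $1$ strictly decreases or the infix has negative $a_1$-balance, so on any accepting run the balance tends to $-\infty$. (b) For the canonical word $w_n:=(a_1^n\bar a_1^{n+1})^\omega$ we have $w_n\in L(\cV)$ iff there is a reachable $(q,\bmm)$ of $N$ with $\bmm(1)\ge n$: necessity since the cycle $q_1\to q_2\to q_1$ must begin by reading $a_1^n$ from counter value $c$, forcing $c\ge n$; sufficiency since from counter $\ge n$ one can loop $q_1\xrightarrow{\varepsilon}q_2$, read $a_1^n\bar a_1^n$ in $q_2$, then $q_2\xrightarrow{\bar a_1}q_1$. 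Consequently $\{n:w_n\in L(\cV)\}$ is infinite iff counter $1$ of $N$ is unbounded.

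From this I would derive the two implications. If counter $1$ is unbounded, then $\{w_n:n\in\N\}\subseteq L(\cV)$, and the pumping argument used for $S$ versus $D_1$ applies: given a would-be regular separator $\cA$ with $N_0$ states, pick $n\ge N_0$ with $w_n\in L(\cV)$, take an accepting run of $\cA$ on $w_n$, and in each $a_1^n$-block insert extra copies of the cycle it traverses (whose length divides $N_0!$) to obtain an accepting run on $(a_1^{n+N_0!}\bar a_1^{n+1})^\omega\in D_1$ — contradicting disjointness from $D_1$; hence $\notregsep{L(\cV)}{D_1}$. Conversely, if counter $1$ of $N$ stays $\le B$ along every run, then in every accepting run of $\cV$ counter $1$ starts $\le B$ and never drops below $0$; a short bookkeeping (each read $a_1$ decrements counter $1$; each $\bar a_1$ either increments it or is the $q_2\to q_1$ letter) gives $\varphi_1(p)\le B$ for every prefix $p$ of every word of $L(\cV)$. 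Together with $L(\cV)\cap D_1=\emptyset$, every $w\in L(\cV)$ has a shortest prefix $v$ with $\varphi_1(v)<0$ and all prefixes of $v$ have $\varphi_1\le B$, so $L(\cV)\subseteq P_{1,B}$; since $P_{1,B}$ is regular and disjoint from $D_1$, we get $\regsep{L(\cV)}{D_1}$. Thus $\regsep{L(\cV)}{D_1}$ holds iff counter $1$ of $N$ is bounded, so regular separability from $D_1$ is undecidable for weak Büchi reset VASS.

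The main obstacle is engineering the interaction between the reset mechanism and the ``counter-growth'' behaviour of $\cV_S$: the $q_1/q_2$-fragment must read off \emph{exactly} the current value of a counter of $N$, so one must forbid any $q_0$-style build-up (otherwise $L(\cV)$ would encode mere coverability, which is decidable) and must enter $q_1$ with counter equal to — not merely bounded by — a reachable value. One must also make the uniform bound $B$ genuinely confine all prefix balances, which is exactly where it matters that the $q_2\to q_1$ transition reads $\bar a_1$ rather than $a_1$. Secondary points to verify are weakness and the infinite-word condition of $\cV$, and that boundedness (not reachability) is the right source problem — reaching counter value $0$ does not produce large $w_n$, so a reachability-based reduction does not plug into this construction.
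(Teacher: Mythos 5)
Your proof is correct, and while it uses the same reduction skeleton as the paper (undecidability of place boundedness for reset VASS, with the $q_1/q_2$-gadget of \cref{Figure:examples}(left) attached behind an $\varepsilon$-labelled copy of the reset VASS), the correctness argument is genuinely different and in fact simpler. The paper adds an auxiliary $(d{+}1)$-st counter (complementing the first counter) and an extra ``free prefix'' state precisely so that the language accepted from $(q_1,(k,\bu))$ is \emph{exactly} $S_{1,k}$; then boundedness gives $L\subseteq S_{1,k}$ and unboundedness gives $L=\bigcup_k S_{1,k}$, matching the basic-separator machinery of the main body. You drop both ingredients, so your language is neither a union of $S_{1,k}$'s (it contains words such as $(\bar a_1^m a_1^m\bar a_1)^\omega$ with unbounded infix balance) nor pumpable — but that is harmless, since you never invoke \cref{prefix-independent-limsep}: in the bounded case you separate with $P_{1,B}$ instead, using the clean invariant that from $(q_1,c)$ every prefix read in the gadget has letter balance at most $c$ (each $a_1$ costs a decrement, each $\bar a_1$ is either an increment or the $q_2\to q_1$ letter), combined with disjointness from $D_1$ to get a negative prefix; in the unbounded case you run exactly the pumping argument of \cref{Theorem:CounterExamples}(1) on the words $(a_1^n\bar a_1^{n+1})^\omega$, which is also what the paper appeals to. What the paper's extra counter buys is the tight identity with the $S_{1,k}$ separators; what your version buys is a same-dimension, smaller construction with a self-contained proof needing only $P_{1,B}$ and explicit pumping. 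Two minor remarks: the ``necessity'' half of your characterization of when $(a_1^n\bar a_1^{n+1})^\omega\in L(\cV)$ is not actually needed (only ``unbounded implies all these words are accepted'' and ``bounded implies $L(\cV)\subseteq P_{1,B}$'' enter the reduction), and your parenthetical claim that a single control state suffices for the source problem is unnecessary for the argument.
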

We reduce from the place boundedness problem for reset VASS.
A reset VASS is a B\"{u}chi reset VASS without input
words and without final states. 
For $k\in\N$, we say that a reset VASS $\cV$ is \emph{$k$-place bounded}
if for every reachable configurations $(q,\bmm)$, we have $\bmm(1)\le k$. 
Moreover, we call $\cV$ \emph{place bounded} if $\cV$ is $k$-place bounded for some $k\in\N$.
The \emph{place boundedness problem} then asks whether a given reset VASS is place bounded.
The place boundedness problem (more generally, the boundedness problem) for
reset VASS is known to be undecidablei~\cite[Theorem
8]{DBLP:conf/icalp/DufourdFS98} (for a simpler proof, see \cite[Theorem
18]{Mayr2003}).

\begin{figure}[t]
\begin{center}
\scalebox{0.9}{
\begin{tikzpicture}[initial text={}]
\node[state,initial] (q0) {$q_0$};
\node[state,right=1cm of q0] (q) {$q$};
\node[state,right=1cm of q] (q1) {$q_1$};
\node[state,right=1cm of q1,accepting] (q2) {$q_2$};
\node[state,right=3cm of q2] (q3) {$q_3$};
\path[->] 
(q) edge node[above] {$\varepsilon$} (q1)
(q1) edge [loop above] node {$\bzero|a_1$} (q1)
(q1) edge [loop below] node {$\bzero|\bar{a}_1$} (q1)
(q1) edge node[above] {$\varepsilon$} (q2)
(q2) edge[bend left=15] node[above] {$\bzero|\varepsilon$} (q3)
(q3) edge[bend left=15] node[below] {$\bzero|\bar{a}_1$} (q2)
(q3) edge[loop above] node {$-\be_1 + \be_{d+1}|a_1$} (q3)
(q3) edge[loop below] node {$\be_1 - \be_{d+1}|\bar{a}_1$} (q3)
;
\node (vp) at ($(q0.east)+(0.5cm,0)$) {$\cV_\varepsilon$};
\draw[dashed,rounded corners=6pt,color=black!70,thick] ($(q0.west)+(-0.2,-0.5)$) rectangle ( $(q.east)+(0.2,0.5)$ ); 
\end{tikzpicture}}
\end{center}
\vspace{-0.5cm}
\caption{Weak B\"{u}chi reset VASS $\cV'$ in the proof of \cref{regsep-undecidable-wbrvass}.}\label{figure:weak-buechi-reset-vass}
\end{figure}
Given a $d$-dim.\ reset VASS $\cV$, we build a $(d+1)$-dim.\ weak B\"{u}chi reset VASS $\cV'$ with
\begin{align}
  L(\cV')=\{ w\in S_{1,k} \mid k\in\N,~\text{$\cV$ can reach some $(q,\mathbf{m})\in Q\times\N^d$ with $\bmm(1)\ge k$}\}.\label{weak-buechi-reset-vass-language}
\end{align}
Before we describe $\cV'$, observe that $\regsep{L(\cV')}{D_1}$ iff
$\cV$ is place bounded. If $\cV$ is $k$-place bounded, then
$L(\cV')\subseteq S_{1,k}$ and thus $\regsep{L(\cV')}{D_1}$. On the
other hand, if $\cV$ is not place bounded, then
$L(\cV')=\bigcup_{k\in\N} S_{1,k}$. As for the B\"{u}chi VASS in
\cref{Figure:examples} (left), one can show $\notregsep{L(\cV')}{D_1}$.

The construction is depicted in \cref{figure:weak-buechi-reset-vass}.  
The dashed box contains
$\cV_\varepsilon$, which is obtained from $\cV$ by changing every
transition $(p,\bu,q)$ into $(p,(\bu,0),\varepsilon,q)$. In the figure, $q$
stands for arbitrary states of $\cV_\varepsilon$, meaning for every
state $q$ in $\cV_\varepsilon$, we have a transition
$(q,\bzero,\varepsilon,q_1)$.  Observe that in the states
$q_1,q_2,q_3$, $\cV'$ behaves exactly like the B\"{u}chi VASS in
\cref{Figure:examples}(left), except that the additional counter
ensures that for each infix the balance on letter $a_i$ is bounded by $k$
from configurations $(q_1, (k, \bu))$.
Thus the accepted language from $(q_1, (k, \bu))$ is exactly $S_{1,k}$.
This shows that $\cV'$ accepts the language
\eqref{weak-buechi-reset-vass-language}.

\begin{lemma}\label{compute-min-conf}
Let $\cV$ be a B\"uchi VASS. We can compute the set $U(\cV)$ of minimal configurations from which there is an infinite accepting run. 
\end{lemma}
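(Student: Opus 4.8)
The plan is to establish three facts in turn: $U(\cV)$ is upward closed; $U(\cV)$ equals the backward-reachability closure of a computable upward-closed set of ``productive'' configurations; and the closure of such a set is itself computable by a standard saturation. Throughout, an ``accepting run from $(q,\bmm)$'' is an infinite run starting in $(q,\bmm)$ that visits $F$ infinitely often, and configurations are ordered by $(q,\bmm)\le(q',\bmm')$ iff $q=q'$ and $\bmm\le\bmm'$. \emph{Upward-closedness} is immediate by monotonicity: if $\rho$ is an accepting run from $(q,\bmm)$ and $\bmm'\ge\bmm$, then executing the same transition sequence from $(q,\bmm')$ keeps every intermediate counter valuation pointwise above the one reached by $\rho$, so it stays non-negative and visits the same control states, hence is an accepting run from $(q,\bmm')$. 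I reuse this monotonicity argument below.

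\emph{Reduction to a productive set.} Call a configuration $(q_f,\bmm)$ with $q_f\in F$ \emph{productive} if there is a non-empty transition sequence $\chi$ with $(q_f,\bmm)\xrightarrow{\chi}(q_f,\bmm')$ and $\bmm'\ge\bmm$; let $C$ be the set of productive configurations. I claim $U(\cV)=\mathrm{pre}^*(C)$, the set of configurations from which some configuration in $C$ is reachable. For ``$\supseteq$'': from $(q_f,\bmm)\in C$ one iterates $\chi$ forever, which stays enabled by monotonicity since the valuation never decreases, visiting $q_f\in F$ each time; prepending any run into $(q_f,\bmm)$ yields an accepting run. For ``$\subseteq$'': an accepting run from $(q,\bmm)$ visits some $q_f\in F$ at infinitely many positions, with valuations $\bmm_1,\bmm_2,\dots$ in order of occurrence; since $\le$ is a well-quasi-order on $\N^d$, there are $i<j$ with $\bmm_i\le\bmm_j$, and the run exhibits $(q,\bmm)\autsteps(q_f,\bmm_i)$ followed by a non-empty cycle from $(q_f,\bmm_i)$ to $(q_f,\bmm_j)$, so $(q_f,\bmm_i)\in C$. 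Finally $C$ is upward closed, again by monotonicity: shifting the witnessing cycle $\chi$ up by $\bmm''-\bmm\ge\bzero$ shows $(q_f,\bmm'')\in C$ whenever $(q_f,\bmm)\in C$ and $\bmm''\ge\bmm$.

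\emph{Computing the two bases.} Given the finite basis of an upward-closed set $U\subseteq Q\times\N^d$, the basis of $\mathrm{pre}(U)$ is computable: for each basis element $\bmm$ of $U$ at a state $q'$ and each transition $(q,w,\be,q')$, the set of configurations at $q$ that in one step reach a configuration $\ge(q',\bmm)$ has minimal element $\max(\bzero,\bmm-\be)$. Iterating $V\mapsto V\cup\mathrm{pre}(V)$ from $V_0=C$ produces an ascending chain of upward-closed subsets of $Q\times\N^d$; since $(\N^d,\le)$ is a well-quasi-order there is no infinite strictly ascending chain, so the chain stabilizes at $\mathrm{pre}^*(C)=U(\cV)$, and stabilization is detected by comparing successive bases. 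Thus the whole construction reduces to computing the finite basis of $C$.

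\emph{The main obstacle: the basis of $C$.} Fix $q_f\in F$ and set $C_{q_f}=\{\bmm:(q_f,\bmm)\in C\}$; note that $\bmm\in C_{q_f}$ iff some cyclic transition sequence at $q_f$ with $\inteff\ge\bzero$ is enabled at $(q_f,\bmm)$. By the Valk--Jantzen procedure, computing the finite basis of the upward-closed set $C_{q_f}$ reduces to deciding, for each $\vec v\in(\N\cup\{\omega\})^d$, whether $C_{q_f}$ contains some $\bmm\le\vec v$. With $\Omega=\{i:\vec v(i)=\omega\}$, this is the question whether $\cV$ has a non-empty cyclic transition sequence $\chi$ at $q_f$ with $\inteff(\chi)\ge\bzero$ whose every prefix keeps coordinate $i$ at least $-\vec v(i)$ for all $i\notin\Omega$ (coordinates in $\Omega$ being unconstrained). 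This can be phrased as a reachability query in a VASS built from $\cV$: a preamble loads the fixed values $\vec v(i)$ for $i\notin\Omega$ and enters $q_f$; a copy of $\cV$ then simulates $\chi$, representing each coordinate $i\in\Omega$ by two monotone accumulators for its positive and negative contributions; on returning to $q_f$ after at least one step, a checking phase verifies that every coordinate of $\inteff(\chi)$ is $\ge 0$ (subtracting $\vec v(i)$ from coordinate $i$ for $i\notin\Omega$, and draining each negative accumulator against its positive counterpart for $i\in\Omega$); a final phase empties all counters. Reachability of the resulting all-zero configuration is equivalent to the Valk--Jantzen query, and VASS reachability is decidable. This step is the technical heart of the lemma, since it is exactly where the (decidable, but Ackermann-hard) reachability machinery for VASS enters; everything else is elementary well-quasi-order bookkeeping.
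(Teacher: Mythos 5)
Your proof is correct, but it takes a genuinely different route from the paper's. The paper does not go through a characterization of $U(\cV)$ via productive configurations at all: it first strengthens the known decidability of B\"uchi VASS emptiness to the question of whether an accepting run starts in a given \emph{downward-closed} set of configurations (represented by ideals, i.e.\ generalized configurations with $\omega$-entries), using an instrumentation gadget that loads counter values up to the chosen ideal; it then saturates \emph{from above}: starting from the full configuration space, it repeatedly finds an accepting run, reconstructs (by enumeration) a concrete initial configuration witnessing it, adds it to a set $S$, removes the upward closure of $S$ from the candidate set, and terminates by the wqo argument, at which point the upward closure of $S$ equals $U(\cV)$. You instead prove $U(\cV)=\mathrm{pre}^*(C)$ for the upward-closed set $C$ of self-covering final configurations, compute a basis of $C$ via the Valk--Jantzen criterion with each query reduced to VASS configuration reachability, and then run the standard backward-coverability saturation; all three steps are sound (the wqo argument for the inclusion $U(\cV)\subseteq\mathrm{pre}^*(C)$, the monotonicity arguments, and the termination of the backward iteration are exactly right). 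What the two approaches buy: the paper stays within B\"uchi-VASS emptiness (EXPSPACE per Habermehl) plus ideal bookkeeping and never needs the reachability machinery, whereas your route is more modular and uses only classical WSTS tools (Valk--Jantzen, backward coverability) but invokes full VASS reachability --- much heavier than necessary for this query, which could also be answered by self-covering/Karp--Miller-style arguments; since the lemma claims no complexity bound, this is harmless. One detail to tighten in your reachability gadget: the final emptying phase must \emph{not} be allowed to decrement the negative accumulators $N_i$ for $i\in\Omega$ (only the drain against $P_i$ may decrease them, and the final phase empties only the leftover positive accumulators and the coordinates outside $\Omega$); if the final phase could empty all counters literally, the check $\inteff(\chi)(i)\ge 0$ would be vacuous. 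With that restriction, reachability of the all-zero target configuration is indeed equivalent to the Valk--Jantzen query.
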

\begin{proof}
It is decidable whether a given B\"uchi VASS has an accepting run~\cite{Esparza94,Habermehl97}.
We strengthen this result to checking whether a given B\"uchi VASS $\cV$ has an accepting run starting in a downward-closed set of configurations. 
The downward-closed set is given as a finite union $I$ of ideals, each represented by a generalized configuration $(q, m)\in Q\times (\mathbb{N}\cup\{\omega\})^d$. 
The algorithm is as follows. 
We construct an instrumented B\"uchi VASS $\cV^I$ from $\cV$ and $I$ that starts in a gadget for $I$ from which it moves to $\cV$. 
This gadget selects one of the ideals, say $(q, m)$, and increments each counter $c$ to at most $m(c)$. 
Note that $m(c)$ may be $\omega$, in which case we may put an arbitrary value to this counter.
After this initial phase, $\cV^I$ moves to state $q$ of $\cV$. 
The states in the gadget are not accepting, so $\cV^I$ will eventually move to $\cV$ to obtain an infinite accepting run.  
To be precise, we have in $\cV$ an accepting run from a configuration in $I$ if and only if $\cV^I$ has an accepting run.

With this, we can saturate a set of markings $S$, initially $S=\emptyset$.
We repeatedly ask for an accepting run starting in a downward-closed set of configurations represented by a set of ideals $I$. 
Initially, we just ask for any run, $I=Q \times \{\omega^d\}$.
If such a run does not exist, we return $S$. 
If such a run exists, we can reconstruct a configuration $(q, m)\in I$, $m\in\mathbb{N}^d$,  with which $\cV^I$ moved from the gadget for $I$ to $\cV$. 
This can be done with an enumeration.  
We add $(q, m)$ to $S$ and refine the downward-closed set represented by $I$ by subtracting the upward-closure of the new $S$.   
The subtraction can be computed effectively and yields a new set of ideals with which we repeat the check of an accepting run.  
The process terminates: the set $S$ represents an upward-closed set of configurations, and every infinite sequence of such sets becomes stationary due to the wqo. 
In the moment when the set becomes stationary, we will no longer find an accepting run and return. 
\end{proof}

\label{endofdocument}
\newoutputstream{pagestotal}
\openoutputfile{main.pgt}{pagestotal}
\addtostream{pagestotal}{\getpagerefnumber{endofdocument}}
\closeoutputstream{pagestotal}
\end{document}